\definecolor{darkblue}{rgb}{0,0,0.4}
\definecolor{darkred}{rgb}{0.5,0,0}
\newcommand{\rephrase}[3]{\noindent\textbf{#1 #2}.~\emph{#3}}
\newcommand{\remove}[1]{}
\newtheorem{corollary}{Corollary}
\definecolor{blue}{rgb}{0.274,0.392,0.666}
\definecolor{red}{rgb}{0.627,0.117,0.156}
\definecolor{green}{rgb}{0,0.588,0.509}
\newcommand{\red}[1]{{\color{red}{#1\xspace}}}
\newcommand{\blue}[1]{{\color{blue}{#1\xspace}}}
\newtheorem{theorem}{Theorem}
\newtheorem{lemma}{Lemma}
\newtheorem{claim}{Claim}
\begin{document}

\newcommand{\cpp}{{\sc C-Planarity with Pipes}\xspace}
\newcommand{\cp}{{\sc C-Planarity}\xspace}
\newcommand{\calA}[1]{\ensuremath{{\cal A}^{#1}}}
\newcommand{\calB}[1]{\ensuremath{{\cal B}^{#1}}}
\newcommand{\calC}[1]{\ensuremath{{\cal C}^{#1}}}
\newcommand{\calD}[1]{\ensuremath{{\cal D}^{#1}}}
\newcommand{\calE}[1]{\ensuremath{{\cal E}^{#1}}}
\newcommand{\calF}[1]{\ensuremath{{\cal F}^{#1}}}
\newcommand{\calG}[1]{\ensuremath{{\cal G}^{#1}}}
\newcommand{\calH}[1]{\ensuremath{{\cal H}^{#1}}}
\newcommand{\calI}[1]{\ensuremath{{\cal I}^{#1}}}
\newcommand{\calL}[1]{\ensuremath{{\cal L}^{#1}}}
\newcommand{\calT}[1]{\ensuremath{{\cal T}^{#1}}}
\newcommand{\calO}[1]{\ensuremath{{\cal O}^{#1}}}
\newcommand{\calP}[1]{\ensuremath{{\cal P}^{#1}}}
\newcommand{\calQ}[1]{\ensuremath{{\cal Q}^{#1}}}
\newcommand{\calX}[1]{\ensuremath{{\cal X}^{#1}}}
\newcommand{\calY}[1]{\ensuremath{{\cal Y}^{#1}}}
\newcommand{\calZ}[1]{\ensuremath{{\cal Z}^{#1}}}

\newcommand{\Gr}[1]{\ensuremath{\red{G^{#1}_1}}\xspace}
\newcommand{\Gb}[1]{\ensuremath{\blue{G^{#1}_2}}\xspace}
\newcommand{\EmbR}[1]{\ensuremath{\red{\mathcal{E}^{#1}_1}}\xspace}
\newcommand{\EmbB}[1]{\ensuremath{\blue{\mathcal{E}^{#1}_2}}\xspace}
\newcommand{\GammaR}[1]{\ensuremath{\red{\Gamma^{#1}_1}}\xspace}
\newcommand{\GammaB}[1]{\ensuremath{\blue{\Gamma^{#1}_2}}\xspace}
\newcommand{\GExtr}[1]{\ensuremath{\red{G_1(V^{#1},E^{#1}_1)}}\xspace}
\newcommand{\GExtb}[1]{\ensuremath{\blue{G_2(V^{#1},E^{#1}_2)}}\xspace}
\newcommand{\Er}[1]{\ensuremath{\red{E^{#1}_1}}\xspace}
\newcommand{\Eb}[1]{\ensuremath{\blue{E^{#1}_2}}\xspace}
\newcommand{\Gint}[1]{\ensuremath{G^{#1}_\cap}\xspace}
\newcommand{\sefeinstance}[1]{$\langle \Gr{#1},\Gb{#1}\rangle$\xspace}
\newcommand{\sefeinstanceextended}[1]{$\langle \red{G^{#1}_1}=(V^{#1},\red{E^{#1}_1}),\blue{G^{#1}_2}=(V^{#1},\blue{E^{#1}_2})\rangle$\xspace}
\newcommand{\sefesolution}[1]{$\langle \red{\Gamma_1^{#1}},\blue{\Gamma_2^{#1}}\rangle$\xspace}

\newcommand{\skel}{sk}
\newcommand{\pert}{pert}
\newcommand{\pertinent}{pert}

\newcommand{\cgraph}[1]{\ensuremath{\calC{#1}(G^{#1},\calT{#1})}\xspace}

\newcommand{\iccp}{{\sc Inclusion-Constrained C-Planarity}\xspace}
\newcommand{\ctree}{{components tree}\xspace}
\newcommand{\nctree}{{neighbor-clusters tree}\xspace}
\newcommand{\ctrees}{{components trees}\xspace}
\newcommand{\nctrees}{{neighbor-clusters trees}\xspace}
\newcommand{\Nctrees}{{Neighbor-clusters trees}\xspace}
\newcommand{\Bnctree}{{Block-neighbor-clusters tree}\xspace}
\newcommand{\Nctree}{{Neighbor-clusters tree}\xspace}
\newcommand{\bnctree}{{block-neighbor-clusters tree}\xspace}
\newcommand{\bnctrees}{{block-neighbor-clusters trees}\xspace}
\newcommand{\enctree}{{pipe-neighbor-clusters tree}\xspace}
\newcommand{\enctrees}{{pipe-neighbor-clusters trees}\xspace}

\newcommand{\me}{{multi-edge}\xspace}
\newcommand{\se}{{single-edge}\xspace}

\newcommand{\ga}{{clusters-adjacency graph}\xspace}

\title{Clustered Planarity with Pipes%
\thanks{This article reports on work supported by the U.S.~Defense Advanced
	Research Projects Agency (DARPA) under agreement no.~AFRL
	FA8750-15-2-0092.
	The views expressed are those of the authors and do not reflect the
	official policy or position of the Department of Defense
	or the U.S.~Government.
	This work was partially supported by DFG grant Ka812/17-1 and by MIUR Project ``AMANDA'' 2012C4E3KT.}
}

\author{Patrizio Angelini
\thanks{Wilhelm-Schickard-Institut f\"ur Informatik, Universit\"at T\"ubingen, Germany, \href{mailto:angelini@informatik.uni-tuebingen.de}{\em angelini@informatik.uni-tuebingen.de}.
}
\and 
Giordano {Da Lozzo}
\thanks{Department of Engineering, Roma Tre University, Italy. 
\href{mailto:dalozzo@dia.uniroma3.it}{\em dalozzo@dia.uniroma3.it}.
}
}

\date{}
\maketitle

\begin{abstract} 
We study the version of the {\sc C-Planarity} problem in which edges connecting the same pair of clusters must be grouped into pipes, which generalizes the {\sc Strip Planarity} problem. We give algorithms to decide several families of instances for the two variants in which the order of the pipes around each cluster is given as part of the input or can be chosen by the algorithm.
\end{abstract}

\section{Introduction} \label{se:introduction}

Visualizing clustered graphs is a challenging task with several applications in the analysis of networks that exhibit a hierarchical structure.  
The most established criterion for a readable visualization of these graphs has been formalized in the notion of {\em c-planarity}, introduced by Feng, Cohen, and Eades~\cite{FengCE95} in 1995. Given a {\em clustered graph} $\cgraph{}$ ({\em c-graph}), that is, a graph $G$ equipped with a recursive clustering $\calT{}$ of its vertices, the \cp problem asks whether there exist a planar drawing of $G$ and a representation of each cluster as a topological disk enclosing all and only its vertices, such that no {\em ``unnecessary''} crossings occur between disks and edges, or between disks.
Ever since its introduction, this problem has been attracting a great deal of research. However, the question regarding its computational complexity withstood the attack of several powerful algorithmic tools, such as the Hanani-Tutte theorem~\cite{f-thttcg-14,s-ttphtpv-13}, the SPQR-tree machinery~\cite{cdfpp-ccccg-08}, and the Simultaneous PQ-ordering framework~\cite{br-npcpcep-14}. 

The clustering of a c-graph \cgraph{} is described by a rooted tree $\calT{}$ whose leaves are the vertices of $G$ and whose each internal node $\mu$ different from the root represents a \emph{cluster} containing all and only the leaves of the subtree of $\calT{}$ rooted at $\mu$. A c-graph is {\em flat} if $\calT{}$ has height $2$. The {\em \ga} $G_A$ of a flat c-graph is the graph obtained from the c-graph by contracting each cluster into a single vertex, and by removing multi-edges and loops.

Cortese {\em et al.}~\cite{cdpp-ecpg-09} introduced a variant of \cp for flat c-graphs, which we call {\sc C-Planarity with Embedded Pipes}. The input of this problem is a flat c-graph \cgraph{} together with a planar drawing of its \ga $G_A$, in which vertices of $G_A$ are represented by disks and edges of $G_A$ by pipes connecting the disks. The goal is then to produce a c-planar drawing of \cgraph{} in which each vertex of $G$ lies inside the disk representing the cluster it belongs to and each inter-cluster edge of $G$ is drawn inside the corresponding pipe. In~\cite{cdpp-ecpg-09} this problem is solved when the underlying graph $G$ is a cycle. Chang, Erickson, and Xu~\cite{weakly-15} observed that in this case the problem is equivalent to determining whether a closed walk of length $n$ in a simple plane graph is weakly simple, and improved the time complexity to $O(n\log{n})$.
The special case of the problem in which the \ga is a path while $G$ can be any planar graph, which is known by the name of {\sc Strip Planarity}, has also been studied. Polynomial-time algorithms for this problem have been presented when the underlying graph has a fixed planar embedding~\cite{strip-gd-13} and when it is a tree~\cite{f-thttcg-14}.
 
We remark that polynomial-time algorithms for the \cp problem are known when strong limitations on the number or on the arrangement of the components of the clusters are imposed, where a {\em component} of a cluster $\mu \in \cal T$ is a maximal connected subgraph induced by the vertices of $\mu$. In particular, \cp can be decided in linear time when each cluster contains one component~\cite{cdfpp-ccccg-08,FengCE95} (the c-graph is {\em c-connected}). However, even when each cluster contains at most two components, polynomial-time algorithms are known only when further restrictions are imposed on the c-graph~\cite{br-npcpcep-14,jjkl-cpecgtcc-08}.
The results we show in this paper are also based on imposing constraints on the number and combination of certain types of components.

A component of a cluster $\mu \in \mathcal{T}$ is {\em \me} if it is incident to at least two inter-cluster edges, otherwise it is {\em \se}.
Also, it is {\em passing} if it is adjacent to vertices belonging to at least two clusters in $\cal T$ different from $\mu$. Otherwise, it is adjacent to vertices of a unique cluster $\nu \in \mathcal{T}$ different from $\mu$; in this case, we say that it is {\em originating from $\mu$ to $\nu$}. For {\sc Strip Planarity} the originating components can be further distinguished into {\em source} and {\em sink} components, based on whether $\nu$ corresponds to the strip above of below the one of $\mu$.

{\bf Our contributions.}  We show that {\sc Strip Planarity} is polynomial-time solvable for instances with a unique source component (Section~\ref{se:strips}) and that {\sc C-Planarity with Embedded Pipes} is polynomial-time solvable for instances such that, for each cluster $\mu \in \cal T$ and for each edge $(\mu,\nu)$ in $G_A$, either 
cluster $\mu$ contains at most one originating \me component from $\mu$ to $\nu$,
or it contains at most two \me originating components from $\mu$ to $\nu$ and does not contain any passing component that is incident to $\nu$
(Section~\ref{cpp:fixed-embedding}). Finally, in Section~\ref{se:cpp-fpt} we introduce a generalization of {\sc C-Planarity with Embedded Pipes}, which we call \cpp. Given a c-graph \cgraph{}, the goal of this problem is to find a planar drawing of the \ga of \cgraph{} whose vertices and edges are represented by disks and pipes, respectively, that allows for a drawing of \cgraph{} that is a solution of {\sc C-Planarity with Embedded Pipes}. In other words, the goal is to find a c-planar drawing of \cgraph{} in which the inter-cluster edges are still required to be grouped into pipes, but the order of the pipes around each disk is not prescribed by the input. By introducing a new characterization of \cp, we give an FPT algorithm for \cpp that runs in $g(K,c) \cdot O(n^2)$ time, with $g(K,c) \in O(K^{c(K-2)})$, where $K$ is the maximum number of \me components in a cluster and $c$ is the number of clusters with at least two \me components. We remark that our results imply polynomial-time testing algorithms for all the three problems in the case in which each cluster contains at most two components.

\section{Preliminaries}\label{se:preliminaries}

For the standard definitions on planar graphs, planar drawings, planar embeddings, and connectivity we point the reader to~\cite{dett-gd-99}.
We use the term {\em rotation scheme} to denote the clockwise circular ordering of the edges incident to each vertex in a planar embedding, and refer to the containment relationships between vertices and cycles of the graph in the embedding as {\em relative positions}. Further, we say that a block of a $1$-connected graph is {\em trivial} if it consists of a single edge, otherwise it is {\em non-trivial}.

%

\smallskip
{\bf PQ-trees.} 
A \emph{PQ-tree} $T$ is an unrooted tree whose leaves are the elements of a ground set $A$. 
The internal nodes of $T$ are either {\em P-nodes} or {\em Q-nodes}. 
PQ-tree $T$ can be used to represent all and only the circular orderings ${\cal O}(T)$ on $A$ satisfying a given set of {\em consecutivity constraints} on $A$, each of which specifies that a subset of the elements of $A$ has to appear consecutively in all the sought circular orderings on $A$. 
The orderings in ${\cal O}(T)$ are all and only the circular orderings on the leaves of $T$ obtained by arbitrarily ordering the neighbours of each P-node and by arbitrarily selecting for each Q-node a given circular ordering on its neighbours or its reverse ordering. 
PQ-trees were originally introduced by Booth and Lueker~\cite{bt-tcopiggppqa-76} in a rooted version.

\smallskip
{\bf Connectivity.}
A {\em $k$-cut} of a graph is a set of at most $k$ vertices whose removal disconnects the graph. A connected graph is {\em biconnected} if it has no $1$-cut. The maximal biconnected components of a graph are its {\em blocks}.
Without loss of generality, in the following we assume that the \ga $G_A$ of \cgraph{} is connected and that, for every cluster $\mu \in \calT{}$ and for every component $c$ of $\mu$, it holds that:
\begin{inparaenum}[(i)]
	\item there exists at least an inter-cluster edge incident to $c$,
	\item every block of $c$ that is a leaf in the block-cut-vertex tree of $c$ contains at least a vertex $v$ such that $v$ is not a cut-vertex of $c$ and it is incident to at least an inter-cluster edge, and
	\item if there exists exactly one vertex in $c$ that is incident to inter-cluster edges, then $c$ consists of a single vertex.
\end{inparaenum}

\smallskip
{\bf Simultaneous Embedding with Fixed Edges.} 
Given planar graphs $\Gr{}=(V,\Er{})$ and $\Gb{}=(V,\Eb{})$, the {\sc SEFE} problem asks whether there exist planar drawings $\GammaR{}$ of $\Gr{}$ and $\GammaB{}$ of $\Gb{}$ such that (i) any vertex $v\in V$ is mapped to the same point in $\GammaR{}$ and $\GammaB{}$ and (ii) any edge $e \in \Er{} \cap \Eb{}$ is mapped to the same curve in $\GammaR{}$ and $\GammaB{}$. 
We call {\em common graph} and {\em union graph} the graphs $\Gint{} = (V, \Er{} \cap \Eb{})$ and $G_\cup = (V, \Er{} \cup \Eb{})$, respectively.
See~\cite{bkr-sepg-12} for a survey.
%


We state a theorem on SEFE that will be fundamental for our results. Even though this theorem has never been explicitly stated in the literature, it can be easily deduced from known results~\cite{br-spacep-13}, as discussed in the following.

\begin{theorem}\label{th:sefe}
	Let $\Gr{}=(V,\Er{})$ and $\Gb{}=(V,\Eb{})$ be two planar graphs whose common graph ${}\Gint{} = (V,\Er{} \cap \Eb{})$ is a forest and whose cut-vertices are incident to at most two non-trivial blocks. It can be tested in $O(|V|^2)$ time whether \sefeinstance{} admits a SEFE.
\end{theorem}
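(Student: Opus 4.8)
The plan is to cast the problem, via the simultaneous PQ-ordering framework of Bl\"asius and Rutter~\cite{br-spacep-13}, as an instance of \textsc{Simultaneous PQ-Ordering} lying in a class that can be solved in quadratic time. Recall the relevant reduction from that work: a SEFE of \sefeinstance{} is fully described by choosing, for every vertex $v \in V$, a rotation of $\Gr{}$ at $v$ and a rotation of $\Gb{}$ at $v$ that induce the same cyclic order on the common edges incident to $v$, together with a consistent choice of the relative positions of the connected components of $\Gint{}$ inside each of $\Gr{}$ and $\Gb{}$. For $i \in \{1,2\}$, the admissible rotations of $G_i$ at $v$ are exactly the orderings of a single PQ-tree, the \emph{embedding tree} of $v$ in $G_i$; the agreement on the common edges at $v$ is enforced by identifying the corresponding leaves of the two embedding trees with the leaves of an auxiliary PQ-tree that also carries the planarity constraints of $\Gint{}$ at $v$; and the resulting system of PQ-trees and leaf identifications forms a DAG. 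When $\Gint{}$ is connected this DAG is \emph{$2$-fixed}, and \textsc{Simultaneous PQ-Ordering} on $2$-fixed instances is solvable in $O(|V|^2)$ time~\cite{br-spacep-13}; hence SEFE with connected common graph is testable in $O(|V|^2)$ time.

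The first step is therefore to reduce to the case in which $\Gint{}$ is connected. Since $\Gint{}$ is a forest with components $T_1, \dots, T_k$, I would add a set $S$ of $k-1$ edges to \emph{both} $\Gr{}$ and $\Gb{}$, so that the edges in $S$ become common edges and $\Gint{} \cup S$ is a spanning tree of $V$, and argue that $\langle \Gr{} \cup S, \Gb{} \cup S\rangle$ admits a SEFE if and only if \sefeinstance{} does. One direction is immediate (delete the edges of $S$). For the other direction, this is exactly where the hypothesis that every cut-vertex of $\Gr{}$ and of $\Gb{}$ is incident to at most two non-trivial blocks is used: this bound forces the embedding trees, and with them the relative positions that $T_1,\dots,T_k$ can attain in an embedding of $\Gr{}$ or of $\Gb{}$, to form a very restricted family, so that whenever two components are forced to be ``co-facial'' in every SEFE one may pick a canonical vertex in each and join them by an edge that is routable, inside that face, in both graphs at once, without losing or creating SEFE solutions. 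Iterating this until $\Gint{}$ is connected produces the required $S$, which can be computed in $O(|V|^2)$ time from the block-cut trees and the embedding trees of $\Gr{}$ and $\Gb{}$.

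Finally, on the equivalent instance with connected common graph I would build the $2$-fixed \textsc{Simultaneous PQ-Ordering} instance and solve it with the algorithm of~\cite{br-spacep-13} in $O(|V|^2)$ time; together with the $O(|V|^2)$ preprocessing this yields the stated bound. I expect the reduction to the connected case to be the main obstacle: for unrestricted common graphs the disconnectivity of $\Gint{}$ --- equivalently, the freedom in the relative positions of its components --- is precisely what makes SEFE computationally hard, so the augmentation argument must genuinely exploit the ``at most two non-trivial blocks per cut-vertex'' assumption to guarantee both that a canonical connecting edge always exists in the face where two components must meet and that inserting it is simultaneously safe in $\Gr{}$ and in $\Gb{}$.
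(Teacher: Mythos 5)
There is a genuine gap, and it sits exactly where you predict: in the reduction to the connected case. Your plan is to add common edges joining the components $T_1,\dots,T_k$ of \Gint{} and to argue that the augmentation is safe because the ``at most two non-trivial blocks per cut-vertex'' hypothesis forces the components to be essentially co-facial. But that hypothesis constrains the block structure of \Gr{} and \Gb{} (and hence the PQ-structure of the admissible \emph{rotations}); it says nothing that tames the \emph{relative positions} of the components of \Gint{}. In general those components are not forced to be co-facial, the number of consistent placements can be exponential, and deciding whether a consistent placement exists is essentially the whole difficulty of SEFE with disconnected common graph. The sketch ``pick a canonical vertex in each and join them by an edge that is routable in both graphs at once'' is therefore not an argument but a restatement of the problem, and I do not see how to complete it from the stated hypotheses.

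The paper takes a much simpler route that avoids connecting \Gint{} altogether. By the J\"unger--Schulz characterization, \sefeinstance{} admits a SEFE if and only if \Gr{} and \Gb{} admit embeddings inducing the same \emph{combinatorial embedding} on \Gint{}; a combinatorial embedding consists of rotation schemes plus relative positions of connected components, but a forest has no cycles, so every component lies in the unique face of every other and the relative positions carry no information. Hence for a forest common graph, agreeing on the combinatorial embedding of \Gint{} is the same as agreeing on its rotation scheme, and the problem collapses to the consistent-rotation-scheme question that Bl\"asius and Rutter solve in $O(|V|^2)$ time for biconnected graphs and, as they show, for graphs whose cut-vertices are incident to at most two non-trivial blocks --- with no connectivity requirement on the common graph. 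That is the (only) place where the block hypothesis is used. If you want to salvage your approach you would have to either prove the co-faciality/augmentation claim rigorously or, better, notice that for a forest the relative-position freedom you are worried about is vacuous, at which point the augmentation step becomes unnecessary.
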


In particular, the theorem descends from a straightforward extension of the algorithm~\cite{br-spacep-13} to test SEFE of two biconnected planar graphs whose common graph is connected, to the case in which the common graph is a forest.

First, consider the following characterization of SEFE for two planar graphs.

\begin{theorem}[J{\"{u}}nger and Schulz~\footnote{M. J{\"{u}}nger and M. Schulz. Intersection graphs in simultaneous embedding with fixed edges. {\em J. Graph Algorithms Appl.}, 13(2):205–218, 2009.}, Theorem~4]\label{th:inutile}
	Two planar graphs $\GExtr{}$ and $\GExtb{}$ with common graph $\Gint{} = (V,\Er{} \cap \Eb{})$ have a SEFE if and only if they admit combinatorial embeddings inducing the same combinatorial embedding on $\Gint{}$.
\end{theorem}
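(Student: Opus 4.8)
The plan is to prove both implications working on the sphere, where a \emph{combinatorial embedding} is understood as the pair consisting of the rotation scheme and the relative positions of the connected components. Passing between sphere and plane drawings at the end is harmless: any point of the sphere avoiding the union $G_\cup = (V,\Er{} \cup \Eb{})$ can be taken as the point at infinity to turn sphere drawings into plane ones, without altering the positions of the vertices or the curves representing the common edges. I would phrase the whole argument in terms of such sphere drawings.

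For the forward direction, suppose \sefesolution{} is a SEFE. Each drawing \GammaR{} induces a combinatorial embedding \EmbR{} of \Gr{}, and each \GammaB{} induces \EmbB{} of \Gb{}. The essential observation is that the restriction of \GammaR{} to the common graph \Gint{} coincides, as a drawing, with the restriction of \GammaB{} to \Gint{}: by SEFE condition (i) every vertex occupies the same point in both drawings, and by condition (ii) every edge of $\Er{} \cap \Eb{}$ is the same curve in both. Since restricting \EmbR{} to \Gint{} just reads off the cyclic order of the common edges and their containment relationships in \GammaR{}, and likewise for \EmbB{} in \GammaB{}, two identical drawings of \Gint{} force \EmbR{} and \EmbB{} to induce the very same combinatorial embedding on \Gint{}, as required.

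For the converse the crux is a realization-and-extension argument. Assume \EmbR{} and \EmbB{} induce a common embedding $\mathcal{E}_\cap$ on \Gint{}. I would first fix one concrete drawing $\Gamma_\cap$ of \Gint{} realizing $\mathcal{E}_\cap$. The key step is then to extend $\Gamma_\cap$ to a drawing \GammaR{} of \Gr{} realizing \EmbR{}, and independently to a drawing \GammaB{} of \Gb{} realizing \EmbB{}, so that both drawings restrict to \emph{exactly} $\Gamma_\cap$ on the common graph. Granting this, \GammaR{} and \GammaB{} place every vertex at its position in $\Gamma_\cap$ and draw every common edge as its curve in $\Gamma_\cap$, so conditions (i) and (ii) hold and \sefesolution{} is a SEFE.

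To justify the extension I would invoke the standard fact that a planar graph equipped with a combinatorial embedding admits a drawing realizing it, and that any two drawings realizing the same combinatorial embedding are related by an orientation-preserving self-homeomorphism of the sphere. Concretely, draw \Gr{} as some $\Delta_1$ realizing \EmbR{}; its restriction $\Delta_1|_{\Gint{}}$ realizes $\mathcal{E}_\cap$, exactly as $\Gamma_\cap$ does, so there is a homeomorphism $\phi$ of the sphere carrying $\Delta_1|_{\Gint{}}$ onto $\Gamma_\cap$. Since homeomorphisms preserve rotation schemes and relative positions, $\phi(\Delta_1)$ still realizes \EmbR{} and restricts to $\Gamma_\cap$; I set \GammaR{} $= \phi(\Delta_1)$, and argue symmetrically for \GammaB{}. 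The main obstacle lies precisely here: making the uniqueness-up-to-homeomorphism statement rigorous for possibly disconnected graphs, where the relative positions of the components carried by the combinatorial embedding must be matched in addition to the rotation scheme. This nesting data is exactly what guarantees that the homeomorphism taking $\Delta_1|_{\Gint{}}$ to $\Gamma_\cap$ exists even when \Gint{} is a forest whose several components are nested inside the faces of \Gr{}.
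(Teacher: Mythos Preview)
The paper does not prove this statement: it is quoted as Theorem~4 of J\"unger and Schulz and invoked as a black box to derive the subsequent corollary about forests. There is therefore no in-paper proof to compare against.

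Your argument is the standard one and is correct. The forward direction is immediate, as you note. For the converse, the realize-then-transport strategy via a sphere homeomorphism is exactly the right idea. The point you flag as the main obstacle---that two sphere drawings of a possibly disconnected planar graph realizing the same rotation scheme and the same relative positions of components are related by an orientation-preserving self-homeomorphism of the sphere---is a classical fact. One clean justification: the rotation scheme together with the nesting data determines the face set of the drawing (each face being a disk with finitely many holes, one per additional boundary component), and two drawings with identical combinatorial face structure admit a homeomorphism built face by face, matching corresponding boundary walks. With that filled in, your proof is complete.
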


Recall that a combinatorial embedding of a planar graph $G(V,E)$ is defined by (i) the rotation scheme of each vertex in $V$ and by (ii) the relative positions of the connected components of $G$. Hence, if $G$ is acyclic, then a combinatorial embedding of $G$ is entirely defined by (i). This fact and Theorem~\ref{th:inutile} imply the following.

\begin{corollary}\label{co:forest}
	Two planar graphs $\GExtr{}$ and $\GExtb{}$ whose common graph $\Gint{} = (V,\Er{} \cap \Eb{})$ is a forest have a SEFE if and only if they admit combinatorial embeddings inducing the same rotation scheme on $\Gint{}$.
\end{corollary}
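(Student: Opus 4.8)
The plan is to obtain the statement as an immediate consequence of Theorem~\ref{th:inutile}, after observing that when the common graph is a forest its combinatorial embedding is captured entirely by its rotation scheme.

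First I would recall the definition of a combinatorial embedding of a planar graph $G(V,E)$ as the pair consisting of (i) the rotation scheme at each vertex of $V$ and (ii) the relative positions of the connected components of $G$; then I would argue that for $\Gint{}$, being a forest, datum (ii) carries no information. Indeed, if $\Gint{}$ has $|V|$ vertices, $|V| - C$ edges, and $C$ connected components, then by Euler's formula every planar embedding of $\Gint{}$ has exactly $F = 1 + C - |V| + (|V| - C) = 1$ face; hence all connected components of $\Gint{}$ lie in the unique face, and there is no freedom in choosing their relative positions. Consequently a combinatorial embedding of $\Gint{}$ is completely determined by its rotation scheme — this is precisely the fact alluded to immediately before the statement.

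From this it follows that combinatorial embeddings of $\GExtr{}$ and $\GExtb{}$ induce the same combinatorial embedding on $\Gint{}$ if and only if they induce the same rotation scheme on $\Gint{}$. Substituting this equivalence into both directions of the characterization in Theorem~\ref{th:inutile} yields the corollary. I do not expect a genuine obstacle: the only point requiring care is the justification that the relative-position part of a combinatorial embedding is vacuous for a forest, which the single-face computation above settles, and everything else is a direct rewriting of Theorem~\ref{th:inutile}.
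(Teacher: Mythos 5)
Your proposal is correct and follows the same route as the paper: both reduce the corollary to Theorem~\ref{th:inutile} by observing that for a forest the relative-position component of a combinatorial embedding is vacuous, so the embedding of $\Gint{}$ is determined by its rotation scheme alone. The only cosmetic difference is that you justify this via an Euler-formula count showing a forest embedding has a single face, whereas the paper appeals directly to its definition of relative positions as vertex--cycle containments, which are trivially absent in an acyclic graph.
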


Bl\"asius and Rutter proved~\cite{br-spacep-13} that it can be tested in quadratic time whether two biconnected planar graphs $\GExtr{}$ and $\GExtb{}$ admit combinatorial embeddings $\EmbR{}$ and $\EmbB{}$, respectively, such that the rotation scheme of each vertex in $V$ is the same in $\EmbR{}$ and in $\EmbB{}$, when restricted to the common edges. They also proved that such a result extends to the case in which $\Gr{}$ and $\Gb{}$ have cut-vertices incident to at most two non-trivial blocks.
Hence, Theorem~\ref{th:sefe} directly follows from Corollary~\ref{co:forest} and from the results in~\cite{br-spacep-13}.


\section{Single-Source Strip Planarity}\label{se:strips}

In this section we prove a result of the same flavour as that by Bertolazzi {\em et al.}~\cite{bdmt-ouptssd-98} for the upward planarity testing of single-source digraphs. Namely, we show that instances of {\sc Strip Planarity} 
with
a unique source component can be tested efficiently.
The {\sc Strip Planarity} problem takes in input a pair $\langle G=(V,E), \gamma \rangle$, where $G=(V,E)$ is a planar graph and $\gamma: V \rightarrow \{1,\dots,k\}$ is a mapping of each vertex to one of $k$ unbounded horizontal strips of the plane such that, for any two adjacent vertices $u,v \in V$, it holds that $|\gamma(u)  - \gamma(v)| \leq 1 $. The goal is to find a planar drawing of $G$ in which vertices lie inside the corresponding strips and edges cross the boundary of any strip at most once.
We observe that {\sc Strip Planarity} is equivalent to {\sc C-Planarity with Embedded Pipes} when $G_A$ is a path~\cite{strip-gd-13}.

We start with an auxiliary lemma.
We say that an instance $\langle G, \gamma \rangle$ of {\sc Strip Planarity} on $k>1$ strips is {\em spined} if 
there exists a path $(v_1,\dots,v_k)$ in $G$ such that $\gamma(v_i)=i$, vertex $v_k$ is the unique vertex in the $k$-th strip, and each vertex $v_i$ with $i\neq 1$ induces a component in the $i$-th strip; see Fig.~\ref{fig:left-crossings}. We call path $(v_1,\dots,v_k)$ the {\em spine path} of $\langle G, \gamma \rangle$ and refer to edge $(v_i,v_{i+1})$ as the \emph{$i$-th edge} of such a path.

\begin{lemma}\label{le:spine}
	Any positive spined instance $\langle G, \gamma \rangle$ of {\sc Strip Planarity} admits a strip-planar drawing in which the intersection point between the first edge of the spine path of $\langle G, \gamma \rangle$ and the horizontal line separating the first and the second strip is the left-most intersection point between any inter-strip edge and such a line.
\end{lemma}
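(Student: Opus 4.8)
The plan is to start from an arbitrary strip-planar drawing $\Gamma$ of $\langle G,\gamma\rangle$, which exists since the instance is positive, and to transform it into a strip-planar drawing with the desired property. Throughout, write $\ell$ for the horizontal line separating the first two strips and $f_\infty$ for the outer (unbounded) face of the current drawing.

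The first step is a \emph{rerouting observation}: in any strip-planar drawing in which the edge $(v_1,v_2)$ is incident to $f_\infty$, we can redraw $(v_1,v_2)$ so that its crossing with $\ell$ becomes the left-most crossing of $\ell$ with an inter-strip edge. Indeed, since adjacent vertices lie in consecutive strips, no edge of $G$ crosses an entire strip; hence for every strip $s$ the region $f_\infty\cap s$ is connected and extends arbitrarily far to the left inside $s$. So, after deleting the curve representing $(v_1,v_2)$ (which only merges $f_\infty$ with a neighbouring face), we can redraw this edge as a curve that leaves $v_1$, runs inside the outer face and inside the first strip to the far left, crosses $\ell$ there to the left of every other inter-strip edge, and then runs inside the outer face and inside the second strip to $v_2$. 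Such a curve crosses $\ell$ exactly once, crosses no other strip boundary, and crosses no edge, so the resulting drawing is strip-planar and has the required property. It therefore suffices to transform $\Gamma$ into a strip-planar drawing in which $(v_1,v_2)$ is incident to $f_\infty$.

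To do this I would use the three defining properties of a spined instance. First, connected components of $G$ other than the component $C$ containing the spine path can be drawn entirely to the right of $C$ without violating strip planarity, so we may assume $G$ is connected. Second, since $v_k$ is the unique vertex of the $k$-th strip, $v_k$ is incident to $f_\infty$ and the part of the plane directly above $v_k$ lies in $f_\infty$; so we may add to $\Gamma$ an auxiliary curve $\rho$ from $v_k$ to infinity drawn inside $f_\infty$ and consecutive, in the rotation at $v_k$, with the spine edge $(v_{k-1},v_k)$, turning the spine path followed by $\rho$ into a simple curve from $v_1$ to infinity that crosses each line separating consecutive strips exactly once. The core of the argument is then to re-embed $\Gamma$, preserving strip planarity, so that the whole spine path $v_1,\dots,v_k$ together with $\rho$ appears along the boundary walk of $f_\infty$; in particular this makes $(v_1,v_2)$ incident to $f_\infty$, and after removing $\rho$ the rerouting observation applies. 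I would establish this by reverse induction on $i=k,k-1,\dots,2$: assuming that $v_i,\dots,v_k$ and the spine edges among them already appear consecutively on $\partial f_\infty$ (ending with $\rho$), I bring $v_{i-1}$ and the edge $(v_{i-1},v_i)$ onto $\partial f_\infty$ by suitably flipping the embedding around $v_i$, i.e.\ by reversing split components at the relevant separation pairs and flipping pendant subgraphs at cut vertices incident to $v_i$. The property that makes all such flips legal is precisely that $v_i$ induces a component of the $i$-th strip for $i\ge 2$: all edges incident to $v_i$ go to strips $i-1$ and $i+1$ only, while the already-exposed spine lives in strips $\ge i$ and the edge $(v_{i-1},v_i)$ descends to strip $i-1$, so rearranging the embedding around $v_i$ does not create an edge that crosses a strip boundary more than once and does not disturb the vertex-to-strip assignment.

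The step I expect to be the main obstacle is this last one: verifying that the sequence of flips can always be carried out and composed into a single strip-planar re-embedding, i.e.\ that the spine path of a positive spined instance can always be routed along the outer face. It is here that all three hypotheses — the existence of the spine path, the uniqueness of $v_k$ in the top strip, and the singleton-component condition on $v_2,\dots,v_{k-1}$ — must be combined, together with a careful analysis of how the outer face interacts with the strip structure, in the spirit of the connectivity considerations of Section~\ref{se:preliminaries}.
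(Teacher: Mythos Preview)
Your approach diverges from the paper's, and the rerouting observation contains a genuine gap. You assert that because no single edge of $G$ crosses an entire strip, the region $f_\infty\cap s$ is connected for every strip $s$; but a two-edge path through a vertex of strip $s$ can cross $s$ from top to bottom. For a concrete counterexample, take the $6$-cycle on vertices $a,c,e,f,d,b$ with $\gamma(a)=\gamma(b)=1$, $\gamma(c)=\gamma(d)=2$, $\gamma(e)=\gamma(f)=3$, drawn as a convex hexagon; then $f_\infty\cap s_2$ has two components, one on each side of the hexagon. Hence, even after you bring $(v_1,v_2)$ onto the boundary of $f_\infty$, you cannot in general route a curve from the far left of $\ell$ to $v_2$ while remaining simultaneously in $f_\infty$ and in strip~$2$, and your new drawing of $(v_1,v_2)$ could be forced to cross $\ell$ more than once. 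Your inductive ``flip'' step is also only sketched: flips are combinatorial modifications of an embedding, and you still owe an argument that each modified embedding admits a strip-planar realisation.

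The paper avoids both difficulties with a direct geometric trick. From an arbitrary strip-planar drawing it selects two auxiliary horizontal lines $l'$ and $l''$ inside strip~$1$, above every vertex of that strip; it then removes the portion of the drawing that lies to the left of the spine path and above $l'$, horizontally mirrors the portion that lay to the left of the spine and above $l''$, and pastes this mirrored copy to the right of the whole drawing. The severed inter-strip edges are reconnected by non-crossing arcs between $l'$ and $l''$, which is possible because the left-to-right order of their crossing points on $l'$ matches the right-to-left order of the mirrored crossing points on $l''$. Nothing now remains to the left of the spine above $l'$, so the first spine edge is automatically the left-most crossing of $\ell$; no outer-face connectivity claim and no combinatorial re-embedding are needed.
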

\begin{proof}
	Let $\Gamma$ be a strip-planar drawing of $\langle G, \gamma \rangle$; see Fig.~\ref{left-crossing-a}. We show how to construct a strip-planar drawing $\Gamma'$ of $\langle G, \gamma \rangle$ satisfying the requirements of the lemma.
	Consider two horizontal lines $l'$ and $l''$, with $l'$ below $l''$, that lie above any vertex in the first strip and below the horizontal line separating the first and the second strip in $\Gamma$, and that intersect any edge of $G$ at most once.
	Denote by $p_1,\dots,p_m$ (by $q_1,\dots,q_m$) the intersection points between $l'$ (between $l''$) and the edges of $G$ in the left-to-right order along $l'$ (along $l''$).
	Let $\mathcal{R}'$ ($\mathcal{R}''$) be the region delimited by $l'$ (by $l''$) and by the horizontal line separating the second last and the last strip, and lying to the left of the spine path $(v_1,\dots,v_k)$ of $\langle G, \gamma \rangle$ in $\Gamma$.
	
	We obtain $\Gamma'$ as follows; see Fig.~\ref{left-crossing-b}. Initialize $\Gamma'$ to $\Gamma$.
	Remove from $\Gamma'$ the drawing of the part of $G$ in the interior of $\mathcal{R}'$.
	Then, consider the drawing $\Gamma_{\mathcal{R}''}$ of $G$ in the interior of $\mathcal{R}''$ in $\Gamma$. Add to $\Gamma'$ a copy of drawing $\Gamma_{\mathcal{R}''}$ to the right of $\Gamma$, after a horizontal mirroring. Let $q'_i$ be the point of $\Gamma'$ corresponding to the mirrored and translated copy of point $q_i$.
	Finally, complete the drawing of the inter-strip edges crossing lines $l'$ and $l''$ as curves  between points $p_i$ and $q'_i$ in the interior of the first strip in $\Gamma$. The fact that such curves can be drawn in $\Gamma'$ without introducing any crossings and without crossing the horizontal line separating the first and the second strip is due to the fact that points $q'_1,\dots,q'_m$ appear in this right-to-left order along $l''$ and points $p_1,\dots,p_m$ appear in this left-to-right order along $l'$.
\begin{figure}[tb]
	\subfigure[]{\includegraphics[height=.4\columnwidth,page=1]{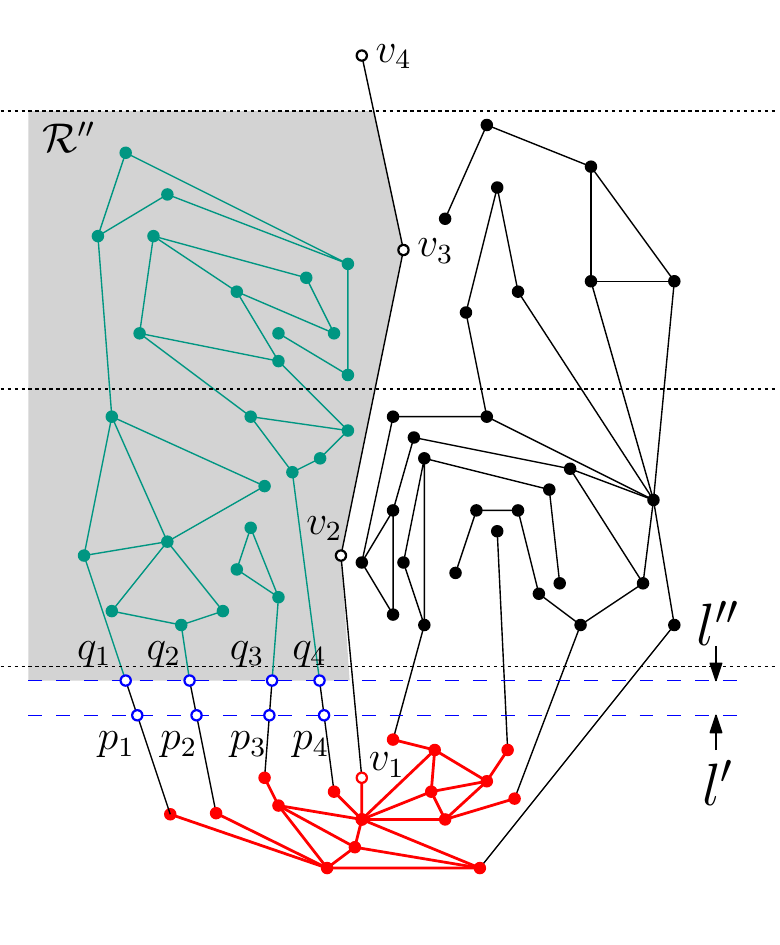}
		\label{left-crossing-a}}
	\hfil
	\subfigure[]{\includegraphics[height=.4\columnwidth, page=2]{spine.pdf}
		\label{left-crossing-b}}
	\caption{Illustrations for the proof of Lemma~\ref{se:strips}.}
	\label{fig:left-crossings}
\end{figure}
\end{proof}


\begin{lemma}\label{le:single-source}
	Let $\langle G=(V,E), \gamma \rangle$ be a spined instance of {\sc Strip Planarity} on $k>1$ strips with a unique source component $c$.
	It is possible to construct in linear time an equivalent spined instance $\langle G'=(V',E'), \gamma' \rangle$ of {\sc Strip Planarity} on $k-1$ strips with a unique source component $c'$.
\end{lemma}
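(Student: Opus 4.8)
The plan is to realise the reduction by merging the first two strips. Set $G':=G$ and let $\gamma'(v):=\max\{\gamma(v)-1,1\}$, so that the vertices of the first two strips of $\langle G,\gamma\rangle$ all fall into the first strip of $\langle G',\gamma'\rangle$ while every higher strip is shifted down by one; this takes linear time, and since adjacent vertices of $G$ lie in consecutive strips, $\gamma'$ is a valid assignment for {\sc Strip Planarity} on $k-1$ strips. For spinedness, observe that the path $(v_2,\dots,v_k)$ runs through strips $1,\dots,k-1$ of $\langle G',\gamma'\rangle$, its last vertex $v_k$ is still the only vertex of the top strip, and each $v_i$ with $i\ge 3$ still induces a component of its (unchanged) strip; hence $(v_2,\dots,v_k)$ is a spine path of $\langle G',\gamma'\rangle$.

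For the unique-source property I would argue as follows. Every component $D$ of the second strip of $\langle G,\gamma\rangle$ is incident to an inter-strip edge, and that edge cannot lead only to the third strip, as $D$ would then be a source component different from $c$; therefore $D$ has an edge to the first strip, i.e.\ to $c$. Consequently, in $\langle G',\gamma'\rangle$ the component $c$ and all components of the old second strip coalesce into a single component $c'$ of the first strip, all of whose inter-strip edges go up; so $c'$ is a source component, and it is the only component of the first strip. Every component of a higher strip of $\langle G',\gamma'\rangle$ still has an edge going down to the strip below, so it is not a source component; thus $c'$ is the unique source component of $\langle G',\gamma'\rangle$.

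One direction of the equivalence is immediate: erasing from any strip-planar drawing of $\langle G,\gamma\rangle$ the line between its first two strips yields a strip-planar drawing of $\langle G',\gamma'\rangle$, since those vertices now share the first strip and the surviving separating lines are a subset of the original ones. For the converse, take a strip-planar drawing $\Gamma'$ of $\langle G',\gamma'\rangle$ and, by Lemma~\ref{le:spine}, assume that its spine edge $(v_2,v_3)$ meets the line $\lambda$ bounding the first strip of $\langle G',\gamma'\rangle$ at the left-most point of $\lambda$ crossed by an inter-strip edge. To recover a drawing of $\langle G,\gamma\rangle$ it suffices to insert a horizontal line $\ell$ in the first strip of $\langle G',\gamma'\rangle$ with all vertices of $c$ below it, all vertices of the old second strip above it, and every edge crossing $\ell$ at most once. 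I would do this by showing that $\Gamma'$ can be deformed so that $c$ occupies exactly the region below such a line $\ell$: since $c$ is connected and every edge of $G$ incident to $c$ has its other endpoint in the old second strip, the drawing of $c$ can be combed towards the bottom boundary of the first strip, its incident edges being dragged up through $\ell$, while every old-second-strip component---each of which hangs off $c$---ends up above $\ell$; the normalisation of Lemma~\ref{le:spine} keeps the area to the left of $v_2$ empty, so the deformation is unobstructed on that side.

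The crux is precisely this last step: proving that the combing always succeeds, i.e.\ that no component of the old second strip can be wrapped around a part of $c$ in $\Gamma'$ in a way that prevents $c$ from being evacuated to the bottom of the strip. Both hypotheses enter here: the single-source condition guarantees that $c$ is the only subgraph in the old first strip and that every old-second-strip component is attached to $c$, so there is a single ``blob'' to push down with everything else genuinely above it, while spinedness together with Lemma~\ref{le:spine} pins the left side of the drawing and gives room for the deformation; a component encircling $c$ would have to route its own upward edge around $c$ as well, and a face-tracing argument along the boundary of the drawing of $c$ shows that this encircling can be undone. Once the deformation is complete, any horizontal line $\ell$ drawn just above the region occupied by $c$ witnesses the strip-planarity of $\langle G,\gamma\rangle$, establishing the equivalence of the two instances.
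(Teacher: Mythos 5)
Your reduction sets $G'=G$ and merely merges the first two strips, and the equivalence fails in exactly the direction you yourself flag as the crux: a strip-planar drawing of the merged instance need not yield one of the original instance, and in fact the merged instance can be positive while the original is negative. Concretely, let $c$ be a copy of $K_4$ on $u_0,u_1,u_2,u_3$ in strip $1$, attach to each $u_i$ an inter-strip edge to a private vertex $w_i$ in strip $2$ (each $w_i$ isolated within its strip), and add a third strip containing only a vertex $v_3$ adjacent to $w_1$, so that $(u_1,w_1,v_3)$ is a spine path and $c$ is the unique source component. The original instance is negative: $K_4$ is $3$-connected, so in every planar embedding some $u_j$ lies inside a triangle of $c$, that triangle is drawn entirely inside strip $1$, and the edge $(u_j,w_j)$ cannot reach the boundary between strips $1$ and $2$ without a crossing. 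After your merge the instance is positive: one simply places $w_j$ inside that triangle together with $u_j$, since $w_j$ now lives in the same strip and has no further obligations. No amount of ``combing'' can repair this, because there is nothing to comb towards --- the original instance has no drawing at all. The face-tracing argument you invoke addresses components wrapped \emph{around} $c$, but the real obstruction is attachment vertices (and the second-strip components hanging off them) trapped \emph{inside} internal faces of $c$, which the merged instance is free to create.

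What the merge destroys is precisely the constraint that the attachment vertices of $c$ must all lie on the outer face of the drawing of $c$, in a circular order compatible with routing the inter-strip edges upward. The paper's proof retains this information by not keeping $c$ at all: it computes a PQ-tree $\mathcal{T}_c$ describing the admissible circular orders of the edges leaving $c$ (via Booth--Lueker planarity testing on $c$ augmented with an apex vertex), replaces $c$ by the wheel-based \emph{representative graph} $G_{\mathcal{T}_c}$ of Feng, Cohen, and Eades, and attaches the old second-strip components to its degree-$1$ leaves; in any planar drawing of $G_{\mathcal{T}_c}$ with those leaves on a common face they appear in an order realizable by $c$, which is what makes the backward direction go through. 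Your observations that every second-strip component is attached to $c$ and that the instance stays spined are correct and are also used in the paper, but without the representative-graph substitution the lemma as you prove it is false.
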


\begin{proof}
	Consider the source component $c$ of $\langle G, \gamma \rangle$, which lies in the first strip.
	First, construct an auxiliary planar graph $G_c$ as follows. Initialize $G_c=c$ and add a dummy vertex $v$ to it. For each intra-strip edge $e$ incident to a vertex $u$ in $c $, add to $G_c$ a dummy vertex $v_e$ and edges $(v,v_e)$ and $(v_e,u)$.
	If $G_c$ contains cut-vertices, then let $B_c$ be the block of $G_c$ that contains $v$.
	Then, construct a PQ-tree $\mathcal{T}_c$ representing all possible orders of the edges around $v$ in a planar embedding of $B_c$. This can be done by applying the planarity testing algorithm of Booth and Lueker~\cite{bt-tcopiggppqa-76}, in such a way that vertex $v$ is the last vertex of the $st$-numbering of block $B_c$. Observe that each leaf of PQ-tree $\mathcal{T}_c$ corresponds to exactly one vertex $v_e$ in $B_c$.
	We construct a {\em representative graph} $G_{\mathcal{T}_c}$ from $\mathcal{T}_c$, as described in~\cite{FengCE95}, composed of (i) {\em wheel} graphs (that is, graphs consisting of a cycle, called {\em rim}, and of a {\em central vertex} connected to every vertex of the rim), of (ii) edges connecting vertices of different rims not creating any simple cycle that contains vertices belonging to more than one wheel, and of (iii) vertices of degree $1$, which are in one-to-one correspondence with the leaves of $\mathcal{T}_c$ (an hence with the dummy vertices $v_e$ in $B_c$), each connected to a vertex of some rim. As proved in~\cite{FengCE95}, in any planar embedding of $G_{\mathcal{T}_c}$ in which all the degree-$1$ vertices are incident to the same face, the order in which such vertices appear in a Eulerian tour of this face is in $O(\mathcal{T}_c)$.
	
	We construct $\langle G', \gamma' \rangle$ as follows. For $i=2,\dots,k$ and for each vertex $v$ such that $\gamma(v)=i$, we add $v$ to $V'$ and we set $\gamma'(v)=i-1$, that is, we assign all the vertices of the $i$-th strip of $\langle G, \gamma \rangle$, with $i \geq 2$, to the $(i-1)$-th strip of $\langle G', \gamma' \rangle$. Further, we add to $E'$ all edges in $E \cap (V' \times V')$.
	Also, we add all the vertices and edges of $G_{\mathcal{T}_c}$ to $V'$ and to $E'$, respectively, and we set $\gamma'(u)=1$, for each vertex $u$ of $G_{\mathcal{T}_c}$. Finally, for each inter-strip edge $e=(x,y)$ in $E$ with $\gamma(x)=1$ and $\gamma(y)=2$, we add to $E'$ an intra-strip edge between vertex $y$ and the degree-$1$ vertex of $G_{\mathcal{T}_c}$ corresponding to $v_e$.
	
	The construction of instance $\langle G', \gamma' \rangle$ can be carried out in linear time since the construction of $\mathcal{T}_c$ takes linear time in the size of $B_c$~\cite{bt-tcopiggppqa-76} and since the construction of
	$G_{\mathcal{T}_c}$ takes linear time in the size of $\mathcal{T}_c$~\cite{FengCE95}. Hence, instance $\langle G', \gamma' \rangle$ has size linear in the size of $\langle G, \gamma \rangle$. 
	Further, instance $\langle G', \gamma' \rangle$ has a unique source component, which contains $G_{\mathcal{T}_c}$ as a subgraph. This is due to the fact that any component in the second strip of $\langle G, \gamma \rangle$ has an inter-strip edge incident to a vertex of $c$. Finally, $\langle G', \gamma' \rangle$ is a spined instance whose spine path is the one obtained from the spine path of $\langle G, \gamma \rangle$ by removing its first edge.
	
	We now show the equivalence between the two instances.

	Suppose that $\langle G, \gamma \rangle$ admits a strip-planar drawing $\Gamma$, we show how to construct a strip-planar drawing $\Gamma'$ of  $\langle G', \gamma' \rangle$. First, observe that all the vertices of $c$ incident to inter-strip edges lie on the outer face of the drawing of $c$ in $\Gamma$. We subdivide each inter-strip edge incident to $c$ with a dummy vertex $v_e$ lying in the interior of the first strip of $\Gamma$. By the construction of $\mathcal{T}_c$ and of $G_{\mathcal{T}_c}$, each degree-$1$ vertex of $G_{\mathcal{T}_c}$ corresponds to exactly one vertex $v_e$.
	Further, let $c^+$ be the subgraph of $G$ induced by the vertices in $c$ and by all the vertices $v_e$.
	Note that the order in which the vertices $v_e$ appear in a Eulerian tour of the outer face of $c^+$ in $\Gamma$ is in $O(\mathcal{T}_c)$. Hence,
	we can replace the drawing of $c^+$ in $\Gamma$ with a drawing of $G_{\mathcal{T}_c}$ in which each degree-$1$ vertex is mapped to the vertex $v_e$ it corresponds to. Finally, we obtain $\Gamma'$ by merging the first two strips of $\Gamma$ into the first strip of $\Gamma'$.
	
	Suppose that $\langle G', \gamma' \rangle$ admits a strip-planar drawing $\Gamma'$; we show how to construct a strip-planar drawing $\Gamma$ of  $\langle G, \gamma \rangle$. First, by Lemma~\ref{le:spine}, we can assume that in $\Gamma'$ 
	the intersection point between the first edge of the spine path of $\langle G', \gamma' \rangle$ and the line separating the first and the second strip in $\Gamma'$ is the left-most intersection point between any edge $(x,y)$ with $\gamma(x)=1$ and $\gamma(y)=2$ and such a line.
	Further, we can assume the following.
	
	\begin{claim}
		\label{cl:1}
		The rim of every wheel $W$ in $G_{\mathcal{T}_c}$ contains in its interior the central vertex of $W$ and no other vertex in~$\Gamma'$. 
	\end{claim}
	
	\begin{proof}
		The claim can be proved with the same techniques used in~\cite{adn-aspbep-15}, by redrawing each edge connecting two adjacent vertices of the rim as a curve arbitrarily close to the length-$2$ path connecting them and passing through the central vertex of the wheel they belong to. This implies that all the degree-$1$ vertices of $G_{\mathcal{T}_c}$ lie in the outer face of the drawing of $G_{\mathcal{T}_c}$ induced by $\Gamma'$.
	\end{proof}
	
	We obtain $\Gamma$ as follows.  We initialize $\Gamma$ as the drawing in $\Gamma'$ of the subinstance of $\langle G', \gamma' \rangle$ induced by the vertices not in $G_{\mathcal{T}_c}$, where the $i$-th strip in $\Gamma'$ is mapped to the $(i+1)$-th strip in $\Gamma$. First, we add a drawing of $G_{\mathcal{T}_c}$ in the first strip of $\Gamma$ that is a copy of the drawing of $G_{\mathcal{T}_c}$ in $\Gamma'$. We now show how to draw in $\Gamma$ the inter-strip edges incident to $G_{\mathcal{T}_c}$. Observe that these edges correspond to the intra-strip edge incident to $G_{\mathcal{T}_c}$ in $\Gamma'$. 
	We draw each inter-strip edge $(x,y)$ with $y$ in $G_{\mathcal{T}_c}$ as a curve composed of six parts. The first part coincides with the drawing of $(x,y)$ in $\Gamma'$; the second part is a curve arbitrarily close to the drawing in $\Gamma'$ of a path in $G_{\mathcal{T}_c}$ from $y$ to the first vertex $v_1$ of the spine path of $\langle G', \gamma' \rangle$; the third part is a curve arbitrarily close to the drawing in $\Gamma'$ of the first edge of the spine path of $\langle G', \gamma' \rangle$ till a point $p$ in the interior of first strip of $\Gamma'$ and arbitrarily close to the boundary of the second strip of $\Gamma'$; the fourth part is a horizontal segment connecting $p$ to a point $q$ lying to the left of $\Gamma'$; the fifth part is a vertical segment connecting $q$ to a point $r$ in the interior of the first strip of $\Gamma$; and, finally, the sixth part is a curve connecting $r$ to $y$. Observe that, by Claim~\ref{cl:1}, all the degree-$1$ vertices of $G_{\mathcal{T}_c}$ lie on its outer face in $\Gamma'$ (and hence in $\Gamma$). Thus, it is possible to draw all the inter-strip edges incident to $G_{\mathcal{T}_c}$ without introducing any crossings, since the curves representing these edges preserve the same containment relationship between vertices and cycles in $\Gamma$ as the corresponding intra-strip edges in $\Gamma'$.
	
	To obtain a strip-planar drawing of $\langle G, \gamma \rangle$ we proceed as follows.
	Let $H$ be the graph obtained from $B_c$ by subdividing each edge $e$ incident to $v$ with a dummy vertex $v_e$ and by removing $v$.
	We replace the drawing of $G_{\mathcal{T}_c}$ in $\Gamma$ with a planar drawing of $H$ such that the vertices $v_e$ appear in a Eulerian tour of its outer face in the same clockwise order as the corresponding degree-$1$ vertices appear in a Eulerian tour of the outer face of $G_{\mathcal{T}_c}$ in $\Gamma$. Recall that these vertices are on the outer face of $G_{\mathcal{T}_c}$ in $\Gamma$, by Claim~\ref{cl:1}.
	Such a drawing of $H$ exists since this order is in $O(\mathcal{T}_c)$~\cite{FengCE95}.
	Finally, to complete $\Gamma$, for each cut-vertex $z$ of $G_c$ separating $B_c$ from a subgraph $G_z$ of $G_c$, we draw graph $G_z$ arbitrarily close to $z$.
	This is possible since none of the vertices of $G_z$, except possibly for $z$, is incident to an inter-strip edge. This concludes the proof of the lemma.
\end{proof}

Let $\langle G, \gamma \rangle$ be an instance of {\sc Strip Planarity} on $k>1$ strips satisfying the properties of Lemma~\ref{le:single-source}. By applying $k-1$ times the transformation of this lemma, we obtain an instance of {\sc Strip Planarity} on $k=1$ strips, that is, an instance whose strip-planarity coincides with the planarity of its underlying graph, which can be tested in linear time~\cite{bt-tcopiggppqa-76}. Hence, we get the following.

\begin{lemma}\label{le:single-source-algo}
	Let $\langle G=(V,E), \gamma \rangle$ be a spined instance of {\sc Strip Planarity} on $k>1$ strips with a unique source component $c$.
	It is possible to decide in $O(k \times n)$ time whether $\langle G, \gamma \rangle$ admits a strip-planar drawing.
\end{lemma}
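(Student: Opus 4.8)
The plan is to combine Lemma~\ref{le:single-source} with the linear-time planarity test, iterating the strip-reduction transformation down to a single strip. First I would observe that, since the given instance $\langle G,\gamma\rangle$ is spined on $k>1$ strips with a unique source component $c$, Lemma~\ref{le:single-source} applies and produces, in linear time, an equivalent spined instance on $k-1$ strips that again has a unique source component. The equivalence and the ``spined with a unique source component'' invariant are exactly what make the transformation iterable: I would apply it $k-1$ times in succession, obtaining a chain of equivalent instances $\langle G,\gamma\rangle = \langle G^{(k)},\gamma^{(k)}\rangle, \langle G^{(k-1)},\gamma^{(k-1)}\rangle, \dots, \langle G^{(1)},\gamma^{(1)}\rangle$, where $\langle G^{(i)},\gamma^{(i)}\rangle$ is a spined instance on $i$ strips with a unique source component.

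Next I would handle the base case. When $i=1$ the mapping $\gamma^{(1)}$ sends every vertex to the single strip, so the strip-planarity constraint is vacuous and $\langle G^{(1)},\gamma^{(1)}\rangle$ is strip-planar if and only if $G^{(1)}$ is planar; this can be decided in time linear in $|G^{(1)}|$ by the algorithm of Booth and Lueker~\cite{bt-tcopiggppqa-76}. By the chain of equivalences, $\langle G,\gamma\rangle$ is strip-planar if and only if $G^{(1)}$ is planar, which settles correctness.

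For the running time I would track the size blow-up across the $k-1$ reductions. Each application of Lemma~\ref{le:single-source} runs in time linear in the size of the current instance and outputs an instance of size linear in the current one; a single application introduces the representative graph $G_{\mathcal{T}_c}$, whose size is linear in that of the block $B_c$, which in turn is linear in the number of inter-strip edges leaving the source component. The key point to make precise is that these additions do not compound multiplicatively: the vertices and edges added in step $i$ live entirely in the first strip and are charged against inter-strip edges of the original instance incident to the (unique) source component, so the total size of $G^{(i)}$ stays $O(n)$ throughout, and each of the $k-1$ transformation steps costs $O(n)$. Together with the final $O(n)$ planarity test, this gives the claimed $O(k\times n)$ bound.

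The main obstacle I anticipate is the accounting in the last paragraph: one must argue carefully that iterating the transformation keeps the instance size linear in the \emph{original} $n$ rather than doubling it at each of the $k-1$ steps. This rests on the structural fact, already established in the proof of Lemma~\ref{le:single-source}, that after one reduction the new source component is essentially $G_{\mathcal{T}_c}$ together with edges coming from the old second strip, and that $|G_{\mathcal{T}_c}| = O(|B_c|)$ with $B_c$ bounded by the number of inter-strip edges at $c$ — a quantity that is a subset of the edges of $G$ and hence $O(n)$. Once this amortized/global size bound is in place, the time bound and the correctness assemble immediately from the iterated equivalence.
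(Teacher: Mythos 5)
Your proposal matches the paper's argument exactly: apply the transformation of Lemma~\ref{le:single-source} a total of $k-1$ times to reach a one-strip instance, whose strip-planarity reduces to planarity of the underlying graph, testable in linear time via Booth and Lueker. Your extra care about the size not compounding multiplicatively across iterations addresses a point the paper leaves implicit (it only asserts per-step linearity), and your accounting via the inter-strip edges incident to the source component is a sound way to justify the stated $O(k\times n)$ bound.
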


Given an instance of {\sc Strip Planarity}, one can create $O(n)$ spined instances by attaching the spine path to each of the $O(n)$ vertices in the first strip. Hence, by Lemma~\ref{le:single-source-algo}, we get the following.

\begin{theorem}\label{th:single-source-algo}
	Let $\langle G, \gamma \rangle$ be an instance of {\sc Strip Planarity} on $k$ strips such that there exists a unique source component $c$. 
	It is possible to decide in $O(n^3)$ time whether $\langle G, \gamma \rangle$ admits a strip-planar drawing.
\end{theorem}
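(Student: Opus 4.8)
\textbf{Proof plan for Theorem~\ref{th:single-source-algo}.}

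The plan is to reduce the general single-source case to the spined case handled by Lemma~\ref{le:single-source-algo}. The key observation is that ``spined'' is almost without loss of generality: the only thing missing from an arbitrary instance $\langle G,\gamma\rangle$ with a unique source component is the spine path itself, which consists of one representative vertex in each of the strips $2,\dots,k$, connected into a path, together with the requirement that $v_k$ be the unique vertex of the $k$-th strip and that each $v_i$ ($i\neq 1$) be an isolated component in its strip. So the first step is to enumerate candidate spine paths. Concretely, for each of the $O(n)$ vertices $w$ lying in the first strip, I would form a new instance by adding a fresh path $P_w=(w=v_1,v_2,\dots,v_k)$ with $\gamma(v_i)=i$ for $i\ge 2$, where $v_2,\dots,v_k$ are newly introduced vertices and each $v_i$ is an isolated vertex of $G$ except for the two spine edges incident to it. This new instance $\langle G_w,\gamma\rangle$ is by construction spined, has a unique source component (the source component of $G$ together with $v_1=w$ if $w$ belonged to it, and in any case the newly added spine path is a single passing/originating structure that does not create a second source component since $v_2,\dots,v_{k-1}$ are passing and $v_k$ can be declared the sink), and has size $O(n)$.

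The second step is to argue equivalence: $\langle G,\gamma\rangle$ is strip-planar if and only if at least one of the $O(n)$ instances $\langle G_w,\gamma\rangle$ is strip-planar. The ``if'' direction is immediate, since deleting the added spine path from a strip-planar drawing of $\langle G_w,\gamma\rangle$ yields a strip-planar drawing of $\langle G,\gamma\rangle$ (the original graph is an induced subgraph and strip-planarity is monotone under vertex deletion within the same strip assignment). For the ``only if'' direction, suppose $\Gamma$ is a strip-planar drawing of $\langle G,\gamma\rangle$. I would route a new curve that starts at some vertex $w$ on the outer boundary of the drawing in the first strip, passes upward through strips $2,\dots,k$ staying, say, arbitrarily close to the left boundary of the drawing, placing the vertices $v_2,\dots,v_k$ along it, one per strip; since each new vertex has degree at most $2$ (only the two consecutive spine edges) and the curve can be drawn in the unbounded left part of each strip without crossing anything, this yields a strip-planar drawing of $\langle G_w,\gamma\rangle$ for that particular choice of $w$. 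A subtle point to nail down here is that $v_k$ must be the \emph{unique} vertex of the $k$-th strip in the spined definition; if the original instance already has vertices in strip $k$, one should instead observe that adding a degree-$1$ (or degree-$2$) apex vertex in the $k$-th strip that dominates the drawing can always be done, or equivalently adjust the definition of ``unique source component'' so the spine only needs $v_k$ to be a distinguished sink vertex reachable by an isolated path — I would make this precise by noting the drawing can always be extended by one extra strip $k+1$ containing only $v_{k+1}$, so that enumerating over $O(n)$ starting vertices and paying one extra strip suffices.

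The third step is just arithmetic: each of the $O(n)$ spined instances has size $O(n)$ and $O(k)$ strips with $k=O(n)$, so by Lemma~\ref{le:single-source-algo} each can be decided in $O(k\cdot n)=O(n^2)$ time, and over all $O(n)$ choices of starting vertex this gives $O(n^3)$ total time; accepting iff some instance is positive. I expect the main obstacle to be the ``only if'' equivalence argument, specifically verifying that the spine path can always be inserted into an existing strip-planar drawing without obstruction — one must check that the left-most unbounded region of each strip is indeed free and that the isolated-component condition on $v_2,\dots,v_{k-1}$ and the uniqueness condition on $v_k$ can be met simultaneously, which is exactly where the structure of the ``single source'' hypothesis and possibly the harmless addition of one extra strip is used.
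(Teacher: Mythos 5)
Your proposal is correct and follows essentially the same route as the paper: enumerate $O(n)$ candidate attachment vertices in the source component, append a spine path that climbs through the strips into one \emph{extra} strip $k+1$ (so the last spine vertex is the unique vertex of the top strip), argue equivalence via subinstance monotonicity in one direction and insertability of the spine along an extreme (left-most) crossing in the other, and invoke Lemma~\ref{le:single-source-algo} on each of the $O(n)$ instances for $O(n^3)$ total time. The subtlety you flag about the uniqueness of $v_k$ in the last strip is exactly the one the paper resolves, and in the same way.
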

\begin{proof}
	Let $v$ be a vertex of $c$.
	We define an instance $I_v = \langle G', \gamma' \rangle$ of  {\sc Strip Planarity} on $k+1$ strips as follows. For each vertex $v$ of $G$, we add vertex $v$ to $V(G')$ and set $\gamma'(v) = \gamma(v)$. Also, we add all the edges in $E(G)$ to $E(G')$. Finally, for $i=2,\dots,k+1$, we add to $G'$ a vertex $v_i$ and set $\gamma(v_i)=i$. Finally, we add edge $(v,v_1)$ and edges $(v_i,v_{i+1})$, for $i=2,\dots,k$. 
	Observe that, by construction, path $(v,v_2,\dots,v_{k+1})$ is such that $v$ belongs to $c$ and each vertex $v_i$, with $2 \leq i \leq k+1$, induces a component in the $i$-th strip. 
	Hence, $I_v$ is a spined instance, which can thus be tested for strip-planarity in $O((k+1)\cdot n)$ time, by Lemma~\ref{le:single-source-algo}.
	
	It is not difficult to see that $\langle G, \gamma \rangle$ admits a strip-planar drawing if and only if there exists at least a vertex $v$ of $c$ that is incident to a vertex $u$ with $\gamma(u)=2$, such that instance $I_v$ admits a strip-planar drawing.
	In fact, the {\em if part} follows from the fact that each $I_v$ contains $\langle G, \gamma \rangle$ as a subinstance. 
	The {\em only if} can be proved by observing that if $\langle G, \gamma \rangle$ admits a strip-planar drawing, then it also admits a strip-planar drawing $\Gamma$ in which there exists a vertex $v$ that is incident to a vertex $u$ with $\gamma(u)=2$ such that the intersection point between edge $(u,v)$ and the line separating the first and the second strip in $\Gamma$ is the left-most intersection point between any edge $(x,y)$ with $\gamma(x)=1$ and $\gamma(y)=2$ and such a line in $\Gamma$.
	
	The time bound descends from that of Lemma~\ref{le:single-source-algo} and from the fact that (i) $k \in O(n)$ and $|I_v| \in O(|\langle G, \gamma \rangle|)$ and that (ii) since $G$ is planar, the number of vertices in $c$ that are incident to a vertex $u$ with $\gamma(u)=2$ is in $O(n)$. 
\end{proof}


\section{C-Planarity with Embedded Pipes}\label{cpp:fixed-embedding}

In this section we show that the {\sc C-Planarity with Embedded Pipes} problem is solvable in quadratic time for a notable family of instances.

Let $c$ be an originating component belonging to a cluster $\mu \in \cal T$ and let $\nu \neq \mu \in \cal T$ be the cluster to which the vertices of $c$ are adjacent to. We say that $c$ is {\em originating from $\mu$ to $\nu$}. 

\begin{lemma}\label{le:cpp-embedded-to-sefe}
	Let $\langle \cgraph{}, \Gamma_A \rangle$ be an instance of {\sc C-Planarity with Embedded Pipes} and let $\cal S$ be the maximum number of originating \me components in a cluster that are incident to the same pipe. It is possible to construct in linear time an equivalent instance \sefeinstance{} of {SEFE} such that (i) $\Gint{}$ is a spanning forest, (ii) each cut-vertex of $\Gb{}=(V,\Eb{})$ is incident to at most one non-trivial block, and (iii) each cut-vertex of $\Gr{}=(V,\Er{})$ is incident to at most $\cal S$ non-trivial blocks.
\end{lemma}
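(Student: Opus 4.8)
The plan is to build the SEFE instance so that one graph (the "red" graph $\Gr{}$) encodes the intra-cluster structure together with the cyclic orders forced inside each disk, and the other (the "blue" graph $\Gb{}$) encodes the cyclic order in which the pipes leave each disk, as dictated by the given drawing $\Gamma_A$ of the clusters-adjacency graph $G_A$; the common graph $\Gint{}$ will be the set of inter-cluster edges of $G$ (suitably attached), which is acyclic by construction, so that Corollary~\ref{co:forest} / Theorem~\ref{th:sefe} applies. Concretely, I would start from the vertex set $V$ of $G$ and, for each cluster $\mu$, do the following. Put into $\Er{}$ all intra-cluster edges of $\mu$. For each inter-cluster edge $e=(u,v)$ of $G$ with $u\in\mu$, subdivide it with a new vertex $v_e$; the half-edge $(u,v_e)$ goes into $\Er{}$, and the two halves $(u,v_e)$ and $(v_e,v)$ together are what we want the two drawings to agree on — so the pendant edges at the $v_e$'s form (a superset of) the common graph. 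To force consistency we then add, inside $\mu$, a per-component gadget: for each component $c$ of $\mu$ we add to $\Er{}$ a small connector (a single new vertex $w_c$ attached to the vertices of $c$ incident to inter-cluster edges, plus the pendants $v_e$ hanging off $w_c$) so that in any planar embedding of $\Gr{}$ the cyclic order of the $v_e$'s around $w_c$ is exactly the set of orders realizable by $c$; and we add to $\Gb{}$ a cycle (or a wheel) through all the $v_e$'s of all components of $\mu$ in the cyclic order read off from $\Gamma_A$, i.e.\ grouped pipe by pipe in the prescribed rotation of the disk of $\mu$, and within each pipe in an order compatible with what the pipe endpoint at the other cluster allows. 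The pendant edges $(v_e,v)$ (or, symmetrically, both halves of the subdivided inter-cluster edge) are placed in $\Er{}\cap\Eb{}$, so $\Gint{}$ is a forest, in fact a spanning forest of $V\cup\{\text{subdivision and gadget vertices}\}$ once we also throw the intra-cluster spanning forests into the common part — this is exactly item~(i).

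The two crucial structural properties (ii) and (iii) then come from counting non-trivial blocks at cut-vertices of the gadgets. In $\Gb{}$ the only non-trivial blocks are the per-cluster cycles/wheels enforcing the pipe order; each $v_e$ lies on exactly one such cycle and is otherwise incident only to its pendant common edge, which is trivial, and each original vertex of $G$ lies in only the (single) cluster cycle it participates in — hence every cut-vertex of $\Gb{}$ touches at most one non-trivial block, which is~(ii). For $\Gr{}$, the non-trivial blocks sitting at a cut-vertex $v_e$ (or at a vertex $u$ of a component $c$) come from the component-encoding gadgets, and by construction a vertex is shared between at most as many such gadgets as the number of originating \me components incident to the same pipe that it must "see"; carefully, the only place two gadgets can meet is at a pendant vertex $v_e$ that a single originating component owns, so the bound is the maximum number of originating \me components in a cluster incident to the same pipe, namely $\cal S$ — this is~(iii). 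I would make this precise by checking that \se components and passing components contribute only trivial blocks (a \se component has a single inter-cluster edge, so its connector has degree $\le 2$ and adds no cycle), so that indeed only the \me originating components can stack up, and only those on a common pipe share a vertex.

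The equivalence between the two instances is then a pair of routine back-and-forth arguments using Theorem~\ref{th:inutile}/Corollary~\ref{co:forest}: given a solution of {\sc C-Planarity with Embedded Pipes}, the c-planar drawing respecting $\Gamma_A$ induces compatible rotation schemes on the common pendant forest (each $v_e$ inherits one rotation), so $\Gr{}$ and $\Gb{}$ get combinatorial embeddings inducing the same embedding on $\Gint{}$, i.e.\ a SEFE; conversely, a SEFE gives, at each cluster $\mu$, a cyclic order of the pipe-stubs $v_e$ that is simultaneously (from $\Gb{}$) the one prescribed by $\Gamma_A$ and (from $\Gr{}$) one realizable by the components of $\mu$, which is exactly what is needed to route the inter-cluster edges inside the pipes and place the components inside the disk; the wheel/representative-graph machinery already used in Lemma~\ref{le:single-source} is what lets us turn "a realizable cyclic order of the stubs" into an actual planar drawing of each component with its inter-cluster edges on the outer face. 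Linear time for the construction is immediate since each gadget has size linear in the component it encodes (the PQ-tree and representative graph are linear-size by~\cite{bt-tcopiggppqa-76,FengCE95}). The part I expect to be the real obstacle is getting the gadgets to interact \emph{locally} enough that the block bounds (ii) and (iii) hold on the nose — in particular making sure that when two originating \me components of $\mu$ feed the same pipe, the blue cycle and the two red component-gadgets share \emph{only} the relevant stub vertices and create no extra non-trivial block, and that the pipe-side ordering constraint (the order in which a pipe's edges may enter the disk at the \emph{other} end) is also encoded without blowing up the block count; handling originating-versus-passing components asymmetrically in the two graphs is the delicate bookkeeping here.
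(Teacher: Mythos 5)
Your high-level framing (one graph carries the intra-cluster/ordering information, the other carries the pipe structure, the common graph is a forest so that Theorem~\ref{th:sefe} applies) matches the paper's, but the construction you sketch has a genuine gap at its core: you propose to put into $\Gb{}$, for each cluster $\mu$, a cycle through all the stubs $v_e$ \emph{in a fixed cyclic order}, ``grouped pipe by pipe \dots and within each pipe in an order compatible with what the pipe endpoint at the other cluster allows.'' The order of the inter-cluster edges \emph{within} a pipe is not part of the input and cannot be prescribed: it depends on the embeddings chosen for the components at both ends of the pipe, which is precisely what the SEFE instance must decide. Hard-coding it over-constrains the instance (breaking one direction of the equivalence), and leaving it ``compatible with the other end'' is circular. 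What is actually needed, and what your proposal has no mechanism for, is a gadget that leaves the order of the edges crossing one end of the pipe free but forces the order at the other end to be its reverse. The paper does this with a chain of four stars $A_{\mu,\nu}\to B_{\mu,\nu}\to B_{\nu,\mu}\to A_{\nu,\mu}$ placed inside a \emph{pipe cycle}, whose leaves are matched alternately by red-only and blue-only edges, so that the rotation at one end is transmitted (reversed) to the other end through planarity of each of the two graphs separately. You flag this as ``delicate bookkeeping'' at the end, but it is the central difficulty of the reduction, not a detail.

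Two further points. First, the paper enforces the prescribed drawing $\Gamma_A$ globally via a \emph{frame gadget}: a triconnected planar graph in $\Gint{}$ built from the disk and pipe boundaries of $\Gamma_A$ (plus an outer apex and triangulation), whose unique embedding pins down both the rotation of pipes around each disk and the relative positions of the disks; your per-cluster blue cycles fix rotations but not the global arrangement or the outer face. Second, the bound $\cal S$ in item~(iii) is not something you can assert from ``gadgets sharing stub vertices'': in the paper it falls out of a specific cut-vertex $\alpha_{\mu,\nu}$ on each pipe, off which every \me originating component from $\mu$ to $\nu$ hangs as one non-trivial red block (passing components instead merge into the single frame block because they attach to stars of at least two different pipes). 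Without an explicit attachment point of this kind, your claim that only originating \me components on a common pipe ``stack up'' is unsubstantiated. Your apex-vertex $w_c$ encoding of a component's realizable stub orders is also unnecessary (and not obviously correct as stated); the paper simply keeps each component in $\Gint{}$ and lets planarity of $\Gr{}$ restrict the orders.
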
 

\begin{proof}
\begin{figure}[tb]
	\subfigure[]{\includegraphics[page=2]{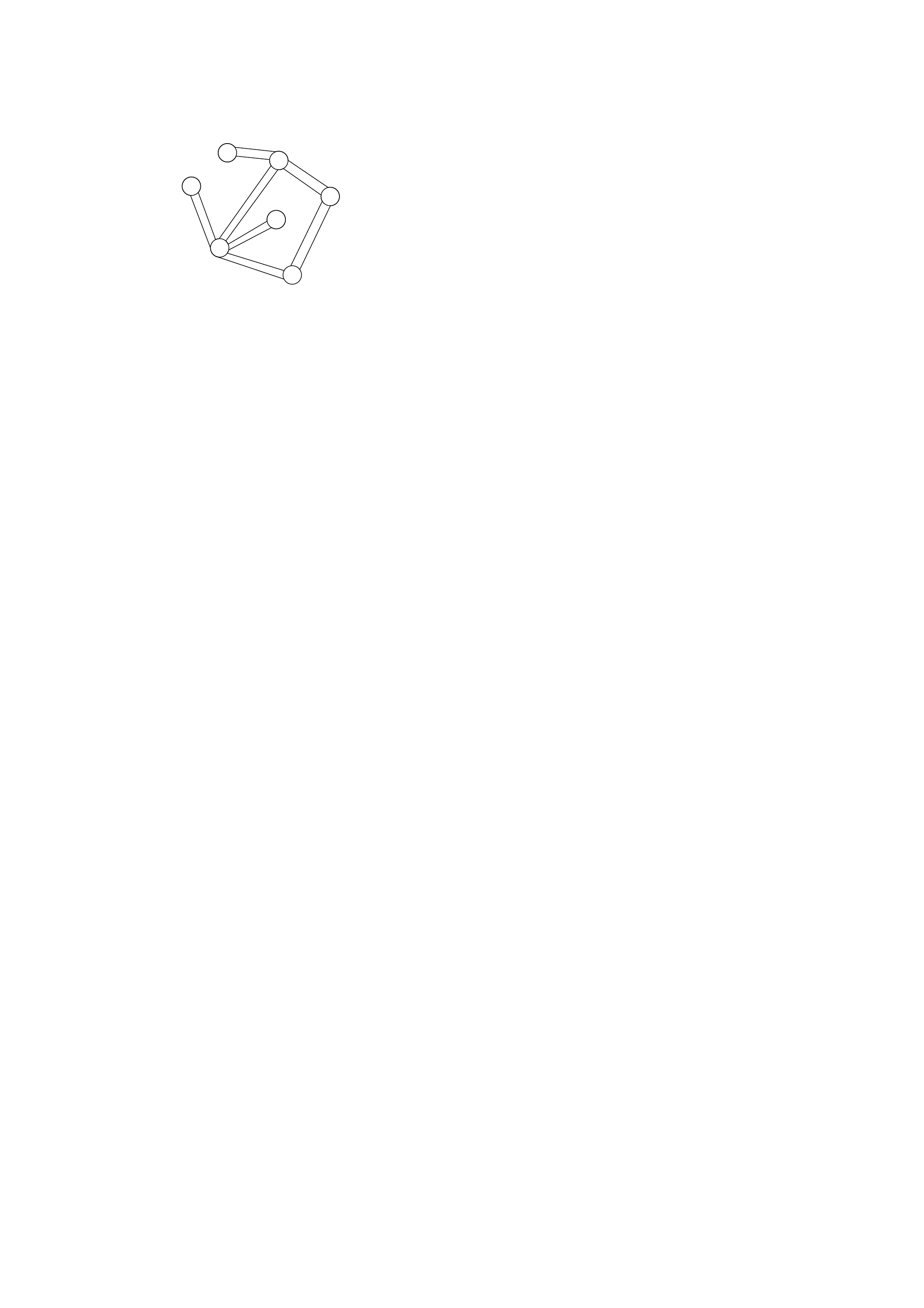}\label{fi:cpp-embedded-a}}
	\hfil
	\subfigure[]{\includegraphics[page=3]{embedded}\label{fi:cpp-embedded-b}}
	\hfil
	\subfigure[]{\includegraphics[page=4]{embedded}\label{fi:cpp-embedded-c}}
	\caption{(a) Drawing $\Gamma_A$ of the \ga $G_A$ of a flat c-graph; vertices have been placed a the intersection between clusters and pipes. The disk cycle for cluster $\mu$ and the pipe cycle for edge $(\mu,\nu)$ of $G_A$ are depicted as orange and grey tiled regions, respectively. (b) Frame gadget $H$. (c) Partial instance \sefeinstance{} of {\sc SEFE} constructed starting from $\Gamma_A$; graphs $\Gr{}$, $\Gb{}$, and $\Gint{}$ are subdivisions of a triconnected planar graph.}
	\label{fi:cpp-embedded}
\end{figure}
	We show how to construct \sefeinstance{} starting from $\langle \cgraph{}, \Gamma_A \rangle$. The {\em frame gadget} $H$ is an embedded planar graph  defined as follows. 
	For each intersection point between a disk representing a cluster $\mu \in \cal T$ and a segment delimiting a pipe representing an edge of $G_A$ incident to $\mu$ in the drawing $\Gamma_A$ of $G_A$ (see Fig.~\ref{fi:cpp-embedded-a}), we add a vertex at this point. This results in a planar drawing of a graph; we set $H$ to be this graph. We call {\em disk cycle} of $\mu$ the cycle in $H$ obtained from the disk of $\mu$ in $\Gamma_A$. Similarly, we call {\em pipe cycle} of an edge $(\mu,\nu)$ of $G_A$ the cycle in $H$ obtained from the pipe representing edge $(\mu,\nu)$ in $\Gamma_A$. See Fig.~\ref{fi:cpp-embedded-a}.
	Observe that, for each cluster that is incident to exactly one pipe, this operation introduced two copies of the same edge; we subdivide with a dummy vertex the copy that is not incident to the interior of this pipe. Further, we add a vertex $v_{out}$ in the outer face of $H$ and connect it to all the vertices incident to this face. Finally, we triangulate all the faces of $H$ that do not correspond to the interior of any cluster cycle or of any pipe cycle, hence obtaining a triconnected embedded planar graph. See Fig.~\ref{fi:cpp-embedded-b}. 
	
	We initialize $\Gint{} = H$. For each edge $e \in E(H)$ separating the interior of a pipe from the interior of a disk, we remove $e$ from $\Gr{}$ (thus, edge $e$ only belongs to $\Gb{}$). Note that the definition of disk cycles and of pipe cycles can be extended to cycles in $\Gb{}$. Further, for each two edges $e'$ and $e''$ corresponding to the two segments $(u_{\mu,\nu},u_{\nu,\mu})$ and $(v_{\mu,\nu},v_{\nu,\mu})$ delimiting a pipe representing an edge $(\mu,\nu)$ of $G_A$, we subdivide $e'$ with four dummy vertices $a'_{\mu,\nu},b'_{\mu,\nu},b'_{\nu,\mu},a'_{\nu,\mu}$ and $e''$ with four dummy vertices $a''_{\mu,\nu},b''_{\mu,\nu},b''_{\nu,\mu},a''_{\nu,\mu}$, and add edges $\red{(a'_{\mu,\nu},a''_{\mu,\nu})}$ and $\red{(a'_{\nu,\mu},a''_{\nu,\mu})}$ to $\Gr{}$ and edges $\blue{(b'_{\mu,\nu},b''_{\mu,\nu})}$ and $\blue{(b'_{\nu,\mu},b''_{\nu,\mu})}$ to $\Gb{}$.
	
	For each cluster $\mu \in \cal T$, we augment \sefeinstance{} as follows; see Fig.~\ref{fi:cpp-embedded-cluster}.
	We subdivide an edge of $\Gint{}$ that corresponds to a portion of the boundary of the disk representing $\mu$ in $\Gamma_A$ with a dummy vertex $\gamma_\mu$, and we add to $\Gint{}$ a star $C_\mu$, whose central vertex is adjacent to $\gamma_\mu$, having a leaf $z(c_i)$ for each \me component $c_i$ of $\mu$.
	Further, we add to $\Gint{}$ each component $c_i$ of $\mu$. 
	Finally, for each edge $(\mu,\nu)$ of $G_A$, we subdivide edge $(v_{\mu,\nu},a'_{\mu,\nu})$ with a dummy vertex $\alpha_{\mu,\nu}$ and edge $(a''_{\mu,\nu},b''_{\mu,\nu})$ with a dummy vertex $\beta_{\mu,\nu}$. Then, we add to $\Gint{}$ a star $A_{\mu,\nu}$ ($B_{\mu,\nu}$), whose central vertex is adjacent to vertex $\alpha_{\mu,\nu}$ (is identified with vertex $\beta_{\mu,\nu}$), with a leaf $a_\mu(e)$ (a leaf $b_\mu(e)$) for each inter-cluster edge $e$ incident to a component of $\mu$ and to a component in $\nu$.  
	Also, \sefeinstance{} contains the following edges only belonging to $\Gr{}$ and to $\Gb{}$.
	For each inter-cluster edge $e=(x,y)$ with $x \in \mu$ and $y \in \nu$, we add to $\Gr{}$ edges $\red{(x,a_\mu(e))}$, $\red{(y,a_\nu(e))}$, and $\red{(b_\mu(e),b_\nu(e))}$, and we add to $\Gb{}$ edges $\blue{(a_\mu(e),b_\mu(e))}$ and $\blue{(a_\nu(e),b_\nu(e))}$.
	Also, for each vertex $x$ of a component $c_i$ of a cluster $\mu$ such that $x$ is incident to at least an inter-cluster edge, we add to $\Gb{}$ an edge $\blue{(x,z(c_i))}$.
	%
			
	\begin{figure}[tb!]
		\centering
		\includegraphics[page=5]{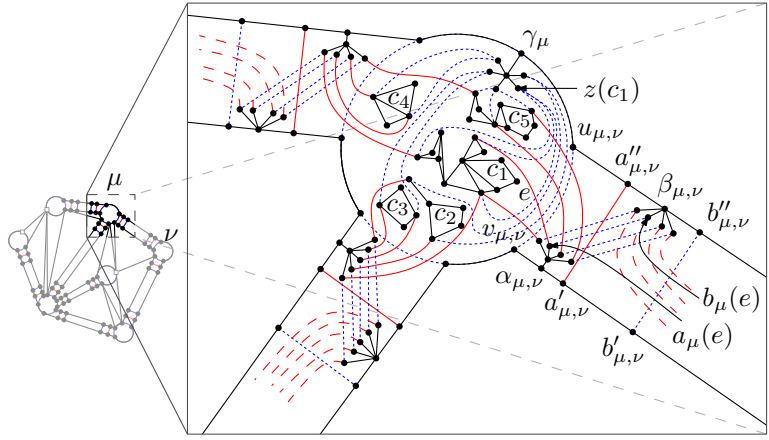}
		\caption{Augmentation of instance \sefeinstance{} focused on cluster $\mu \in \cal T$.}
		\label{fi:cpp-embedded-cluster}
	\end{figure}
	
	Clearly, \sefeinstance{} can be constructed in linear time.
	We now prove that $\Gr{}$ and $\Gb{}$ satisfy the properties of the lemma.
	We note that $\Gr{}$ and $\Gb{}$ are connected, since each vertex of a component $c_i$ is connected to the frame gadget by means of paths in $\Gr{}$ and in $\Gb{}$ passing through stars $A_{\mu,\nu}$ and $C_\mu$, respectively.
	Also, for each cluster $\mu \in \calT{}$, graph $\Gb{}$ contains cut-vertices $\gamma_\mu$, the center of star $C_\mu$, and vertices $z(c_i)$, for each component $c_i$ of $\mu$.
	We now show that all these cut-vertices are incident to at most two non-trivial blocks of $\Gb{}$. 
	Vertex $\gamma_\mu$ is incident to exactly one non-trivial block, that is, the one containing all the vertices and edges of the frame gadget. The center of $C_\mu$ is incident only to non-trivial blocks. Finally, vertices $z(c_i)$, for each component $c_i$ of $\mu$, are incident to at most one non-trivial block, that is, the one containing all the vertices and edges in $c_i$.
	Also, for each cluster $\mu \in \calT{}$, all the passing components in $\mu$ belong to the biconnected component of $\Gr{}$ containing all the vertices and edges of the frame gadget, 
	\begin{figure}[htb]
		\centering
		\includegraphics{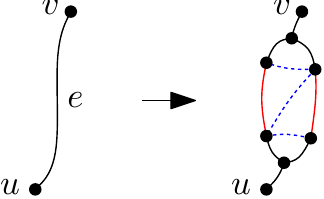}
		\caption{Replacing an edge $e=(u,v)$ to make $\Gint{}$ acyclic.}
		\label{fig:transformation}
	\end{figure}
	while each \me originating component from $\mu$ to a cluster $\nu$ determines a non-trivial block incident to cut-vertex $\alpha_{\mu,\nu}$, and each \se originating component from $\mu$ to a cluster $\nu$ determines a trivial block incident to cut-vertex $\alpha_{\mu,\nu}$. Since the number of \me originating components from any cluster to any other cluster is at most $\cal S$, graph $\Gr{}$ satisfies the required properties. 
	The following claim implies that $\Gint{}$ can be transformed into a spanning forest without altering the properties of \sefeinstance{}.
	
	\begin{claim}\label{cl:cycle-removal}
		Each cycle of $\Gint{}$ can be removed without altering the properties of \sefeinstance{} by replacing one of its edges with the gadget in Fig.~\ref{fig:transformation}.
	\end{claim}
	\begin{proof}
		Let $\phi$ be any cycle in $\Gint{}$. By replacing an edge $e$ of $\phi$ with the gadget described in Figure~\ref{fig:transformation}, we obtain a new instance \sefeinstance{'} of SEFE. Clearly, such a transformation does not introduce any cut-vertex in either $\Gr{}$ or $\Gb{}$. Also, it does not transform any trivial block into a non-trivial block and it does not create any new block, since $e$ used to belong to a cycle in $\Gint{}$.
		We claim that we did not alter the possible vertex-cycle containment relationships of \sefeinstance{}. This is due to the fact that, since $\Gr{}$ and $\Gb{}$ remain connected after removing edge $e$, in any SEFE of \sefeinstance{'} there exists no vertex of $V$ in the interior of the newly introduced cycles.
		Repeating such a replacement until no cycle is left in $\Gint{}$ yields an instance \sefeinstance{*} satisfying the required properties. Since each repetition of the above transformation can be performed in constant time, removing all the cycles can be done in total linear time.
	\end{proof}
	
	We now prove the equivalence.
	
	Suppose that \sefeinstance{} admits a SEFE \sefesolution{}. We show how to construct a c-planar drawing with embedded pipes $\Gamma$ of $\langle \cgraph{}, \Gamma_A \rangle$. Without loss of generality, we assume that vertex $v_{out}$ is embedded on the outer face of \sefesolution{}. Observe that the paths in $\Gint{}$ corresponding to the segments delimiting the pipes representing an edge of $G_A$ incident to a cluster $\mu \in \cal T$ appear in \sefesolution{} in the same clockwise circular order as the corresponding pipes appear around the disk representing cluster $\mu$ in $\Gamma_A$. This is due to the fact that the frame gadget is a triconnected planar graph whose unique planar embedding is the one obtained from $\Gamma_A$. Note that  in \sefesolution{} all the vertices in $V$ appear either in the interior or on the boundary of disk cycles or of pipe cycles. This is due to the fact that removing all the vertices on the boundary of such cycles leaves a connected subgraph of $G_\cup$ and that there exists a unique face of $H$ to which all the vertices belonging to such cycles are incident.

	The proof is based on the fact that any SEFE of \sefeinstance{} has the following properties. 
	\begin{inparaenum}
		\item
		For each cluster $\mu \in \calT{}$, the central vertex of star $C_\mu$ lies in the interior of the disk cycle of $\mu$, and hence all the vertices and edges of the components $c_i$ of $\mu$ lie in the interior of such a cycle, by the connectivity of $\Gb{}$.
		\item 
		For each two clusters $\mu, \nu \in \cal T$, the vertices of the components of $\mu$ and of the components of $\nu$ lie in the interior of different cycles of $\Gr{}$. This is due to the fact that all the components of each cluster $\mu$ are connected by means of paths in $\Gr{}$ to the leaves of a star $A_{\mu,\xi}$, where $\xi$ is a cluster adjacent to $\mu$. Also, all the leaves of these stars lie in the interior of a cycle of $\Gr{}$ delimited by edges of $\Gint{}$ and by edges $\red{(a'_{\mu,\xi_i}, a''_{\mu,\xi_i})}$, for all the clusters $\xi_i$ adjacent to $\mu$.
		\item
		For each inter-cluster edge $e$ connecting a vertex $v$ of a component $c_i$ of cluster $\mu$ to a cluster $\nu$, edge $\red{(v, a_\mu(e))}$ in $\Gr{}$ crosses edge $\blue{(u_{\mu,\nu}, v_{\mu,\nu})}$. This is due to the previous two points and the fact that the leaves of $A_{\mu,\nu}$ lie outside the disk cycle of $\mu$. Note that we can assume that each of these edges crosses edge $\blue{(u_{\mu,\nu}, v_{\mu,\nu})}$ exactly once, as otherwise we could redraw them in such a way to fulfil this requirement.
		\item
		For two adjacent clusters $\mu, \nu \in \cal T$, the order in which the edges in $\Gr{}$ incident to the leaves of $A_{\mu,\nu}$ cross edge $\blue{(u_{\mu,\nu}, v_{\mu,\nu})}$ from $u_{\mu,\nu}$ to $v_{\mu,\nu}$ is the reverse of the order in which the edges in $\Gr{}$ incident to the leaves of $A_{\nu,\mu}$ cross edge $\blue{(u_{\nu,\mu}, v_{\nu,\mu})}$ from $u_{\nu,\mu}$ to $v_{\nu,\mu}$, where the identification between an edge incident to a leaf $a_\mu(e)$ of $A_{\mu,\nu}$ and an edge incident to a leaf $a_\nu(e)$ of $A_{\nu,\mu}$ is based on the inter-cluster edge $e$ they correspond to. This is due to the fact that the order in which the edges in $\Gr{}$ incident to the leaves of $A_{\mu,\nu}$ cross edge $\blue{(u_{\mu,\nu}, v_{\mu,\nu})}$ is transmitted to the leaves of $B_{\mu,\nu}$ via edges in $\Gb{}$ connecting the leaves of $A_{\mu,\nu}$ to the leaves of $B_{\mu,\nu}$, then it is transmitted to the leaves of $B_{\nu,\mu}$ via edges in $\Gr{}$ connecting the leaves of $B_{\mu,\nu}$ to the leaves of $B_{\nu,\mu}$, and finally to the leaves of $A_{\nu,\mu}$ via edges in $\Gb{}$ connecting the leaves of $B_{\nu,\mu}$ to the leaves of $A_{\nu,\mu}$. Note that, all the leaves of these stars lie in the interior of the pipe cycle corresponding to the edge $(\mu,\nu)$ of $G_A$.
	\end{inparaenum}
	
	We describe the correspondence between the SEFE \sefesolution{} of \sefeinstance{} and the c-planar drawing with embedded pipes $\Gamma$ of $\langle \cgraph{}, \Gamma_A \rangle$.
	For each $\mu \in \calT{}$, we draw region $R(\mu)$ as the simple closed region whose boundary coincides with the drawing in $\GammaB{}$ of the disk cycle of $\mu$.
	Each component $c_i$ of a cluster $\mu$ has the same drawing in $\Gamma$ as $c_i$ in \sefesolution{}.
	For each inter-cluster edge $e=(x,y)$ with $x \in \mu$ and $y \in \nu$, the portion of $e$ in the interior of $R(\mu)$ (of $R(\nu)$) coincides with the drawing of edge $\red{(x,a_\mu(e))}$ (of edge $\red{(y,a_\nu(e))}$) between $x$ (between $y$) and the intersection point of this edge with edge $\blue{(u_{\mu,\nu},v_{\mu,\nu})}$ (with edge $\blue{(u_{\nu,\mu},v_{\nu,\mu})}$). To complete the drawing of all the inter-cluster edges between $\mu$ and $\nu$ in the interior of the pipe representing edge $(\mu,\nu)$ in $G_A$, we connect the intersection points between the corresponding edges in $\Gr{}$ and edges $\blue{(u_{\mu,\nu},v_{\mu,\nu})}$ and $\blue{(u_{\nu,\mu},v_{\nu,\mu})}$ by means of a set of non-intersecting curves. This is possible since 
	the order in which the edges in $\Gr{}$ incident to the leaves of $A_{\mu,\nu}$  cross edge $\blue{(u_{\mu,\nu}, v_{\mu,\nu})}$ from $u_{\mu,\nu}$ to $v_{\mu,\nu}$ is the reverse of the order in which the edges in $\Gr{}$ incident to the leaves of $A_{\nu,\mu}$ cross edge $\blue{(u_{\nu,\mu}, v_{\nu,\mu})}$ from $u_{\nu,\mu}$ to $v_{\nu,\mu}$.
	Hence, $\Gamma$ is a c-planar drawing of $\cgraph{}$. The fact that $\Gamma$ can be continuously deformed into a c-planar drawing with embedded pipes of $\langle \cgraph{}, \Gamma_A \rangle$ is due to the fact that the paths in $\Gint{}$ corresponding to the segments delimiting the pipes incident to each cluster $\mu \in \cal T$ appear in \sefesolution{} in the same clockwise order as the corresponding pipes appear around the disk representing $\mu$ in~$\Gamma_A$.
	
	For the other direction, the goal is to construct a SEFE \sefesolution{} of \sefeinstance{} that satisfies all the properties describe above starting from a c-planar drawing with pipes $\Gamma$ of $\langle \cgraph{}, \Gamma_A \rangle$. 
	For each cluster $\mu \in \cal T$,  we draw the disk cycle of $\mu$ as the boundary of the disk of $\mu$ in $\Gamma_A$. Also, for each edge $(\mu,\nu)$ of $G_A$, we draw the corresponding pipe cycle as the boundary of the pipe of edge $(\mu,\nu)$ in $\Gamma_A$. For each cluster $\mu \in \cal T$, each component $c_i$ of $\mu$ has the same drawing in \sefesolution{} as $c_i$ in $\Gamma$. For each edge $(\mu,\nu)$ of $G_A$, the stars $A_{\mu,\nu}$, $B_{\mu,\nu}$, $A_{\nu,\mu}$, and $B_{\nu,\mu}$ are drawn in \sefesolution{} in such a way that the order of their leaves is the same or the reverse of the order in which the inter-cluster edges between $\mu$ and $\nu$ traverse the boundary of the disk of $\mu$ in $\Gamma$. Note that this order is the reverse of the order in which these edges traverse the boundary of the disk of $\nu$ in $\Gamma$. This allows to draw all the edges in $\Gr{}$ and in $\Gb{}$ that are incident to such leaves without introducing any crossings between edges of the same graph. The drawing of star $C_\mu$, for each cluster $\mu \in \cal T$, and of the edges in $\Gb{}$ incident to its leaves can be easily obtained to respect the circular order of the inter-cluster edges incident to each of the components of $\mu$. This concludes the proof of the lemma.
\end{proof}

By Lemma~\ref{le:cpp-embedded-to-sefe} and~Theorem~\ref{th:sefe} we have the following main result.

\begin{theorem}\label{th:cpp-algorithms-2cases}
	{\sc C-Planarity with Embedded Pipes} can be solved in $O(n^2)$ time for instances $\langle \cgraph{}, \Gamma_A \rangle$ such that for each cluster $\mu \in \cal T$ and for each edge $(\mu,\nu)$ in $G_A$ either 
	(\texttt{CASE~1}) cluster $\mu$ contains at most one originating \me component from $\mu$ to $\nu$ 
	or
	(\texttt{CASE~2}) cluster $\mu$ contains at most two \me originating components from $\mu$ to $\nu$ and does not contain any passing component that is incident to $\nu$.
\end{theorem}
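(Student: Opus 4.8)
The plan is to reduce the problem to {\sc SEFE} by means of Lemma~\ref{le:cpp-embedded-to-sefe} and then to run the quadratic-time test of Theorem~\ref{th:sefe}. Given an instance $\langle \cgraph{}, \Gamma_A \rangle$ as in the statement, note first that for every cluster $\mu \in \cal T$ and every edge $(\mu,\nu)$ of $G_A$ the number of \me originating components from $\mu$ to $\nu$ is at most $2$, so the quantity $\cal S$ appearing in Lemma~\ref{le:cpp-embedded-to-sefe} is at most $2$. Applying that lemma produces, in $O(n)$ time, an equivalent instance \sefeinstance{} of {\sc SEFE} of size $O(n)$ whose common graph $\Gint{}$ is a spanning forest.

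Next I would verify that \sefeinstance{} satisfies the two hypotheses of Theorem~\ref{th:sefe}: that $\Gint{}$ is a forest, and that every cut-vertex of both $\Gr{}$ and $\Gb{}$ is incident to at most two non-trivial blocks. The first is property~(i) of Lemma~\ref{le:cpp-embedded-to-sefe}, and for $\Gb{}$ the block condition is property~(ii) of the same lemma. The point where the case hypothesis is genuinely used is the block bound for $\Gr{}$. Inspecting the construction in the proof of Lemma~\ref{le:cpp-embedded-to-sefe}, the cut-vertices of $\Gr{}$ that can be incident to several non-trivial blocks are those attached to the stars $A_{\mu,\nu}$: each \me originating component from $\mu$ to $\nu$ has at least two inter-cluster edges towards $\nu$ and therefore gives rise to its own non-trivial block, by closing a cycle through the central vertex of $A_{\mu,\nu}$, while at most one further non-trivial block --- one that runs through the frame gadget --- can occur, and it does so exactly when $\mu$ has a passing component incident to $\nu$ (such a component can reach the frame gadget through the star $A_{\mu,\xi}$ of another cluster $\xi$ incident to it, closing a cycle that passes through both the central vertex of $A_{\mu,\nu}$ and $\alpha_{\mu,\nu}$). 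Consequently, if the edge $(\mu,\nu)$ falls under \texttt{CASE~1} the count is at most $1+1=2$, and if it falls under \texttt{CASE~2} it is at most $2+0=2$; every other cut-vertex of $\Gr{}$ --- those belonging to the triconnected frame gadget and those created by the \se originating components --- is incident to at most two non-trivial blocks, by the structural analysis already carried out in the proof of Lemma~\ref{le:cpp-embedded-to-sefe}.

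Having checked both hypotheses, I would invoke Theorem~\ref{th:sefe} to decide whether \sefeinstance{} admits a SEFE in $O(|V|^2) = O(n^2)$ time; by the equivalence stated in Lemma~\ref{le:cpp-embedded-to-sefe}, this decides the original instance of {\sc C-Planarity with Embedded Pipes}, and adding the $O(n)$ cost of the reduction yields the claimed $O(n^2)$ running time.

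The step I expect to be the main obstacle is the block-count verification of the second paragraph: one has to identify precisely which cut-vertices of $\Gr{}$ can carry more than one non-trivial block, and confirm that the disjunction \texttt{CASE~1}/\texttt{CASE~2} is exactly what caps that number at two. In particular, the key is to see that a passing component incident to $\nu$ contributes one extra non-trivial block at the cut-vertex of $A_{\mu,\nu}$, so that \texttt{CASE~2}'s prohibition of such components is precisely what leaves room for a second \me originating component there, whereas under \texttt{CASE~1} the single permitted \me originating component leaves room for the passing-component block.
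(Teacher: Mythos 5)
Your proposal is correct and follows exactly the paper's route: apply Lemma~\ref{le:cpp-embedded-to-sefe} to obtain an equivalent SEFE instance and then invoke Theorem~\ref{th:sefe}, with the \texttt{CASE~1}/\texttt{CASE~2} hypothesis serving precisely to cap at two the number of non-trivial blocks at the cut-vertices of $\Gr{}$ associated with the stars $A_{\mu,\nu}$. Your accounting of the extra non-trivial block contributed by a passing component through the frame gadget is in fact spelled out more carefully than in the paper's own (very terse) proof, and it correctly explains why the two cases are stated as a disjunction rather than as a single bound on $\cal S$.
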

\begin{proof}
	Given an instance $\langle \cgraph{}, \Gamma_A \rangle$ of {\sc C-Planarity with Embedded Pipes} by Lemma~\ref{le:cpp-embedded-to-sefe} we can construct in linear time an equivalent instance \sefeinstance{} of {SEFE} (whose size is hence linear in the size of $\cgraph{}$). Also, 
	\sefeinstance{} is such that $\Gint{}$ is a spanning forest, each cut-vertex of $\Gb{}$ is incident to at most one non-trivial block, and each cut-vertex of $\Gr{}$ is incident 
	either to exactly one non-trivial block (\texttt{CASE~1})
	or to at most two non-trivial blocks (\texttt{CASE~2}). 
	Hence, we can apply Theorem~\ref{th:sefe} to decide in $O(|\cgraph{}|^2)$ time whether \sefeinstance{} is a positive instance of {SEFE} (whether $\langle \cgraph{}, \Gamma_A \rangle$ is a positive instance of {\sc C-Planarity with Embedded Pipes}).
\end{proof}

\remove{
	\subsection{From Disconnected to Connected Instances}
	
	In the following we show that in order to solve the {\sc C-Planarity with Embedded Pipes} problem for general instances we can focus on instances such that $G$ is connected. Most notably, since our transformations transform an instance composed of $k$ connected components into $k$ separate instances in such a way that the starting instance is positive if an only if all subinstances are, the number of originating components in each subinstance is strictly smaller than the number of originating components in the starting instance.
	
	\begin{lemma}
		Let $\langle \cgraph{}, \Gamma_A\rangle$ be an instance of {\sc C-Planarity with Embedded Pipes}.
		It is possible to construct in polynomial-time an equivalent instance $\langle \cgraph{*} , \Gamma^*_A\rangle$ in which each cluster containing an originating component is incident to at most two other clusters.
	\end{lemma}
	
	\begin{proof}
		We construct an instance $\cgraph{'}$ from $\cgraph{}$ as follows. Let $\mu  \in \cal T$ be a cluster incident to at least three other clusters containing originating components $c_1,\dots,c_k$ that are incident to the same cluster $\nu \neq \mu \in \cal T$. Add to $\cal T$ a new cluster $\nu'$ as a child of the root and move each component $c_i$, with $i=1,\dots,k$ from cluster $\mu$ to cluster $\nu'$. Further, subdivide each inter-cluster edge $e$ with an endpoint in $\mu$ and an endpoint in $\nu$ with a dummy vertex $d_e$ which we assign to cluster $\nu'$. 
		Finally, initialize a drawing $\Gamma'_A$ of $A'$ as $\Gamma_A$. Replace the drawing of the pipe $(\mu,\nu)$ with a path $(\mu,\nu',\nu)$ whose drawing lies upon the drawing of pipe $(\mu,\nu)$ in $\Gamma_A$. Refer to Fig.~\ref{...}. 
		
		Observe that, instance $\langle \cgraph{'} , \Gamma'_A\rangle$ has less originating components belonging to a cluster that is incident to more that two other clusters than instance $\langle \cgraph{} , \Gamma_A\rangle$, hence repeating such a transformation eventually yields an instance $\langle \cgraph{*} , \Gamma^*_A\rangle$ satisfying the requirements of the lemma.
		
		We show that $\langle \cgraph{'} , \Gamma'_A \rangle$ is equivalent to $\langle \cgraph{} , \Gamma_A \rangle$.
		
		\red{ TODO!}
	\end{proof}

	\begin{lemma}
		Let $\langle \cgraph{}, \Gamma_A\rangle$ be an instance of {\sc C-Planarity with Embedded Pipes}.
		It is possible to construct in polynomial-time an equivalent instance $\langle \cgraph{'} , \Gamma'_A\rangle$ in which each cluster with no originating components either is incident to exactly two clusters or contains exactly one passing component, or to determine that $\langle \cgraph{} , \Gamma_A\rangle$ is a negative instance.
	\end{lemma}

	\begin{lemma}
		Let $\langle \cgraph{}, \Gamma_A\rangle$ be an instance of {\sc C-Planarity with Embedded Pipes}.
		It is possible to construct in polynomial-time an equivalent instance $\langle \cgraph{'} , \Gamma'_A\rangle$ such that $G'$ is connected.
	\end{lemma}
}

\section{C-Planarity with Pipes}\label{se:cpp-fpt}

In this section we introduce and study the {\sc C-Planarity with Pipes} problem.
A c-planar drawing $\Gamma$ of a flat c-graph $\cgraph{}$ is a {\em c-planar drawing with pipes} of $\cgraph{}$ if, for any two clusters $\mu, \nu \in \calT{}$ that are adjacent in $G_A$ and for any two inter-cluster edges $e_1$ and $e_2$ that are incident to both $\mu$ and $\nu$, one of the two regions delimited by $B(\mu)$, by $B(\nu)$, by $e_1$, and by $e_2$ does not contain any vertex of $G \setminus (\mu \cup \nu)$; two examples are given in Figs.~\ref{fig:example-parameters-2-a} and~\ref{fig:example-parameters-2-b}. The \cpp problem asks for the existence of a c-planar drawing with pipes of a given flat c-graph.

\begin{figure}[tb]
	\centering
	\subfigure[\label{fig:example-parameters-2-a}]{\includegraphics[width=.28\columnwidth,page=1]{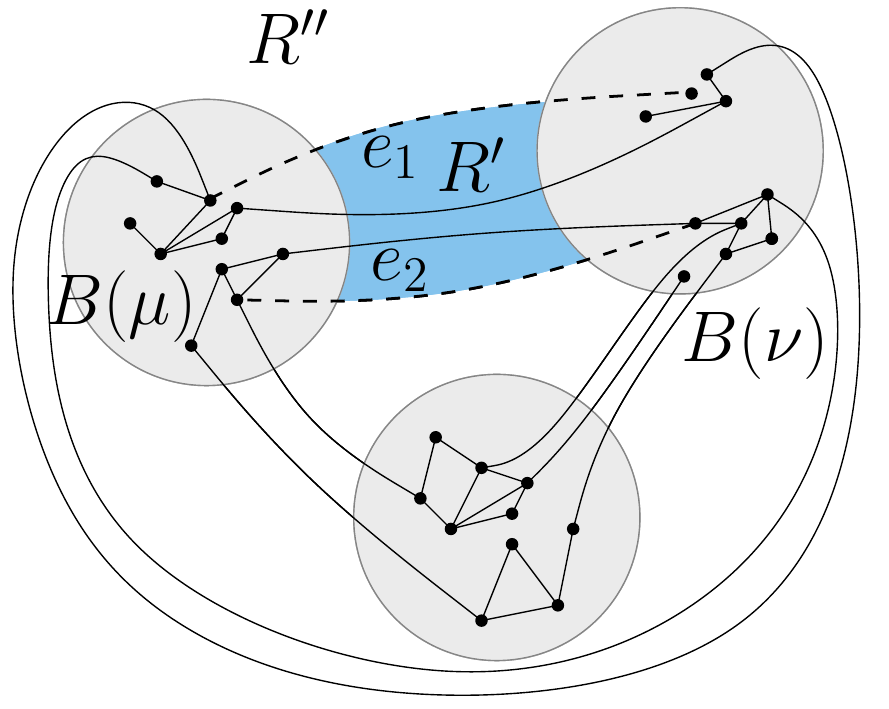}}
	\hfil
	\subfigure[\label{fig:example-parameters-2-b}]{\includegraphics[width=.28\columnwidth,page=2]{cp-with-pipes.pdf}}
	\caption{(a) A c-planar drawing with pipes $\Gamma'$. The two regions $R'$ (blue) and $R''$  delimited by $B(\mu)$, by $B(\nu)$, and by edges $e_1$ and $e_2$ (dashed), where region $R'$ does not contain any vertex of $G \setminus (\mu \cup \nu)$. (b) A c-planar drawing with pipes $\Gamma^*$ corresponding to drawing $\Gamma'$ in which inter-cluster edges are inside pipes.}
	\label{fig:example-parameters-2}
\end{figure}

Note that, if a c-graph \cgraph{} admits a c-planar drawing with pipes, then it is always possible to construct a drawing $\Gamma_A$ of its \ga $G_A$ in which vertices and edges are represented by disks and pipes, respectively, such that $\langle \cgraph{}, \Gamma_A$ is a positive instance of {\sc C-Planarity with Embedded Pipes}; Fig.~\ref{fig:example-parameters-2-b} shows a solution for the instance of {\sc C-Planarity with Embedded Pipes} determined by the c-planar drawing with pipes in Fig.~\ref{fig:example-parameters-2-a}. The following lemma proves that, by suitably augmenting the original c-graph \cgraph{}, it is possible to enforce that the resulting drawing $\Gamma_A$ of $G_A$ respects a specific embedding (if any solution determining a drawing respecting this embedding exists), which implies that {\sc C-Planarity with Pipes} is in fact a generalization of {\sc C-Planarity with Embedded Pipes}.

\begin{lemma}\label{le:cpp-embedded-generalized}
	{\sc C-Planarity with Embedded Pipes} reduces in linear time to {\sc C-Planarity with Pipes}. The reduction does not increase the number of \me components in any cluster.
\end{lemma}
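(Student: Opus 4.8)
The plan is to reduce an instance $\langle \cgraph{}, \Gamma_A \rangle$ of {\sc C-Planarity with Embedded Pipes} by \emph{rigidifying} the \ga: I will augment $\cgraph{}$ into a flat c-graph $\cgraph{'}$ whose \ga $G_A'$ is triconnected and whose (unique) planar embedding restricts on $G_A$ to the embedding of $\Gamma_A$. Then any drawing of $G_A'$ is forced to realize $\Gamma_A$ on $G_A$, so a c-planar drawing with pipes of $\cgraph{'}$ will encode a solution of $\langle \cgraph{}, \Gamma_A \rangle$, and conversely.

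\textbf{The augmentation.} View $G_A$ as the plane graph drawn by $\Gamma_A$; it is connected by our standing assumption. Apply a standard linear-time embedding-preserving augmentation to a triconnected plane graph $G_A'$: first add a few edges and vertices inside faces to reach biconnectivity, then insert in each face an apex vertex joined to all vertices of that face, subdividing where needed so that $G_A'$ stays simple; by construction the embedding of $G_A'$ restricted to $G_A$ is the embedding of $\Gamma_A$. Now read $G_A'$ back as a flat c-graph $\cgraph{'}$: every new vertex of $G_A'$ becomes a new singleton cluster, a child of the root; every edge of $G_A'$ between two new vertices becomes the inter-cluster edge between the two corresponding singleton clusters; and for every edge of $G_A'$ joining a new vertex $x$ (the only vertex of a new cluster) to an original cluster $\mu$, we add to $\mu$ a \emph{fresh} single-vertex component $\{w\}$ together with the inter-cluster edge $(w,x)$. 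Each component added to an original cluster is incident to exactly one inter-cluster edge, hence is \se; thus no original cluster acquires a \me component (and every new cluster has a single component), which is the second sentence of the lemma. As $G_A'$ is simple, every pair of clusters created by the augmentation is joined by at most one inter-cluster edge, and the construction is clearly linear.

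\textbf{Equivalence.} Given a c-planar drawing with embedded pipes of $\langle \cgraph{}, \Gamma_A \rangle$: its disks and pipes realize $\Gamma_A$, so each face $f$ of $G_A$ appears as a region not covered by any disk or pipe; place the disk of the apex cluster of $f$ inside that region, each fresh single-vertex component of an original cluster next to the arc of its disk facing the relevant face, and route the new (single) inter-cluster edges through the corresponding faces. This merely draws the plane graph $G_A'$, whose embedding extends that of $\Gamma_A$, so no crossings are created; the pipe condition is inherited from the given drawing on pairs of original clusters and is vacuous on every pair involving a new cluster (at most one edge). Conversely, if $\cgraph{'}$ admits a c-planar drawing with pipes, then by the observation made just before the lemma there is a drawing $\Gamma'_A$ of $G_A'$ with $\langle \cgraph{'}, \Gamma'_A \rangle$ a positive instance of {\sc C-Planarity with Embedded Pipes}; since $G_A'$ is triconnected, $\Gamma'_A$ realizes its unique embedding, which restricts on $G_A$ to the embedding of $\Gamma_A$. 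Taking a c-planar drawing with embedded pipes of $\langle \cgraph{'}, \Gamma'_A \rangle$ and deleting all new clusters, components, and edges yields a c-planar drawing of $\cgraph{}$ that still routes each inter-cluster edge inside its pipe and realizes the embedding of $\Gamma_A$ — i.e., a solution of $\langle \cgraph{}, \Gamma_A \rangle$. (As throughout the paper, a drawing of $G_A$ is identified with its combinatorial embedding.)

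\textbf{Main obstacle.} The crux is twofold. On the design side, one must realize $G_A'$ at the level of the c-graph without perturbing its component structure; the \emph{fresh single-vertex component} trick is what keeps the number of \me components in each original cluster unchanged. On the proof side, one must be sure that a c-planar drawing with pipes of the augmented instance genuinely pins down the embedding of $G_A$: this is where the observation preceding the lemma (turning a c-planar-with-pipes drawing into a c-planar-with-embedded-pipes one) combines with the uniqueness of the planar embedding of a triconnected graph. A minor point is carrying out the embedding-preserving triconnectivity augmentation when $G_A$ has cut-vertices or fewer than three vertices, which is handled by the biconnectivity preprocessing and the face-apex construction (with subdivisions to keep $G_A'$ simple), or trivially in the degenerate cases.
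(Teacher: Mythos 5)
Your proof follows essentially the same route as the paper's: rigidify the \ga by augmenting it to a triconnected planar graph whose unique embedding restricts on $G_A$ to that of $\Gamma_A$, realize each new adjacency by fresh single-vertex components joined by a single inter-cluster edge, and use triconnectivity (via the observation preceding the lemma) to pin down the pipe order in the backward direction. The one substantive difference is the augmentation primitive: the paper triangulates by adding \emph{edges only} between existing clusters (each new edge $(\mu,\nu)$ of $G'_A\setminus G_A$ realized by one fresh \se component in $\mu$ and one in $\nu$), so no new clusters are created and every added component is \se, which gives the second sentence of the lemma verbatim. Your face-apex construction instead creates new singleton clusters whose unique component is incident to all the (typically at least three) inter-cluster edges towards the face boundary, i.e.\ it is a \me passing component; so, contrary to what you assert, the second sentence holds only for the \emph{original} clusters. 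This is harmless for the intended application, since each new cluster has exactly one \me component and hence the parameters $K$ and $c$ of Corollary~\ref{th:algo-pipes-embedded} are unaffected, but you should either state this weaker guarantee explicitly or switch to the edge-only triangulation.
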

\begin{proof}
	Let $\langle \cgraph{}, \Gamma_A\rangle$ be an instance of {\sc C-Planarity with Embedded Pipes}, where  $\cgraph{}$ is a c-graph and $\Gamma_A$ is planar drawing of $G_A$. We construct an equivalent instance $\cgraph{*}$ of \cpp.
	
	First, we initialize $\cgraph{} = \cgraph{*}$. 
	Then, we augment $\cgraph{*}$ by adding a matching to $G^*$ in such a way that the \ga $G^*_A$ of  $\cgraph{*}$ is a triangulated planar graph.
	In order to do so, we consider a triangulated planar graph $G'_A$ obtained from $G_A$ by adding edges in such a way that the restriction of the unique combinatorial embedding of $G'_A$ to the edges of $G_A$ is the same as the combinatorial embedding of $G_A$ in $\Gamma_A$.
	For each new edge $e=(\mu,\nu)$ of $G'_A \setminus G_A$, we add to $\cgraph{*}$ a new vertex $\mu(e)$ to $\mu$ and a new vertex $\nu(e)$ to $\nu$, and an inter-cluster edge $(\mu(e),\nu(e))$.
	
	Clearly, the reduction can be performed in linear time and $G^*_A$ coincides with $G'_A$. Also, vertices $\mu(e)$ and $\nu(e)$ are \se components of $\mu$ and $\nu$, respectively, and thus the number of \me components remains the same.
	Further, since $G^*_A$ is triconnected, any c-planar drawing with pipes of $\cgraph{*}$ contains a c-planar drawing with pipes of $\cgraph{}$ in which the pipes appear in the desired order around each cluster.
	
	Finally, it is not difficult to see that any c-planar drawing with pipes $\Gamma$ of $\cgraph{}$ in which the order of the pipes incident to each cluster is the same as in $\Gamma_A$ can be extended to a c-planar drawing with pipes of $\cgraph{*}$ by drawing the edges in  $G'_A \setminus G_A$. In fact, for each of these edges $(\mu,\nu)$ there exists a region of $\Gamma$ delimited by a portion of $B(\mu)$ and a portion of $B(\nu)$ where this edge can be drawn, since there exists a face of $\Gamma_A$ to which both $\mu$ and $\nu$ are incident.
\end{proof}

In the remainder of the section we present an FPT algorithm for \cpp in two parameters, namely the maximum number $K$ of \me components in a cluster and the number $c$ of clusters with at least two \me components. Our result is based on a characterization of \cp of flat c-graphs in terms of a newly defined constrained embedding problem. 

\subsection{A Characterization of Flat C-Planarity}\label{se:cp-characterization}

We start with some definitions. Let $\cgraph{}$ be a flat c-graph and let $\mu$ be a cluster in $\calT{}$. A {\em \ctree} $X_\mu$ of $\mu$ is a rooted tree in which every internal vertex is a \me component $c$ of $\mu$ and in which every leaf $x_\mu(e)$ corresponds to an inter-cluster edge $e$ incident to one of such components. A {\em \nctree} $Y_\mu$ of $\mu$ is a rooted tree in which there exists an internal vertex $\nu$ for each cluster $\nu$ adjacent to $\mu$, plus a set of additional internal vertices, and in which every leaf $y_\mu(e)$ corresponds to an inter-cluster edge $e$ incident to $\mu$.
Let $\Gamma$ be a c-planar drawing of $\cgraph{}$, let $X_\mu$ be a \ctree of $\mu$ rooted at a \me component $\rho_\mu$, and let $Y_\mu$ be a \nctree of $\mu$ rooted at a cluster $\xi_\mu$, such that there exists an inter-cluster edge $e_\mu$ incident to both $\rho_\mu$ and $\xi_\mu$.
Let $\calO{}_\mu$ be the clockwise linear order in which the edges incident to $\mu$ traverse $B(\mu)$ in $\Gamma$, starting from and ending at $e_\mu$.
Drawing $\Gamma$ is {\em consistent with} $X_\mu$ if, for each vertex $c \in X_\mu$, the leaves of the subtree of $X_\mu$ rooted at $c$ are consecutive in the restriction of $\calO{}_\mu$ to the inter-cluster edges incident to \me components of $\mu$.
Also, $\Gamma$ is {\em consistent with} $Y_\mu$ if, for each vertex $\nu \in Y_\mu$, the leaves of the subtree of $Y_\mu$ rooted at $\nu$ are consecutive in $\calO{}_\mu$.
Let $\calX{}$ and $\calY{}$ be two sets containing a \ctree $X_\mu$ and a \nctree $Y_\mu$, respectively, for each $\mu$ in $\calT{}$. Drawing $\Gamma$ is {\em consistent with} $\langle \calX{},\calY{} \rangle$ if, for each $\mu \in \calT{}$, drawing $\Gamma$ is consistent with both $X_\mu$ and $Y_\mu$.

Given a flat c-graph $\cgraph{}$, together with two sets $\calX{}$ and $\calY{}$ of \ctrees and of \nctrees, respectively, for all the clusters in $\calT{}$, problem \iccp asks whether a c-planar drawing of $\cgraph{}$ exists that is consistent with $\langle \calX{}, \calY{} \rangle$.

\begin{theorem}\label{th:characterization}
	A flat c-graph $\cgraph{}$ is c-planar if and only if there exist two sets $\calX{}$ and $\calY{}$ of \ctrees and of \nctrees, respectively, for all the clusters in $\calT{}$, such that $\langle \cgraph{}, \calX{}, \calY{}\rangle$ is a positive instance of \iccp.
\end{theorem}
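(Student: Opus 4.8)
The plan is to prove the two directions of the biconditional separately, with the backward direction being essentially trivial and the forward direction requiring a construction of witnessing trees from a c-planar drawing. For the backward direction, suppose $\langle \cgraph{}, \calX{}, \calY{}\rangle$ is a positive instance of \iccp. Then by definition there exists a c-planar drawing $\Gamma$ of $\cgraph{}$ consistent with $\langle \calX{}, \calY{}\rangle$; in particular $\Gamma$ is a c-planar drawing of $\cgraph{}$, so $\cgraph{}$ is c-planar. This direction uses nothing beyond unfolding the definitions.

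For the forward direction, assume $\cgraph{}$ is c-planar and fix a c-planar drawing $\Gamma$. The goal is to exhibit, for every cluster $\mu \in \calT{}$, a \ctree $X_\mu$ and a \nctree $Y_\mu$ such that $\Gamma$ itself is consistent with the resulting pair $\langle \calX{}, \calY{}\rangle$; the positivity of the \iccp instance then follows by taking $\Gamma$ as the witness. Fix $\mu$. First I would record the relevant structural data read off from $\Gamma$: the clockwise cyclic order in which the inter-cluster edges incident to $\mu$ cross $B(\mu)$, and, for each \me component $c$ of $\mu$, the set of inter-cluster edges incident to $c$. The key geometric observation is that, because $c$ is connected and drawn inside the disk $R(\mu)$, the inter-cluster edges incident to $c$ leave $R(\mu)$ consecutively among all inter-cluster edges incident to \me components of $\mu$ (more precisely, one can route a closed curve around $c$ inside $R(\mu)$ that separates exactly the edge-stubs of $c$ from those of the other \me components). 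This gives a laminar family of subsets of the inter-cluster edges of \me components, indexed by the \me components, together with singleton leaves; a laminar family on a cyclic ground set, once we break the cycle at the distinguished edge $e_\mu$, is exactly a rooted tree — this yields $X_\mu$ with root the \me component $\rho_\mu$ carrying $e_\mu$. Analogously, partitioning the inter-cluster edges incident to $\mu$ by the cluster $\nu$ on their other end gives, in $\Gamma$, a partition into cyclically consecutive blocks (since the drawings of the other disks are disjoint topological disks), and refining this by nesting depth yields the \nctree $Y_\mu$ rooted at the cluster $\xi_\mu$ containing the other endpoint of $e_\mu$. By construction, the subtree-leaves of each node of $X_\mu$ (resp.\ $Y_\mu$) are consecutive in the restriction of $\calO{}_\mu$ to the appropriate edge set, so $\Gamma$ is consistent with both; doing this for all $\mu$ gives consistency with $\langle \calX{}, \calY{}\rangle$.

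The main obstacle is making the "consecutivity on a cyclic order comes from a rooted tree" step precise and checking that the resulting objects genuinely meet the definitions of \ctree and \nctree — in particular that every internal vertex of $X_\mu$ is a \me component, that every leaf corresponds to an inter-cluster edge, that $Y_\mu$ has an internal vertex for each adjacent cluster, and that the distinguished edge $e_\mu$ can be chosen to be incident to both a \me component (for $X_\mu$'s root) and to $\xi_\mu$. One subtlety is that a \me component might have an inter-cluster edge to a cluster $\nu$ while also the order around $B(\mu)$ interleaves it with edges of passing components; here one must be careful that consecutivity for $X_\mu$ is only required in the restriction of $\calO{}_\mu$ to edges incident to \me components, which is exactly what the definition asks. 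A second point is the choice of $\rho_\mu$ and $\xi_\mu$: since $\cgraph{}$ is assumed (WLOG, by the preliminaries) to have every component incident to an inter-cluster edge and $G_A$ connected, one can always select an inter-cluster edge $e_\mu$ incident to $\mu$, let $\rho_\mu$ be the component of $\mu$ it touches, and let $\xi_\mu$ be the cluster on its other side; if $\rho_\mu$ happens to be \se one can either allow \se roots as a degenerate case or subdivide/augment so that the root is \me — I would handle this by noting the trees are only used up to the consecutivity they enforce, and a \se root imposes no real constraint, so the statement is unaffected. Beyond these bookkeeping matters the argument is routine.
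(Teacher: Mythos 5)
Your overall strategy coincides with the paper's: the backward direction is a one-line unfolding of definitions, and the forward direction builds $X_\mu$ and $Y_\mu$ from the order $\calO{}_\mu$ in which the inter-cluster edges cross $B(\mu)$ in a fixed c-planar drawing $\Gamma$, rooted at a component/cluster carrying a distinguished edge $e_\mu$. However, your stated ``key geometric observation'' is too strong and, as written, false. You claim that the inter-cluster edges incident to a single \me component $c$ appear consecutively in (the restriction of) $\calO{}_\mu$. This fails as soon as another component $c'$ of $\mu$ is drawn inside the region bounded by $c$, by two of its edge-stubs and by an arc of $B(\mu)$: then an edge of $c'$ crosses $B(\mu)$ strictly between two edges of $c$. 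If your observation were true, the edge sets of the components would partition $\calO{}_\mu$ into disjoint arcs, the ``laminar family'' you invoke would be trivial, and every $X_\mu$ would be a star --- which contradicts your own later appeal to nesting depth for $Y_\mu$ and would make the tree machinery pointless. The same over-claim appears for $Y_\mu$ (``a partition into cyclically consecutive blocks'').

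The correct statement --- and the one the paper proves --- is the weaker non-interleaving property: c-planarity forbids four edges $e_1,e_2,e_3,e_4$ appearing in this order in $\calO{}_\mu$ with $e_1,e_3$ incident to one component and $e_2,e_4$ to another. This yields a nesting relation (``$c'$ is nested into $c''$'') that is acyclic and has $\rho_\mu$ on top because $\calO{}_\mu$ starts and ends at $e_\mu$; the Hasse diagram of this relation, with a leaf attached for each edge, is $X_\mu$, and the sets that are guaranteed to be consecutive are the leaf sets of entire subtrees (a component together with everything nested inside it) --- which is exactly what the definition of consistency requires. Your laminar-family-to-rooted-tree step is the right mechanism, but it must be fed these subtree leaf sets, not the per-component edge sets. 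Once the observation is corrected, the rest of your bookkeeping (the empty $X_\mu$ when $\mu$ has no \me component, choosing $e_\mu$ incident to a \me component so that the root is \me --- no subdivision or augmentation is needed) matches the paper's proof.
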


\begin{proof}
	One direction is trivial. Namely, if $\langle \cgraph{},\calX{}, \calY{} \rangle$ is a positive instance of \iccp, then $\cgraph{}$ admits a c-planar drawing (even one that is consistent with $\langle \calX{}, \calY{} \rangle$).
	
	We prove the other direction. Let $\Gamma$ be a c-planar drawing of $\cgraph{}$.
	Consider each cluster $\mu \in \calT{}$. 
	Suppose that there exists at least a \me component $\rho_\mu$ in $\mu$, as otherwise 
	$X_\mu$ is the empty tree and $\Gamma$ is trivially consistent with it.
	Let $e_\mu$ be any inter-cluster edge incident to $\rho_\mu$.
	Let $\calO{}_\mu$ be the clockwise linear order of the edges incident to $\mu$ starting from $e_\mu$ and ending at $e_\mu$. Also, let $\xi_\mu$ be the cluster different from $\mu$ to which $e_\mu$ is incident.
	Since $\Gamma$ is c-planar, there exist no four edges $e_1,e_2,e_3$, and $e_4$ appearing in this order in $\calO{}_\mu$ such that $e_1$ and $e_3$ are incident to a component $c'$ of $\mu$, and $e_2$ and $e_4$ are incident to a component $c''\neq c'$ of $\mu$.  
	Hence, for each two components $c'$ and $c''$ in $\mu$, order $\calO{}_\mu$ defines a unique ``inclusion'' hierarchy with respect to $\rho_\mu$. Namely, we say that $c'$ is {\em nested into} $c''$ if there exists three edges $e_1$, $e_2$, and $e_3$ appearing in this order in $\calO{}_\mu$ such that $e_1$ and $e_3$ are incident to $c''$, and $e_2$ is incident to $c'$. Refer to Fig.~\ref{fig:trees}(a).
	\begin{figure}[tb]
		\centering
		\subfigure[]{\includegraphics[width=.29\columnwidth,page=1]{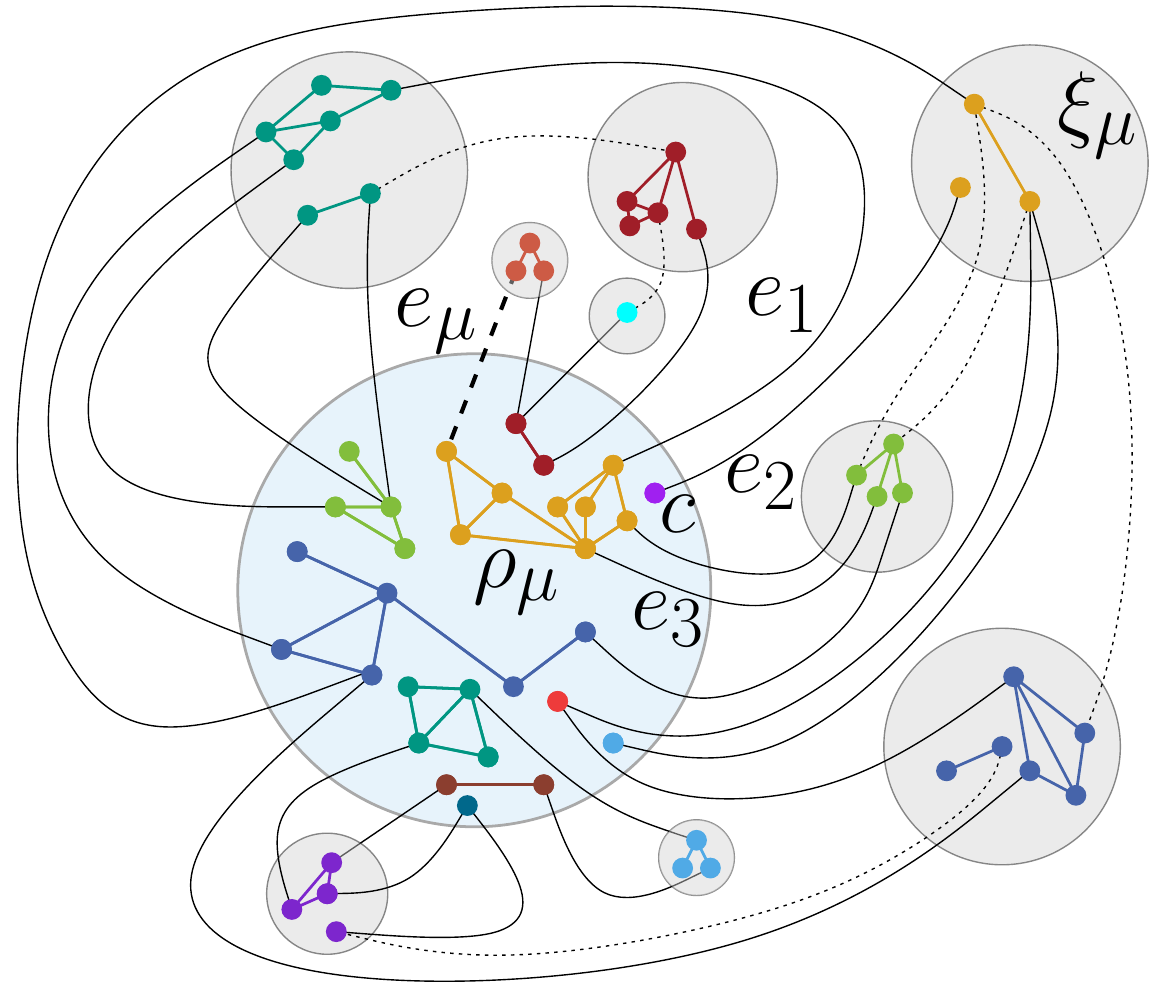}}
		\hfil
		\subfigure[]{\includegraphics[width=.29\columnwidth,page=1]{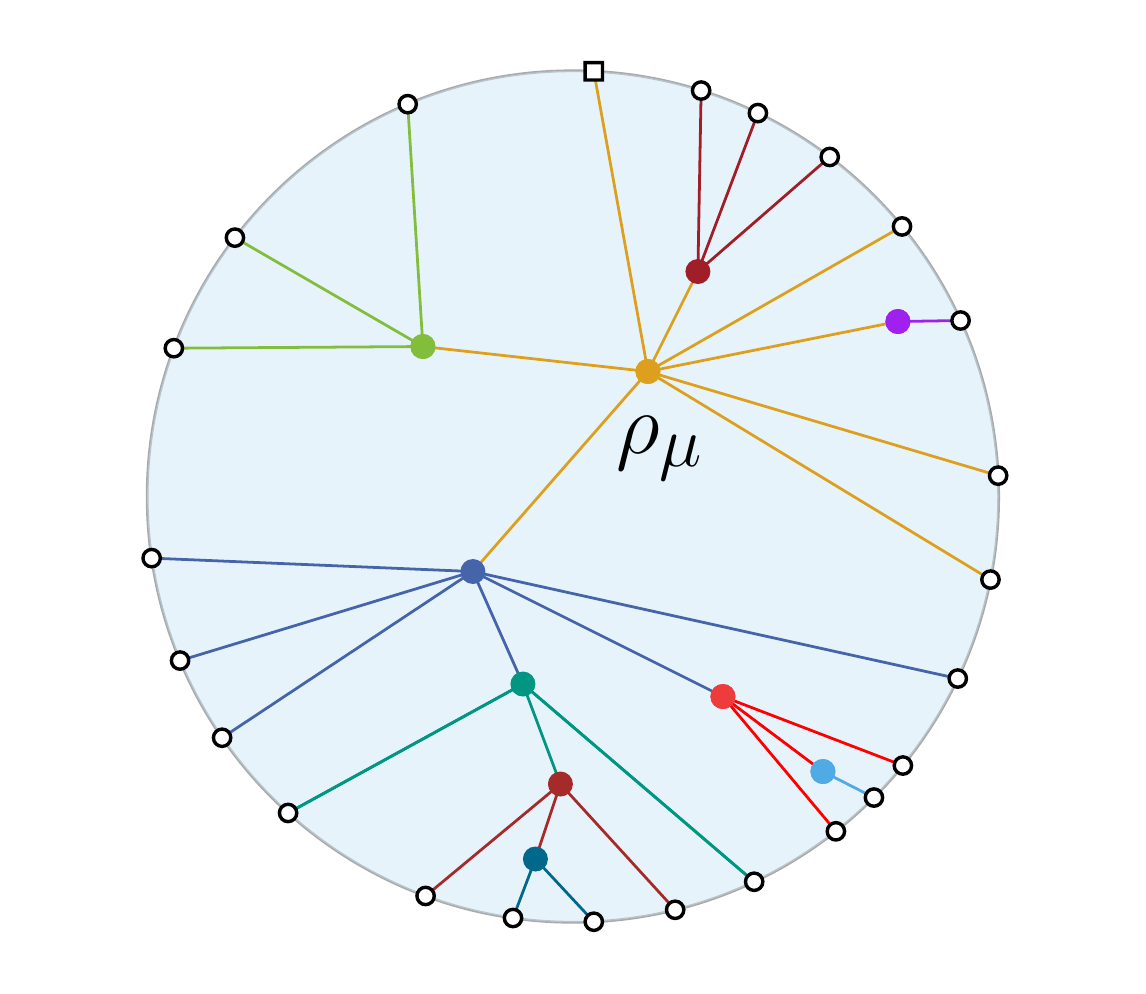}}
		\hfil
		\subfigure[]{\includegraphics[width=.29\columnwidth,page=2]{neighbor-clusters-tree.pdf}}
		\caption{(a) A c-planar drawing $\Gamma$ focused on cluster $\mu$. Edges incident to $\mu$ are solid. Component $c$ is nested into component $\rho_\mu$. Trees (b) $X_\mu$ and (c) $Y_\mu$ such that $\Gamma$ is consistent with $X_\mu$ and $Y_\mu$.}
		\label{fig:trees}
	\end{figure}

	Note that such a hierarchy is acyclic and that every component different from $\rho_\mu$ is nested into $\rho_\mu$, since $\calO{}_\mu$ start and ends at $e_\mu$. 
	We construct a tree $X_\mu$ rooted at $\rho_\mu$ in which every internal vertex is a \me component $c$ of $\mu$ and in which every leaf $x_\mu(e)$ corresponds to an inter-cluster edge $e$ incident to one of such components; refer to Figs.~\ref{fig:trees}(a) and~\ref{fig:trees}(b). There exists an edge $(x_\mu(e),c)$ if and only if edge $e$ is incident to a vertex of component $c \in X_\mu$. Also, there exists an edge $(c',c'')$ if component $c' \in X_\mu$ is nested into component $c'' \in X_\mu$ and there exists no other component $c^*\in X_\mu$ such that $c^*$ is nested into $c''$ and $c'$ is nested into $c^*$ in $\Gamma$.
	By construction, $X_\mu$ is a \ctree{} and $\Gamma$ is consistent with $X_\mu$.
	
	Similarly, order $\calO{}_\mu$ determines whether any two clusters adjacent to $\mu$ in $G_A$ are {\em nested} one into the other; this determines an acyclic hierarchy in which every cluster different from $\xi_\mu$ is nested into $\xi_\mu$. 
	We construct a tree $Y_\mu$ rooted at $\xi_\mu$ in which there exists an internal vertex $\nu$ for each cluster $\nu$ adjacent to $\mu$ in $G_A$ and in which every leaf $y_\mu(e)$ corresponds to an inter-cluster edge $e$ that is incident to $\mu$; refer to Figs.~\ref{fig:trees}(a) and~\ref{fig:trees}(c). There exists an edge $(y_\mu(e),\nu)$ if and only if edge $e$ is incident to a vertex of cluster $\nu \in Y_\mu$. Also, there exists an edge $(\nu',\nu'')$ if cluster $\nu' \in Y_\mu$ is nested into cluster $\nu'' \in Y_\mu$ and there exists no other cluster $\nu^*$ such that $\nu^*$ is nested into $\nu''$ and $\nu'$ is nested into $\nu^*$ in $\Gamma$.
	By construction, $Y_\mu$ is a \nctree{} and $\Gamma$ is consistent with $Y_\mu$.
\end{proof}

In the following theorem, whose proof is deferred to Section~\ref{se:inc-constr-cp}, we show that the \iccp problem can be solved efficiently.

\begin{theorem}\label{th:algo-inc-constr-cp}
	{\sc Inclusion-Constrained C-Planarity} can be solved in quadratic time. 
\end{theorem}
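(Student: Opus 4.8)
The plan is to reduce \iccp to the SEFE problem using a construction similar in spirit to the one in the proof of Lemma~\ref{le:cpp-embedded-to-sefe}, and then invoke Theorem~\ref{th:sefe}. Recall that a positive instance of \iccp requires a c-planar drawing of $\cgraph{}$ in which, for each cluster $\mu$, the clockwise order $\calO{}_\mu$ of inter-cluster edges around $B(\mu)$ is \emph{consistent} with the prescribed \ctree $X_\mu$ and \nctree $Y_\mu$. The two trees impose consecutivity constraints on $\calO{}_\mu$; the key observation is that any rooted tree (viewed as a hierarchical/laminar family on its leaves) can be realized combinatorially by a planar ``expansion'' graph whose leaves must lie on a common face in the prescribed nested order. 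This is exactly the mechanism provided by the representative graph $G_{\mathcal{T}_c}$ of a PQ-tree used in Lemma~\ref{le:single-source} (from~\cite{FengCE95}); here we only need the special case of a fixed tree rather than a PQ-tree, so the gadget consists of nested cycles that enforce the laminar family given by $X_\mu$ (resp.\ $Y_\mu$).

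Concretely, first I would build the \emph{frame gadget}: for each cluster $\mu$ a \emph{disk cycle} and for each edge $(\mu,\nu)$ of $G_A$ a \emph{pipe cycle}, made triconnected by adding an apex vertex $v_{out}$ and triangulating, exactly as in Lemma~\ref{le:cpp-embedded-to-sefe}; this is legitimate because \iccp is stated for \emph{flat} c-graphs and, by Theorem~\ref{th:characterization} combined with Lemma~\ref{le:cpp-embedded-generalized}, one may assume $G_A$ has a fixed triangulated embedding (or, alternatively, one treats the rotation of pipes around each cluster as free and encodes only the tree constraints — I would take the former route for simplicity). Then, inside each cluster's disk region I place the components of $\mu$ and the tree gadget realizing $X_\mu$: a system of nested cycles whose innermost/degree-one attachment vertices are in one-to-one correspondence with the leaves $x_\mu(e)$, connected through the cycle hierarchy so that any planar embedding with all leaves on one face lists them in an order consistent with $X_\mu$. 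Symmetrically, between the disk cycle of $\mu$ and the pipe cycles I place a tree gadget realizing $Y_\mu$ on the leaves $y_\mu(e)$. Finally, for each inter-cluster edge $e$ I wire, as in Lemma~\ref{le:cpp-embedded-to-sefe}, a red/blue path linking $x_\mu(e)$, $y_\mu(e)$, the corresponding pipe-crossing edges, and the matching leaves on the $\nu$ side, and I add blue edges $(v,z(c_i))$ tying each inter-cluster-incident vertex $v$ to its component marker. All cycles of $\Gint{}$ are then removed via the edge-replacement gadget of Claim~\ref{cl:cycle-removal}, so that $\Gint{}$ becomes a spanning forest.

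The verification splits into two parts. For the \emph{correctness of the reduction} I argue, mirroring the equivalence proof of Lemma~\ref{le:cpp-embedded-to-sefe}, that a SEFE of the constructed instance yields a c-planar drawing of $\cgraph{}$ in which the order $\calO{}_\mu$ is forced (by the tree gadgets, which sit in the interiors of the disk and pipe cycles) to be consistent with $X_\mu$ and $Y_\mu$; conversely a drawing consistent with $\langle\calX{},\calY{}\rangle$ can be traced back to a SEFE by drawing the gadgets so their leaves follow $\calO{}_\mu$. For the \emph{running time} I check the structural hypotheses of Theorem~\ref{th:sefe}: $\Gint{}$ is a forest by the cycle-removal step; every cut-vertex of $\Gb{}$ and of $\Gr{}$ is incident to at most two non-trivial blocks. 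The last point is where the \iccp formulation pays off: because $X_\mu$ is a \emph{tree}, each component $c$ appears exactly once and the block structure around each attachment point is controlled — a cut-vertex of $\Gr{}$ is incident to the frame block plus at most the single block coming from the nested-cycle gadget it belongs to, and similarly for $\Gb{}$ the block of a component $c_i$ plus the star $C_\mu$. Since all gadgets have size linear in the number of inter-cluster edges and clusters, the whole instance has size $O(n)$ and Theorem~\ref{th:sefe} runs in $O(n^2)$.

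The main obstacle I expect is the bookkeeping needed to guarantee condition (ii)/(iii) of Theorem~\ref{th:sefe} — that \emph{no} cut-vertex of $\Gr{}$ or $\Gb{}$ ends up incident to three or more non-trivial blocks after inserting the two tree gadgets per cluster plus the per-edge wiring. Unlike in Lemma~\ref{le:cpp-embedded-to-sefe}, where the bound $\mathcal{S}$ on \me originating components per pipe had to be assumed, here the tree structure of $X_\mu$ and $Y_\mu$ must be exploited to show the bound is automatically at most two: this requires choosing the nested-cycle realization of a rooted tree so that each cycle shares vertices with at most its parent cycle and its children are attached along \emph{distinct} non-cut vertices, and checking that the red path of an inter-cluster edge meets each cycle in a trivial (single-edge) block. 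Making these incidences precise, and confirming the cycle-removal gadget of Claim~\ref{cl:cycle-removal} does not spoil them, is the delicate part; the rest follows the template already established.
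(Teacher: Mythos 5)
There is a genuine gap, and it sits exactly where you place your main structural choice: you build the frame gadget of Lemma~\ref{le:cpp-embedded-to-sefe} (disk cycles, pipe cycles, apex $v_{out}$, triangulation), which presupposes a \emph{fixed planar embedding of $G_A$}. An instance of \iccp does not come with one. The \nctree $Y_\mu$ only imposes consecutivity (laminar) constraints on $\calO{}_\mu$; the children of each internal node of $Y_\mu$ may be permuted freely, so $Y_\mu$ does not determine the cyclic order of the neighboring clusters around $\mu$, let alone an embedding of $G_A$. Your justification via Lemma~\ref{le:cpp-embedded-generalized} does not work: that lemma reduces {\sc C-Planarity with Embedded Pipes} \emph{to} {\sc C-Planarity with Pipes}, i.e.\ it shows how to \emph{enforce} an embedding by augmentation, not that one may be assumed without loss of generality. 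If you fix an arbitrary triangulated embedding of $G_A$ and rigidify it with a triconnected frame, you are testing a strictly more constrained problem, and a positive \iccp instance can map to a negative SEFE instance (e.g.\ a star-shaped $G_A$ where only some cyclic orders of the neighbors of the center are realizable by $G$). Since Theorem~\ref{th:characterization} characterizes \emph{all} of flat \cp via \iccp, and Theorem~\ref{th:fpt-cplanarity-pipes} only enumerates \ctrees (never embeddings of $G_A$), completeness of the reduction cannot be sacrificed here. The paper's construction avoids this entirely: its frame is a single cycle $C=(\sigma_{\mu_1},\dots,\sigma_{\mu_k},\sigma^*)$ with one blue pendant edge per cluster gadget, which only forces the gadgets to lie outside one another; the global arrangement is left free and is governed by the planarity of the red edges $\red{(y_\mu(e),y_\nu(e))}$ joining the leaves of the trees $Y_\mu$ and $Y_\nu$ across clusters (this is Schaefer's reduction scheme, adapted).

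A secondary, more repairable difference: you realize $X_\mu$ and $Y_\mu$ as nested-cycle ``representative graphs'' \`a la~\cite{FengCE95}, whereas the paper inserts the trees $X_\mu$ and $Y_\mu$ themselves into $\Gint{}$ and transmits the order between them through a chain of stars $X_\mu \to B_\mu \to A_\mu \to A'_\mu \to B'_\mu \to Y_\mu$ coupled by alternating red/blue perfect matchings on their leaves, anchored at fixed positions $\delta_\mu,\alpha_\mu,\beta_\mu,\gamma_\mu,\delta'_\mu,\alpha'_\mu,\beta'_\mu$ on the rim of a wheel $W_\mu$. Keeping the trees in the common graph is what makes the cut-vertex accounting go through cleanly (each internal vertex of $X_\mu$ or $Y_\mu$ is incident to at most one non-trivial block besides trivial ones), and it adds no extra cycles to $\Gint{}$ beyond the wheels and the frame, which Claim~\ref{cl:cycle-removal} then removes. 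Your nested-cycles variant would need a separate argument that the extra cycles and their attachment points do not create cut-vertices with three non-trivial blocks; you correctly flag this as the delicate point but do not resolve it.
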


In the following section we prove that, for each cluster $\mu$ of a c-graph \cgraph{}, there exists a unique \nctree $Y_\mu$ such that every c-planar drawing with pipes of \cgraph{} is consistent with $Y_\mu$.
Hence, an FPT algorithm for \cpp can be based on generating, for each cluster, all the possible \ctrees and its unique \nctree, and on testing the corresponding instances of \iccp by Theorem~\ref{th:algo-inc-constr-cp}.

\subsection{Neighbor-clusters Trees in C-Planar Drawings with Pipes}\label{sse:nctrees-cpp}

In the following theorem we give a characterization of the c-graphs that are positive instances of \cpp based on the possible orders of inter-cluster edges around each cluster in any c-planar drawing. We first consider only c-graphs whose \ga $G_A$ has no trivial blocks; however, we prove later that this is not a restriction.

\begin{theorem}\label{th:characterization-cpp}
	Let $\cgraph{}$ be a flat c-graph such that $G_A$ has no trivial block.
	Then, $\cgraph{}$ is a positive instance of \cpp if and only if $\cgraph{}$ admits a c-planar drawing $\Gamma$ in which, for each cluster $\mu \in \calT{}$, the inter-cluster edges between $\mu$ and any cluster $\nu$ adjacent to $\mu$ in $G_A$ are consecutive in the order in which the inter-cluster edges incident to $\mu$ cross $B(\mu)$ in $\Gamma$.
\end{theorem}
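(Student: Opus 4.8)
The plan is to prove the two directions separately, with the nontrivial content lying in the ``only if'' direction, where we start from an arbitrary c-planar drawing with pipes $\Gamma^*$ and must modify it into a c-planar drawing $\Gamma$ in which, around \emph{every} cluster $\mu$, the inter-cluster edges going to a fixed neighbor $\nu$ form a consecutive block. For the ``if'' direction, suppose $\Gamma$ is a c-planar drawing with the stated consecutivity property. For each edge $(\mu,\nu)$ of $G_A$, the inter-cluster edges between $\mu$ and $\nu$ cross $B(\mu)$ consecutively and cross $B(\nu)$ consecutively; I would argue that one can then route a simple pipe (a topological disk) for $(\mu,\nu)$ enclosing exactly these edges between the two consecutive boundary arcs, and that pipes for distinct edges of $G_A$ can be made disjoint. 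The key point is that the cyclic order of neighbors of $\mu$ induced by these consecutive blocks around $B(\mu)$ is consistent with a planar embedding of $G_A$ (it is a planar rotation system because $\Gamma$ is planar and $G_A$ has no trivial blocks, so this cyclic order is well defined and realizable), so drawing $G_A$ with vertices as fattened copies of the $B(\mu)$'s and edges as the pipes yields a valid $\Gamma_A$; one then checks that the ``one side contains no vertex of $G\setminus(\mu\cup\nu)$'' condition is inherited. This gives a c-planar drawing with pipes.

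For the ``only if'' direction, let $\Gamma^*$ be a c-planar drawing with pipes, and fix a cluster $\mu$ and a neighbor $\nu$ with inter-cluster edges $e_1,\dots,e_k$ between them. First I would record the structural consequence of the pipe condition: by definition, for any two of these edges $e_i,e_j$, one of the two regions bounded by $B(\mu),B(\nu),e_i,e_j$ contains no vertex of $G\setminus(\mu\cup\nu)$. The goal is to deduce that, after a modification that preserves c-planarity, the $e_i$'s are consecutive along $B(\mu)$. The obstruction to consecutivity is that some edge $f$ going from $\mu$ to a \emph{different} cluster $\rho$, or some component of $\mu$, might be ``interleaved'' with the block $e_1,\dots,e_k$ on $B(\mu)$. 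The plan is: restrict attention to the planar region $P$ swept out between $B(\mu)$ and $B(\nu)$ that is ``spanned'' by $e_1,\dots,e_k$ together with the relevant arcs of $B(\mu)$ and $B(\nu)$; the pipe condition says each elementary cell of this region (bounded by consecutive $e_i$'s) is empty of vertices of other clusters. Any edge $f$ from $\mu$ to $\rho\neq\nu$ whose endpoint on $B(\mu)$ lies strictly between two consecutive $e_i$'s must immediately leave that empty cell, hence cross some $e_j$ or re-cross $B(\mu)$; by planarity it cannot cross $e_j$, so I would argue $f$ can be rerouted close to $B(\mu)$ to exit the block $e_1,\dots,e_k$ without creating crossings, and similarly for components of $\mu$ attached between consecutive $e_i$'s. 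Iterating this rerouting over all offending edges/components, and over all pairs $(\mu,\nu)$, while arguing that fixing one pair does not destroy the consecutivity already achieved at other pairs (because the rerouting is local to a thin neighborhood of $B(\mu)$ and never moves anything across an inter-cluster edge), yields the desired drawing $\Gamma$.

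The main obstacle is precisely this last point: controlling the interaction between simultaneous consecutivity requirements at many clusters, i.e.\ making the iterative rerouting terminate and be globally consistent. The clean way to handle it is probably to phrase the target as a single combinatorial statement: $\Gamma^*$ induces, for each $\mu$, a cyclic order $\calO{}_\mu$ on its inter-cluster edges, and the pipe condition forces the ``neighbor partition'' of $\calO{}_\mu$ to be such that merging each neighbor-block into a single element yields a cyclic order that is realizable as the rotation at the vertex $\mu$ in \emph{some} planar embedding of $G_A$ — in fact exactly the one given by the pipes. Then a drawing $\Gamma$ realizing the consecutivity exists iff one can realize simultaneously, for every $\mu$, \emph{some} linear/cyclic order refining this neighbor-partitioned order, subject only to the c-planarity of the underlying graph $G$ with its clustering. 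I would package this using the \iccp machinery already set up: define, for each $\mu$, the unique \nctree $Y_\mu$ whose internal nodes are the neighbors of $\mu$ with no further nesting (a ``flat'' tree), observe that the pipe condition is \emph{equivalent} to consistency of $\Gamma^*$ with all these $Y_\mu$, and invoke Theorem~\ref{th:characterization} together with the observation that a drawing consistent with $\langle \calX{},\calY{}\rangle$ (for the flat $Y_\mu$'s and any valid $X_\mu$'s) has exactly the stated consecutivity property. This reduces the topological rerouting argument to the already-proved existence of consistent drawings, and isolates the genuinely new content to the (short) verification that ``c-planar with pipes'' $\iff$ ``consistent with the flat $Y_\mu$'s'', which is essentially the definition unwound plus the planarity of $G_A$.
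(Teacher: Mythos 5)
You have the difficulty of this equivalence exactly inverted, and as a result the proposal spends its effort on the easy direction while hand-waving the one step where the hypothesis ``$G_A$ has no trivial block'' actually does work. The direction ``positive instance of \cpp $\Rightarrow$ a drawing with the consecutivity property exists'' requires no modification of the given drawing $\Gamma^*$ at all: if the $\mu$--$\nu$ edges were not consecutive along $B(\mu)$, there would be two of them, $e_1$ and $e_2$, and two inter-cluster edges $f,g$ of $\mu$ not going to $\nu$ whose crossings with $B(\mu)$ alternate with those of $e_1,e_2$; each of $f$ and $g$ enters one of the two regions bounded by $B(\mu)$, $B(\nu)$, $e_1$, $e_2$ and is trapped there (it cannot cross $e_1$ or $e_2$ by planarity, cannot cross $B(\nu)$ or re-cross $B(\mu)$ by c-planarity), so both regions contain a vertex of $G\setminus(\mu\cup\nu)$, contradicting the pipe condition for the pair $e_1,e_2$. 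Your proposed iterative rerouting is therefore unnecessary, and as sketched it is also not clearly sound: the endpoint of an ``offending'' edge $f$ in its cluster $\rho$ is fixed, so you cannot simply slide its crossing point with $B(\mu)$ out of the block without a global argument that this is topologically possible.

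The genuine content is the converse: a c-planar drawing $\Gamma$ with the consecutivity property is \emph{already} a c-planar drawing with pipes. This is exactly the step you dismiss with ``one then checks that the one-side-empty condition is inherited,'' and it is false without the no-trivial-block hypothesis (the paper's Fig.~\ref{fig:bridge} gives a counterexample where $(\mu,\nu)$ is a bridge of $G_A$). The paper's argument is short: if both regions determined by $B(\mu)$, $B(\nu)$, $e_1$, $e_2$ contained vertices of $G\setminus(\mu\cup\nu)$, then consecutivity forces that no cluster adjacent to $\mu$ in $G_A$ lies in one region while another lies in the other; hence every cluster in one region reaches $\mu$ in $G_A$ only through $\nu$ and every cluster in the other reaches $\nu$ only through $\mu$, so $(\mu,\nu)$ would be a trivial block of $G_A$ --- a contradiction. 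You invoke the no-trivial-block hypothesis only for realizability of a rotation system of $G_A$, which is not where it is needed. Finally, your suggestion to derive the theorem from the flat \nctrees{} and the \iccp machinery is backwards relative to the paper's logic: Corollary~\ref{co:cpp-characterization-y} is \emph{deduced from} this theorem, so using it here would be circular.
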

\begin{proof}
	One direction is trivial, since any c-planar drawing with pipes of $\cgraph{}$ is a c-planar drawing satisfying the conditions of the theorem.
	
	Suppose that $\cgraph{}$ admits a c-planar drawing $\Gamma$ satisfying the conditions of the theorem. We prove that $\Gamma$ is a c-planar drawing with pipes of $\cgraph{}$. Assume for a contradiction that this is not the case, that is, there exist two clusters $\mu, \nu \in \calT{}$ that are adjacent in $G_A$ and two inter-cluster edges $e_1$ and $e_2$ that are incident to both $\mu$ and $\nu$, such that both the regions delimited by $B(\mu)$, by $B(\nu)$, by $e_1$, and by $e_2$ in $\Gamma$ contain at least a vertex of $G \setminus (\mu \cup \nu)$; see Fig.~\ref{fig:bridge}.
	
	\begin{figure}
		\centering
		\includegraphics{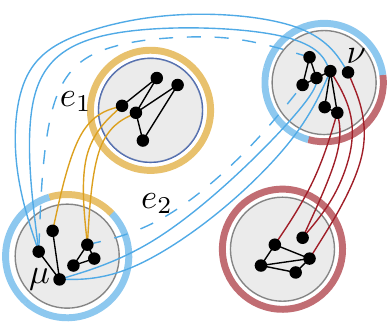}
		\caption{A c-planar drawing that is not a c-planar drawing with pipes, even if the inter-cluster edges incident to the same cluster are consecutive (see the annuli around clusters), due to the presence of trivial block $(\mu,\nu)$.}
		\label{fig:bridge}
	\end{figure}
	
	Note that, if there exists a cluster that is adjacent to $\mu$ (to $\nu$) in $G_A$ in the interior of one of the two regions, then there exists no other cluster that is adjacent to $\mu$ (to $\nu$) in $G_A$ in the interior of the other region, as otherwise the edges between $\mu$ and $\nu$ would not be consecutive around $B(\mu)$ (around $B(\nu)$).
	Hence, for every cluster lying in the interior of one of the regions, all the paths in $G_A$ connecting it to $\mu$ pass through $\nu$; also, for every cluster lying in the interior of the other region, all the paths in $G_A$ connecting it to $\nu$ pass through $\mu$. Therefore, $(\mu,\nu)$ is a trivial block of $G_A$, a contradiction.
\end{proof}

We exploit Theorem~\ref{th:characterization-cpp} to construct a \nctree $Y^\circ_\mu$ of each cluster $\mu \in \calT{}$ such that any c-planar drawing with pipes of $\cgraph{}$ is consistent with $Y^\circ_\mu$.
Tree $Y^\circ_\mu$  is rooted at a vertex $\omega_\mu$. There exists a child $\nu$ of $\omega_\mu$ for each cluster $\nu$ adjacent to $\mu$, having a leaf $y_\mu(e)$ for each inter-cluster edge $e$ incident to $\mu$ and to $\nu$.
We call $Y^\circ_\mu$ the {\em \enctree} of $\mu$.
Theorem~\ref{th:characterization-cpp} and the construction of $Y^\circ_\mu$, for each cluster $\mu \in \calT{}$, imply the following.

\begin{corollary}\label{co:cpp-characterization-y}
	Let $\cgraph{}$ be a c-graph whose \ga has no trivial blocks.
	Then, $\cgraph{}$ 
	admits a c-planar drawing with pipes
	if and only if $\cgraph{}$ admits a c-planar drawing $\Gamma$ in which, for each $\mu \in \calT{}$, drawing $\Gamma$ is consistent with~$Y^\circ_\mu$.
\end{corollary}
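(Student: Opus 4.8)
The plan is to obtain Corollary~\ref{co:cpp-characterization-y} directly from Theorem~\ref{th:characterization-cpp}, the only work being to recognize that, for a fixed c-planar drawing $\Gamma$ and a fixed cluster $\mu\in\calT{}$, the statement ``$\Gamma$ is consistent with the \enctree $Y^\circ_\mu$'' is the same as ``for every cluster $\nu$ adjacent to $\mu$ in $G_A$, the inter-cluster edges between $\mu$ and $\nu$ are consecutive in the order in which the inter-cluster edges incident to $\mu$ cross $B(\mu)$ in $\Gamma$'', i.e.\ exactly the condition appearing in Theorem~\ref{th:characterization-cpp}. First I would observe why this holds: by construction $Y^\circ_\mu$ has depth two, its root $\omega_\mu$ being a dummy vertex whose children are precisely the clusters $\nu$ adjacent to $\mu$, and the children of each such $\nu$ being precisely the leaves $y_\mu(e)$, one for every inter-cluster edge $e$ incident to both $\mu$ and $\nu$. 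Hence the consecutivity constraint that $Y^\circ_\mu$ places on $\calO{}_\mu$ at the root $\omega_\mu$ is vacuous (its subtree contains all leaves), and those at the leaves are vacuous as well, so consistency with $Y^\circ_\mu$ reduces to the constraints at the internal vertices $\nu$: for each $\nu$, the leaves of the subtree of $Y^\circ_\mu$ rooted at $\nu$ --- equivalently, the inter-cluster edges between $\mu$ and $\nu$ --- must be consecutive in $\calO{}_\mu$.

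Once that equivalence is in place, the corollary follows from two one-line implications, both using Theorem~\ref{th:characterization-cpp}. If $\cgraph{}$ admits a c-planar drawing with pipes, then it is a positive instance of \cpp, so by Theorem~\ref{th:characterization-cpp} it admits a c-planar drawing $\Gamma$ satisfying the consecutivity condition, and by the equivalence above $\Gamma$ is consistent with $Y^\circ_\mu$ for every $\mu\in\calT{}$. Conversely, if $\cgraph{}$ admits a c-planar drawing $\Gamma$ consistent with $Y^\circ_\mu$ for every $\mu\in\calT{}$, then $\Gamma$ satisfies the consecutivity condition of Theorem~\ref{th:characterization-cpp}; since the proof of that theorem shows that such a $\Gamma$ is itself a c-planar drawing with pipes, $\cgraph{}$ admits a c-planar drawing with pipes. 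This closes the cycle.

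The substance of the result has therefore already been spent in Theorem~\ref{th:characterization-cpp}, so I do not expect a genuinely hard step. The one place calling for a bit of care --- which I would isolate in a short preliminary remark --- is the bookkeeping between the cyclic order of inter-cluster edges around $B(\mu)$ and the linear order $\calO{}_\mu$ used in the definition of consistency: since the root $\omega_\mu$ of $Y^\circ_\mu$ is a dummy vertex rather than a cluster, I need to point out that cutting the cyclic order at any inter-cluster edge incident to $\mu$ (serving as the anchor $e_\mu$, listed at both ends of $\calO{}_\mu$ as in the definition) yields the same notion of block-consecutivity, so that ``each neighbour's bundle of inter-cluster edges forms a contiguous arc around $B(\mu)$'' is well defined and is exactly what $Y^\circ_\mu$ encodes.
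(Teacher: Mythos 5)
Your proposal is correct and matches the paper's (implicit) argument: the paper states the corollary follows directly from Theorem~\ref{th:characterization-cpp} together with the construction of $Y^\circ_\mu$, which is exactly the equivalence you spell out between consistency with the depth-two tree $Y^\circ_\mu$ and the per-neighbour consecutivity condition of the theorem. Your extra remark on the anchoring of the linear order $\calO{}_\mu$ at $e_\mu$ is a reasonable piece of care that the paper leaves tacit, and nothing further is needed.
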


Corollary~\ref{co:cpp-characterization-y} allows us to reduce the problem of testing \cpp for a c-graph whose \ga $G_A$ has no trivial blocks to that of testing \iccp, where the role played by the \nctrees is now taken by the \enctrees. 
%
Next, we overcome the requirement that $G_A$ has no trivial block.

\begin{lemma}\label{le:cpp-remove-bridges}
	Let $\cgraph{}$ be an instance of \cpp in which $G_A$ contains trivial blocks. 
	It is possible to construct in linear time an equivalent instance $\cgraph{*}$ of \cpp in which $G^*_A$ has no trivial block. Further, $K_*=K$ and $c_*=c$, where $K$ ($K_*$) is the maximum number of \me components in a cluster of $\cgraph{}$ (of $\cgraph{*}$) and $c$ ($c_*$) is the number of clusters of $\cgraph{}$ (of $\cgraph{*}$) with at least two \me components.
\end{lemma}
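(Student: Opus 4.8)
Proof proposal.

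The plan is to destroy every trivial block of $G_A$ by routing, for each of them, a ``parallel'' connection through a fresh cluster that carries no \me component. Concretely, I would first list all trivial blocks (bridges) of $G_A$ in linear time; then, for each bridge $(\mu,\nu)$ of $G_A$, I augment $\cgraph{}$ by adding a new cluster $\lambda$ consisting of two isolated vertices $a$ and $b$ (so $\lambda$ has two distinct single-vertex components), an isolated vertex $v_\mu$ in $\mu$, an isolated vertex $v_\nu$ in $\nu$, and the two inter-cluster edges $(v_\mu,a)$ and $(b,v_\nu)$; let $\cgraph{*}$ be the resulting flat c-graph. In $G^*_A$ this replaces the bridge $(\mu,\nu)$ by the triangle $\mu\nu\lambda$, so after processing every bridge of $G_A$ none of the edges of $G^*_A$ is a bridge (the edges of the added triangles lie on cycles, and adding edges never turns a non-bridge into a bridge), hence $G^*_A$ has no trivial block; it is connected since $G_A$ is. The construction is clearly linear, and each of $v_\mu,v_\nu,a,b$ is a single-vertex component incident to exactly one inter-cluster edge, hence \se; thus no cluster of $\cgraph{*}$ gains a \me component and the new clusters have none, so $K_*=K$ and $c_*=c$.

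For the equivalence, the direction ``$\cgraph{*}$ positive $\Rightarrow$ $\cgraph{}$ positive'' is immediate: from a c-planar drawing with pipes of $\cgraph{*}$, delete the disks $B(\lambda)$, the vertices $a,b,v_\mu,v_\nu$ and their incident edges; deleting vertices and edges can only shrink the regions that the definition of a c-planar drawing with pipes asks to be empty, and makes the inter-cluster edge set of each surviving pair of clusters a subset of the original one, so the restriction is a c-planar drawing with pipes of $\cgraph{}$.

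The content is in the converse. Let $\Gamma$ be a c-planar drawing with pipes of $\cgraph{}$. I first record two facts that hold in any c-planar drawing with pipes: (i) for every edge $(\mu,\nu)$ of $G_A$ the inter-cluster edges between $\mu$ and $\nu$ are consecutive in the rotation around $B(\mu)$ and around $B(\nu)$ (cf.\ the trivial direction of Theorem~\ref{th:characterization-cpp}, which does not use the no-trivial-block hypothesis); and (ii) consequently, for any two of these edges $e_i,e_j$, the region they delimit together with $B(\mu)$ and $B(\nu)$ that lies between the two disks — and through which only $\mu$–$\nu$ edges pass — contains no vertex of $G$ at all, since no vertex lies outside its cluster disk and, by~(i) and planarity, nothing can cross the boundary of that region. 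Now, for each processed bridge $(\mu,\nu)$ I insert the gadget $\{B(\lambda),a,b,v_\mu,v_\nu\}$ by \emph{tracing} an edge $e_1$ that is extreme in the contiguous run of $\mu$–$\nu$ edges around $B(\mu)$: I place $v_\mu$ inside $B(\mu)$ and $v_\nu$ inside $B(\nu)$ near the endpoints of $e_1$, put the small disk $B(\lambda)$ — with $a$ near its $\mu$-side and $b$ near its $\nu$-side — in the face of $\Gamma$ bordering the cluster-exterior part of $e_1$ on the side of $e_1$ away from the other $\mu$–$\nu$ edges, and draw $(v_\mu,a)$ and $(b,v_\nu)$ inside a thin tube around that part of $e_1$ on the same side; the gadgets for distinct bridges live in disjoint thin tubes and do not interfere. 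Only forced crossings of cluster boundaries are created, so $\Gamma^*$ is c-planar, and it remains to verify it is a drawing with pipes. The pairs $\{\mu,\lambda\}$ and $\{\nu,\lambda\}$ have a single inter-cluster edge each, so nothing is required. For $\{\mu,\nu\}$ the inter-cluster edge set is unchanged, and by~(ii) the empty region between any two of its edges lies on the side of $e_1$ towards the other $\mu$–$\nu$ edges, hence is disjoint both from $B(\lambda)$ (placed on the opposite side) and from $v_\mu\in B(\mu)$, $v_\nu\in B(\nu)$; so it stays empty. For every other pair $\{\alpha,\beta\}$ the empty region is, by~(ii), a region containing no vertex of $G$ at all; hence it cannot contain a whole cluster disk, so it contains neither $v_\mu\in B(\mu)$ nor $v_\nu\in B(\nu)$; and it is disjoint from the thin tube around any traced edge $e_1$, because the cluster-exterior part of a $\gamma$–$\delta$ edge with $\{\gamma,\delta\}\neq\{\alpha,\beta\}$ cannot cross the two $\alpha$–$\beta$ edges bounding that region (planarity), meets $\partial B(\alpha)$ at most where it leaves $B(\alpha)$, which by the consecutiveness in~(i) lies off that region's boundary arc (similarly for $B(\beta)$, and it meets neither if $\{\gamma,\delta\}\cap\{\alpha,\beta\}=\emptyset$), and cannot lie in the region's interior either — so a thin enough tube around it avoids the region, whence it contains neither $a$ nor $b$. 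Therefore $\Gamma^*$ is a c-planar drawing with pipes of $\cgraph{*}$.

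I expect the converse direction — specifically the bookkeeping that each inserted gadget stays out of every ``pipe region'' of $\Gamma^*$, which rests on facts~(i)--(ii) and the thin-tube argument above — to be the only real obstacle; everything else is routine. Combining it with the easy direction shows that $\cgraph{}$ and $\cgraph{*}$ are equivalent instances of \cpp, and together with the first paragraph this proves the lemma.
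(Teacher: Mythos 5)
Your proposal is correct and follows essentially the same route as the paper: for each trivial block $(\mu,\nu)$ of $G_A$ you add a fresh cluster adjacent to both $\mu$ and $\nu$ (turning the bridge into an edge of a triangle in the \ga) and, for the converse direction, you insert the gadget along an extreme edge of the bundle of inter-cluster edges between $\mu$ and $\nu$, on the side away from the other such edges. The only differences are cosmetic: the paper's new cluster contains a single degree-two vertex (hence one \me component, which still leaves $K$ and $c$ unchanged) rather than your two \se vertices, and the paper is far terser about verifying that the inserted gadget avoids every pipe region, a check you carry out in more detail.
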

\begin{proof}
	Consider any trivial block $(\mu,\nu)$ in $G_A$. We show how to construct an instance $\cgraph{+}$ of \cpp equivalent to $\cgraph{}$ such that (i) the block of $G^+_A$ containing $(\mu,\nu)$ is not a trivial block, (ii) $G^+_A$ does not contain any trivial block that does not belong to $G_A$, and (iii) $K_+=K$ and $c_+=c$, where $K_+$ is the maximum number of \me components in a cluster of $\cgraph{+}$ and $c_+$ is the number of clusters of $\cgraph{+}$ with at least two \me components. Repeating such a transformation eventually yields an instance $\cgraph{*}$ satisfying the required properties.
	
	We initialize $\cgraph{+} = \cgraph{}$. Then, we add a new cluster $\eta$ to $\calT{+}$, which only contains a new vertex $v$. Also, we add a vertex $u_\mu$ to cluster $\mu$ and a vertex $u_\nu$ to cluster $\nu$, and edges $(v,u_\mu)$ and $(v,u_\nu)$ to $\cgraph{+}$. 
	
	We prove that $\cgraph{+}$ and $\cgraph{}$ are equivalent. 
	One direction is trivial, as any c-planar drawing with pipes of $\cgraph{+}$ contains a c-planar drawing with pipes of $\cgraph{}$.
	
	Suppose that $\cgraph{}$ admits a c-planar drawing with pipes $\Gamma$. Consider the two inter-cluster edges $e_1$ and $e_2$ adjacent to both $\mu$ and $\nu$ such that the region $R_\mu$ delimited by $B(\mu)$, by $B(\nu)$, by $e_1$, and by $e_2$ containing all the vertices of $G \setminus (\mu \cup \nu)$ does not contain any other inter-cluster edge adjacent to both $\mu$ and $\nu$ in $\Gamma$. We construct a c-planar drawing with pipes $\Gamma^+$ of $\cgraph{+}$ starting from $\Gamma$. Namely, draw path $(u_\mu,v,u_\nu)$ as a curve arbitrarily close to edge $e_1$ in $\Gamma$ in the interior of region $R_\mu$ introducing neither edge-edge nor edge-region crossings, and draw $R(\eta)$ as a simple closed region enclosing only the vertex $v$.
	
	The time bound descends from the fact that each augmentation step described above can be performed in constant time and that the number of trivial blocks in $G_A$ is at most linear in the size of $\cgraph{}$.
	
	Finally, $K_+=K$ and $c_+=c$, since $u_\mu$ and $u_\nu$ are \se components of $\mu$ and of $\nu$, respectively, while $\eta$ contains exactly one component, which is a \me component. 
\end{proof}

\subsection{An FPT Algorithm for C-Planarity with Pipes}\label{se:fpt-cpp}
In the following we prove the main result of the section.

\begin{theorem}\label{th:fpt-cplanarity-pipes}
	\cpp can be tested in $O(K^{c(K-2)}) \cdot O(n^2)$ time, where
	$K$ is the maximum number of \me components in a cluster 
	and $c$ is the number of clusters with at least two \me components.
\end{theorem}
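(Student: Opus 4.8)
The plan is to combine the characterization of flat c-planarity from Theorem~\ref{th:characterization} with the efficient algorithm for \iccp from Theorem~\ref{th:algo-inc-constr-cp}, plus the structural fact about \enctrees developed in Section~\ref{sse:nctrees-cpp}. First I would invoke Lemma~\ref{le:cpp-remove-bridges} to reduce, in linear time and without changing $K$ or $c$, to the case in which the \ga $G_A$ has no trivial block; then Corollary~\ref{co:cpp-characterization-y} tells us that $\cgraph{}$ admits a c-planar drawing with pipes if and only if it admits a c-planar drawing consistent with the \enctree $Y^\circ_\mu$ for every cluster $\mu$. Since each $Y^\circ_\mu$ is uniquely determined by the combinatorial data of $\cgraph{}$ (one child per neighboring cluster, with the appropriate leaves), the set $\calY{}^\circ = \{Y^\circ_\mu\}_{\mu \in \calT{}}$ can be computed in linear time.

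The remaining degree of freedom is the choice of the \ctree $X_\mu$ for each cluster $\mu$. The key observation is that a \ctree of $\mu$ is a rooted tree whose internal vertices are exactly the \me components of $\mu$ (with leaves attached for the inter-cluster edges incident to those components), so when $\mu$ has at most one \me component the \ctree is forced (empty, or a single internal vertex with all its leaves), and only clusters with at least two \me components — there are $c$ of them — contribute branching. For a cluster with $k \le K$ \me components, the number of rooted trees on $k$ labeled internal nodes is at most $k^{k-2}$ by Cayley's formula applied to the underlying tree shape together with the choice of root, which is $O(K^{K-2})$; taking the product over the $c$ clusters with branching gives at most $O(K^{c(K-2)})$ candidate tuples $\calX{}$. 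I would enumerate all such tuples; for each one, Theorem~\ref{th:characterization} guarantees that $\cgraph{}$ is c-planar with pipes iff some choice of $\calX{}$ makes $\langle \cgraph{}, \calX{}, \calY{}^\circ \rangle$ a positive instance of \iccp, and each such instance is tested in $O(n^2)$ time by Theorem~\ref{th:algo-inc-constr-cp}. The total running time is $O(K^{c(K-2)}) \cdot O(n^2)$ as claimed, and correctness follows by combining Corollary~\ref{co:cpp-characterization-y} (which replaces ``c-planar with pipes'' by ``c-planar and consistent with the $Y^\circ_\mu$'s'') with Theorem~\ref{th:characterization} restricted to the fixed choice $\calY{}^\circ$ of \nctrees.

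The point that needs the most care is verifying that restricting the \nctrees to the specific \enctrees $Y^\circ_\mu$ does not lose solutions: Theorem~\ref{th:characterization} allows an arbitrary choice of \nctree per cluster, but Corollary~\ref{co:cpp-characterization-y} asserts that for the pipes variant it suffices to be consistent with $Y^\circ_\mu$, so one must check that any c-planar drawing consistent with $\langle \calX{}, \calY{}^\circ\rangle$ is in particular a c-planar drawing with pipes — which is exactly what Corollary~\ref{co:cpp-characterization-y} provides — and conversely that a drawing with pipes is, for a suitable choice of $\calX{}$, consistent with $\langle \calX{}, \calY{}^\circ \rangle$, which follows from the $X_\mu$-construction in the proof of Theorem~\ref{th:characterization} together with the fact (Theorem~\ref{th:characterization-cpp}) that a drawing with pipes automatically has the inter-cluster edges to each neighbor consecutive, hence is consistent with $Y^\circ_\mu$. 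The secondary point is the counting bound on the number of \ctrees, where I would be careful to count rooted labeled trees (root matters because the \ctree is rooted at a distinguished \me component and the nesting hierarchy depends on it) and to confirm that clusters with $\le 1$ \me component contribute a factor of $1$, so that only the $c$ clusters with $\ge 2$ \me components appear in the exponent.

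\begin{proof}
	By Lemma~\ref{le:cpp-remove-bridges}, we may assume in linear time, and without changing $K$ and $c$, that $G_A$ has no trivial block. Compute the \enctree $Y^\circ_\mu$ for each cluster $\mu \in \calT{}$; this takes linear time and yields a set $\calY{}^\circ$ of \nctrees. By Corollary~\ref{co:cpp-characterization-y}, $\cgraph{}$ admits a c-planar drawing with pipes if and only if it admits a c-planar drawing $\Gamma$ that is consistent with $Y^\circ_\mu$ for every $\mu \in \calT{}$.

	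We enumerate all tuples $\calX{}$ containing a \ctree $X_\mu$ for each $\mu \in \calT{}$. If $\mu$ has no \me component, then $X_\mu$ is the empty tree; if $\mu$ has exactly one \me component $c$, then $X_\mu$ consists of the single internal vertex $c$ together with one leaf per inter-cluster edge incident to $c$; in both cases $X_\mu$ is uniquely determined. If $\mu$ has $k$ \me components, with $2 \le k \le K$, then the internal vertices of $X_\mu$ are exactly these $k$ components, and the leaves are determined by the incidences between inter-cluster edges and components; hence $X_\mu$ is determined by the choice of a rooted tree on the $k$ labeled internal vertices, and the number of such choices is at most $k^{k-2} \cdot k = k^{k-1} \in O(K^{K-2})$ (here the extra factor accounts for the choice of root, and Cayley's formula bounds the number of unrooted labeled trees). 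Multiplying over the $c$ clusters that have at least two \me components, the number of tuples $\calX{}$ is in $O(K^{c(K-2)})$.

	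For each such tuple $\calX{}$, the triple $\langle \cgraph{}, \calX{}, \calY{}^\circ \rangle$ is an instance of \iccp, which by Theorem~\ref{th:algo-inc-constr-cp} can be decided in $O(n^2)$ time. We answer ``yes'' if and only if at least one of these instances is positive.

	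For correctness, first suppose some $\langle \cgraph{}, \calX{}, \calY{}^\circ \rangle$ is positive. Then $\cgraph{}$ admits a c-planar drawing $\Gamma$ consistent with $\langle \calX{}, \calY{}^\circ \rangle$; in particular $\Gamma$ is consistent with $Y^\circ_\mu$ for every $\mu$, so by Corollary~\ref{co:cpp-characterization-y} the c-graph $\cgraph{}$ admits a c-planar drawing with pipes. Conversely, suppose $\cgraph{}$ admits a c-planar drawing with pipes $\Gamma$. By Theorem~\ref{th:characterization-cpp}, for each $\mu$ the inter-cluster edges between $\mu$ and any fixed neighbor are consecutive in the rotation of $\mu$, so $\Gamma$ is consistent with $Y^\circ_\mu$. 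Applying the construction in the proof of Theorem~\ref{th:characterization} to $\Gamma$, for each $\mu$ we obtain a \ctree $X_\mu$ with which $\Gamma$ is consistent; let $\calX{}$ be the resulting tuple, which is one of the tuples enumerated above. Then $\Gamma$ witnesses that $\langle \cgraph{}, \calX{}, \calY{}^\circ \rangle$ is a positive instance of \iccp, so our algorithm answers ``yes''.

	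The running time is the linear-time preprocessing plus $O(K^{c(K-2)})$ calls to the \iccp algorithm, each running in $O(n^2)$ time, for a total of $O(K^{c(K-2)}) \cdot O(n^2)$.
\end{proof}
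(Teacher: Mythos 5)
Your proof follows the same route as the paper's: reduce via Lemma~\ref{le:cpp-remove-bridges} to the case in which $G_A$ has no trivial block, fix the unique \enctrees $Y^\circ_\mu$ using Corollary~\ref{co:cpp-characterization-y}, enumerate the candidate \ctrees, and test each resulting instance of \iccp with Theorem~\ref{th:algo-inc-constr-cp}; the correctness argument (soundness from Corollary~\ref{co:cpp-characterization-y}, completeness from the witness construction in Theorem~\ref{th:characterization} together with Theorem~\ref{th:characterization-cpp}) is also the paper's. The one step that does not hold as written is the count of \ctrees per cluster: the assertion $k^{k-2}\cdot k = k^{k-1} \in O(K^{K-2})$ is false (for $k=K$ this is $K\cdot K^{K-2}$), so enumerating \emph{rooted} labeled trees yields $O(K^{c(K-1)})$ tuples, a factor $K^{c}$ more than the bound in the statement. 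The paper enumerates only the $k_\mu^{k_\mu-2}$ unrooted trees, and this suffices: in the construction of $X_\mu$ in the proof of Theorem~\ref{th:characterization}, the root $\rho_\mu$ may be taken to be \emph{any} \me component of $\mu$, since once $\calO{}_\mu$ is chosen to start and end at an edge $e_\mu$ incident to $\rho_\mu$, every other component is nested into $\rho_\mu$. Hence one may fix the root of $X_\mu$ arbitrarily in advance and enumerate only the $k_\mu^{k_\mu-2}$ tree shapes on the \me components; with this adjustment your argument yields exactly the claimed $O(K^{c(K-2)})\cdot O(n^2)$ bound.
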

\begin{proof}
	Let $\cgraph{}$ be an instance of \cpp. 
	First, apply Lemma~\ref{le:cpp-remove-bridges} to construct in linear time an equivalent instance $\cgraph{*}$ of \cpp whose \ga contains no trivial blocks (possibly $\cgraph{*} = \cgraph{}$) and such that $K_*=K$ and $c_*=c$, where
	$K_*$ is the maximum number of \me components in a cluster of $\cgraph{*}$ 
	and $c_*$ is the number of clusters of $\cgraph{*}$ with at least two \me components.
	Second, construct the set $\calY{}$ containing the unique \enctree $Y^\circ_\mu$, for each cluster $\mu \in \calT{*}$. 
	Then, construct all the possible sets $\calX{}$ of \ctrees, for each cluster $\mu \in \calT{*}$, as follows. 
	If $\mu$ does not contain any \me component, then this set contains only the empty tree, while if $\mu$ contains exactly one \me component $c$, then this set contains only a star whose central vertex is $c$, with a leaf $x_\mu(e)$ for each inter-cluster edge $e$ incident to $c$. Otherwise, consider a set $\calI{}$ containing a vertex $c$ for each \me component $c$ of $\mu$. 
	We generate all the trees on the vertices in $\calI{}$ and, for each of them, we add to each vertex $c$ 
	a leaf $x_\mu(e)$ for each inter-cluster edge $e$ incident to $c$; by Cayley's formula~\cite{az-pftb-04}, the number of these trees is $k_\mu^{k_\mu-2}$, where $k_\mu$ is the number of \me components of $\mu$.
	Finally, apply Theorem~\ref{th:algo-inc-constr-cp} to test whether
	$\langle  \cgraph{*}, \calX{}, \calY{} \rangle$ is a positive instance of \iccp, for each pair $\langle \calX{}, \calY{} \rangle$.
	By Theorem~\ref{th:characterization-cpp} and~Corollary~\ref{co:cpp-characterization-y}, we conclude that $\cgraph{*}$ is a positive instance of \cpp  if and only if at least one of such tests succeeds.
	
	There exist $\Pi_{\mu \in {\cal S}}   k_\mu^{k_\mu-2}$ combinations of \ctrees over all clusters in $\calT{}$, where $\cal S$ is the set of clusters in $\cal T$ containing at least two \me components, which we can upper bound by $K^{c(K-2)}$, where $K$ is the maximum number of \me components in a cluster and $c=|{\cal S}|$. Since there exists a unique set $\calY{}$ of \enctrees for $\cgraph{}$ and since each application of Theorem~\ref{th:algo-inc-constr-cp} requires quadratic time, the statement follows.
\end{proof}

We observe two notable corollaries of Theorem~\ref{th:fpt-cplanarity-pipes} (for the second, see Lemma~\ref{le:cpp-embedded-generalized}).

\begin{corollary}\label{co:strip}
	{\sc Strip Planarity} can be tested in $O(K^{s(K-2)})\cdot O(n^2)$ time, where
	$K$ is the maximum number of \me components in a strip and              
	$s$ is the number of strips containing at least two \me components.
\end{corollary}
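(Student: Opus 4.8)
The plan is to run the {\sc Strip Planarity} instance through the chain of reductions already established in the paper, and then invoke Theorem~\ref{th:fpt-cplanarity-pipes}. Concretely, I would first view a {\sc Strip Planarity} instance $\langle G, \gamma\rangle$ on $k$ strips as an instance of {\sc C-Planarity with Embedded Pipes} whose \ga $G_A$ is the path $(1,2,\dots,k)$ endowed with its (essentially unique) planar drawing, using the equivalence recalled in Section~\ref{se:strips}. Under this correspondence a strip is exactly a cluster and the underlying graph together with its partition into clusters is literally unchanged; hence a \me component of a strip is exactly a \me component of the corresponding cluster, and the quantities $K$ and $s$ in the statement coincide with the maximum number of \me components in a cluster and the number of clusters with at least two \me components, respectively.

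Next I would apply Lemma~\ref{le:cpp-embedded-generalized} to reduce this instance of {\sc C-Planarity with Embedded Pipes}, in linear time, to an equivalent instance $\cgraph{}$ of \cpp. The crucial bookkeeping point is supplied by the lemma itself: the reduction does not increase the number of \me components of any cluster, since the only vertices it introduces are the endpoints of the matching edges used to triangulate $G_A$, and each such vertex forms a singleton (hence \se) component. Therefore the maximum number $K$ of \me components in a cluster of $\cgraph{}$ and the number $c$ of clusters of $\cgraph{}$ with at least two \me components are bounded above by the corresponding quantities $K$ and $s$ of the original strip instance.

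Finally I would invoke Theorem~\ref{th:fpt-cplanarity-pipes}, which decides \cpp for $\cgraph{}$ in $O(K^{c(K-2)})\cdot O(n^2)$ time. Composing this with the linear-time reduction and using that $K^{c(K-2)}$ is monotone non-decreasing in both parameters (so that $K^{c(K-2)}\le K^{s(K-2)}$ after substituting the bounds above) yields the claimed $O(K^{s(K-2)})\cdot O(n^2)$ running time for {\sc Strip Planarity}.

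There is no genuine obstacle here beyond the translation of parameters: all the heavy lifting is already done in Theorem~\ref{th:fpt-cplanarity-pipes} and Lemma~\ref{le:cpp-embedded-generalized}. The one step I would double-check is precisely that the standard equivalence of {\sc Strip Planarity} with {\sc C-Planarity with Embedded Pipes} on a path preserves the component structure of each strip/cluster verbatim, so that $K$ and $s$ transport unchanged through the entire chain of reductions; this holds because the graph and its cluster partition are identical on the two sides of the equivalence.
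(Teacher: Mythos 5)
Your proposal is correct, and it reaches the statement by the same ultimate step as the paper (apply Theorem~\ref{th:fpt-cplanarity-pipes} after translating strips into clusters), but it routes the reduction differently. The paper's parenthetical remark attaches Lemma~\ref{le:cpp-embedded-generalized} only to the \emph{second} corollary; for {\sc Strip Planarity} the intended argument is to observe that when $G_A$ is a path every cluster has degree at most two in $G_A$, so the circular order of pipes around each disk is forced and {\sc Strip Planarity} is \emph{already} an instance of \cpp, which is then fed directly to Theorem~\ref{th:fpt-cplanarity-pipes}; the trivial blocks of the path (every edge of a path is a bridge) are handled \emph{inside} that theorem by Lemma~\ref{le:cpp-remove-bridges}, which only introduces \se components in the existing clusters plus new single-\me-component clusters, so $K$ and $s$ survive. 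You instead pre-process with Lemma~\ref{le:cpp-embedded-generalized}, triangulating $G_A$ so that no trivial blocks remain before the theorem is invoked. Both chains are valid and both preserve the parameters for the same underlying reason --- every auxiliary vertex added is a singleton, hence a \se component, and $K$ deliberately counts only \me components --- so your bound $K^{s(K-2)}\cdot O(n^2)$ goes through. What your route buys is that you never need the (easy but unstated) observation that the embedded-pipes constraint is vacuous on a path; what the paper's route buys is that it skips the triangulation entirely. The one thing worth making explicit in your write-up is the point the authors themselves stress: the corollary would \emph{fail} if $K$ counted all components rather than only \me ones, precisely because these reductions pad the strips with \se components.
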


\remove{
	It is worth pointing out the importance of setting the parameter $K$ of the FPT algorithm as the maximum number of \me components in a strip, not counting the \se components. In fact, since in an instance of {\sc Strip Planarity} all the blocks of $G_A$ are trivial blocks, the \se components added in each strip to remove such trivial blocks would not have allowed us to infer Corollary~\ref{co:strip}.
}

\begin{corollary}\label{th:algo-pipes-embedded}
	{\sc C-Planarity with Embedded Pipes} can be tested in $K^{c(K-2)} \cdot O(n^2)$ time, where
	$K$ is the maximum number of \me components in a cluster and $c$ is the number of clusters with at least two \me components.
\end{corollary}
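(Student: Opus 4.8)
The plan is to derive this statement as an immediate consequence of the linear-time reduction of Lemma~\ref{le:cpp-embedded-generalized} together with the FPT algorithm of Theorem~\ref{th:fpt-cplanarity-pipes}. Given an instance $\langle \cgraph{}, \Gamma_A \rangle$ of {\sc C-Planarity with Embedded Pipes}, the first step is to apply Lemma~\ref{le:cpp-embedded-generalized} to construct, in linear time, an equivalent instance $\cgraph{*}$ of \cpp. The crucial point, already asserted by that lemma, is that the reduction does not increase the number of \me components in any cluster: the vertices it introduces form \se components (each a single vertex incident to a single inter-cluster edge) and it creates no new cluster, so every pre-existing component keeps its \me/\se status. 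Consequently the value $K$, the maximum number of \me components in a cluster, and the value $c$, the number of clusters with at least two \me components, are the same for $\cgraph{*}$ as for $\cgraph{}$.

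Then I would run the algorithm of Theorem~\ref{th:fpt-cplanarity-pipes} on $\cgraph{*}$, which decides \cpp in $O(K^{c(K-2)}) \cdot O(n^2)$ time in these two parameters. Since, by Lemma~\ref{le:cpp-embedded-generalized}, $\langle \cgraph{}, \Gamma_A \rangle$ is a positive instance of {\sc C-Planarity with Embedded Pipes} if and only if $\cgraph{*}$ is a positive instance of \cpp, this test answers the original question. The linear cost of the reduction is absorbed by the $O(n^2)$ factor, so the overall running time is $K^{c(K-2)} \cdot O(n^2)$, as claimed.

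The statement is essentially bookkeeping once Lemma~\ref{le:cpp-embedded-generalized} and Theorem~\ref{th:fpt-cplanarity-pipes} are available, so no genuine obstacle is expected; the only point that requires care is verifying that the parameters are preserved — specifically, confirming that the matching added to triangulate $G_A$ in the reduction of Lemma~\ref{le:cpp-embedded-generalized} contributes only \se components and introduces no new cluster, so that neither $K$ nor $c$ grows under the reduction.
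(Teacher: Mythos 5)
Your proposal is correct and matches the paper's own derivation: the paper states this corollary as an immediate consequence of Theorem~\ref{th:fpt-cplanarity-pipes} combined with the parameter-preserving reduction of Lemma~\ref{le:cpp-embedded-generalized}, which is exactly your argument. Your added check that the triangulating matching contributes only \se components (so neither $K$ nor $c$ grows) is the one point that needs verification, and you handle it correctly.
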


\section{Proof of Theorem~\ref{th:algo-inc-constr-cp}}\label{se:inc-constr-cp}

In this section we give a proof of Theorem~\ref{th:algo-inc-constr-cp}, which has been stated in Section~\ref{se:cp-characterization}, by describing an algorithm that is based on a linear-time reduction (Lemma~\ref{le:inc-constr-cp-2-SEFE}) from instances of \iccp to equivalent instances of SEFE that can be solved in quadratic time by Theorem~\ref{th:sefe}.
We first describe the reduction in Lemma~\ref{le:inc-constr-cp-2-SEFE} and then discuss its implications to complete the proof of Theorem~\ref{th:algo-inc-constr-cp}.

\begin{lemma}\label{le:inc-constr-cp-2-SEFE}
	Let $\langle \calC{}(G,\calT{}), \calX{}, \calY{} \rangle$ be an instance of \iccp. 
	It is possible to construct in linear time an equivalent instance $\langle \Gr{}(V,\Er{}),\Gb{}(V,\Eb{})\rangle$ of SEFE in which the common graph $G_\cap=(V,\Er{} \cap \Eb{})$ is a forest such that the cut-vertices of \Gr{} and \Gb{} are incident to at most two non-trivial blocks.
\end{lemma}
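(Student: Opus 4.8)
The plan is to build, from the instance $\langle \calC{}(G,\calT{}), \calX{}, \calY{} \rangle$ of \iccp, a \textsc{SEFE} instance whose red and blue graphs jointly encode (a) the cluster-boundary structure of a c-planar drawing, (b) for each cluster $\mu$ the consecutivity constraints imposed by the \ctree $X_\mu$ on the inter-cluster edges leaving the \me components of $\mu$, and (c) the consecutivity constraints imposed by the \nctree $Y_\mu$ on all inter-cluster edges incident to $\mu$. The key idea is that a \ctree and a \nctree are essentially PQ-trees whose leaves are inter-cluster edges, and a rooted tree with consecutivity semantics can be realized inside a \textsc{SEFE} by a gadget: subdividing an edge of the common graph and attaching, for each internal node of the tree, a star in the common graph $\Gint{}$ whose leaves are the children of that node, while the nesting of the tree is materialized by a sequence of nested cycles formed using red and blue edges alternately (exactly the mechanism already used in Lemma~\ref{le:cpp-embedded-to-sefe} to force one subgraph to lie inside a given cycle of the other colour). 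So the construction will proceed cluster by cluster, and for each cluster $\mu$ it will install a ``frame'' cycle playing the role of $B(\mu)$, a copy of each \me component of $\mu$ inside that frame, the star-and-nesting gadget realizing $X_\mu$ on the inter-cluster edges of the \me components, and the star-and-nesting gadget realizing $Y_\mu$ on all inter-cluster edges of $\mu$; finally, for each inter-cluster edge $e=(x,y)$ with $x\in\mu$, $y\in\nu$, one adds a short red path from $x$ out through the $X_\mu$-gadget and the $Y_\mu$-gadget and into the matching $Y_\nu$- and $X_\nu$-gadgets and to $y$, so that in any \textsc{SEFE} the circular order in which these red paths exit $B(\mu)$ simultaneously respects $\calO{}_\mu$-restricted-to-\me-edges-consistent-with-$X_\mu$ and $\calO{}_\mu$-consistent-with-$Y_\mu$, which is exactly the definition of a drawing consistent with $\langle \calX{},\calY{}\rangle$.

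The steps, in order, are: first, set up the global frame graph (a triconnected planar graph with one face per cluster, as in Lemma~\ref{le:cpp-embedded-to-sefe} but now without a prescribed pipe structure — the adjacencies among clusters in $G_A$ are encoded, but the rotation around each cluster is governed by the $Y_\mu$-gadgets, not by the frame). Second, for each cluster $\mu$, build the $X_\mu$-gadget: a caterpillar of nested $3$-cycles in $\Gint{}$, one star per internal node of $X_\mu$, together with colour-alternating chords (as in Figure~\ref{fig:transformation}, and as in the four bulleted properties in the proof of Lemma~\ref{le:cpp-embedded-to-sefe}) enforcing that the leaves of any subtree are consecutive and correctly nested; likewise the $Y_\mu$-gadget. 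Third, wire in the components of $\mu$ and the inter-cluster edges with red/blue connector edges exactly mirroring the $a_\mu(e), b_\mu(e)$ mechanism of Lemma~\ref{le:cpp-embedded-to-sefe}, so that the order in which red paths cross the boundary on the $\mu$ side is forced to be the reverse of the order on the $\nu$ side. Fourth, verify the structural properties: $\Gint{}$ is a forest after replacing, via Claim~\ref{cl:cycle-removal}'s gadget, the finitely many cycles that the stars and frame create; each cut-vertex of $\Gb{}$ (resp.\ $\Gr{}$) is incident to at most two non-trivial blocks, because — just as in Lemma~\ref{le:cpp-embedded-to-sefe} — the component copies and the star gadgets each hang off a cut-vertex as a single non-trivial block plus the common-graph block. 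Fifth, prove equivalence in both directions: given a c-planar drawing consistent with $\langle\calX{},\calY{}\rangle$ one reads off rotation schemes for the $X$- and $Y$-gadgets that are simultaneously realizable, hence a \textsc{SEFE}; conversely, from a \textsc{SEFE} one recovers, for each $\mu$, a circular order $\calO{}_\mu$ of inter-cluster edges that is consistent with both $X_\mu$ and $Y_\mu$, and because the frame is triconnected and the connectors force reversed orders across each pipe, these per-cluster orders glue into an actual c-planar drawing, appealing to Theorem~\ref{th:characterization} in spirit but really just to a standard ``realize prescribed rotation/consecutivity'' argument.

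I expect the main obstacle to be the simultaneous realization of the \textbf{two} tree gadgets around the same cluster. A \ctree constrains only the inter-cluster edges leaving \me components, while a \nctree constrains \emph{all} inter-cluster edges; these two constraint systems share leaves (the \me-component edges are a sub-multiset of all edges), and one must ensure that the \textsc{SEFE} gadget does not impose any spurious interaction between them beyond what the \iccp semantics demands — in particular that a leaf of $X_\mu$ and the corresponding leaf of $Y_\mu$ can be forced to carry the same underlying inter-cluster edge without over-constraining the cyclic order. The clean way to handle this is to keep the $X_\mu$-gadget and the $Y_\mu$-gadget on disjoint arcs of the frame cycle of $\mu$ and let each inter-cluster edge's red path pass first through its $X$-leaf (if the originating component is \me) and then through its $Y$-leaf, so that consecutivity in the $X$-gadget and consecutivity in the $Y$-gadget are enforced at two different ``stations'' along the boundary; a short planarity argument (the red path cannot re-enter $B(\mu)$) then shows that the only common orders admitted are exactly those consistent with both trees. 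The second, more routine, obstacle is bookkeeping the block-cut-vertex structure so that clause (ii)/(iii) of the lemma — at most two non-trivial blocks per cut-vertex — survives the cycle-removal step; this is handled exactly as in Lemma~\ref{le:cpp-embedded-to-sefe} and Claim~\ref{cl:cycle-removal}, since the replacement gadget neither creates cut-vertices nor upgrades trivial blocks, and since, with the $X$- and $Y$-gadgets being caterpillars of triangles in $\Gint{}$, every cut-vertex sees at most the frame block plus one gadget/component block.
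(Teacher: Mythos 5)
Your overall architecture matches the paper's: per-cluster gadgets hung off a global frame, a copy of each \me component inside its gadget, alternating red/blue matchings between stars that transmit (and reverse) the linear order of the inter-cluster edges from one ``station'' to the next, red connector edges $\red{(y_\mu(e),y_\nu(e))}$ tying the two sides of each pipe together, Claim~\ref{cl:cycle-removal} to turn $\Gint{}$ into a forest, and the same block-cut-vertex bookkeeping. The one place where you genuinely diverge is the realization of $X_\mu$ and $Y_\mu$. The paper does not build any ``caterpillar of nested $3$-cycles'' or nesting gadget: it simply inserts the trees $X_\mu$ and $Y_\mu$ themselves as subtrees of the common graph $\Gint{}$ (attached to the wheel rim at $\delta'_\mu$ and $\delta_\mu$ via their roots), and relies on the fact that in \emph{any} planar embedding of a tree the leaves of each subtree appear consecutively in the circular order around the tree. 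That single observation is what makes the reduction both linear-time and compatible with the forest requirement on $\Gint{}$, and it is also what keeps each internal node of $X_\mu$ or $Y_\mu$ incident to at most one non-trivial block of $\Gr{}$ (its component copy, resp.\ the $Y_\mu$--$Y_\nu$ matching block) plus trivial tree edges.

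Your nested-cycle realization is where I would press you: as described it is under-specified, and it is exactly the step where the lemma's quantitative claim is at risk. Enforcing ``leaf set of each subtree lies inside its own cycle'' requires those cycles to live in $\Gr{}$ or $\Gb{}$ (not in $\Gint{}$, which must become a forest) together with connectivity arguments forcing the right vertices inside them; a node of $X_\mu$ with many children then threatens to become a cut-vertex incident to many non-trivial blocks, which is precisely the failure mode of Schaefer's original \cp-to-SEFE reduction that this lemma is designed to avoid. So either you argue carefully that your nesting gadget never creates a cut-vertex with three or more non-trivial blocks, or you should adopt the paper's device of putting the trees directly into $\Gint{}$, after which the rest of your plan (disjoint arcs for the $X$- and $Y$-stations, order transmission by alternating matchings, reversal across pipes) goes through essentially as you outline it.
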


\begin{proof}
	For each cluster $\mu \in \calT{}$ instance \sefeinstance{} contains a {\em cluster gadget} $G_\mu$ composed of edges in $\Er{} \cup \Eb{}$. 
	These gadgets are then attached by means of edges in $\Eb{}$ to an outer {\em frame}, composed of edges of $\Gint{}$, which enforces them to lie ``outside of each other''.
	Finally, these gadgets are connected with each other by means of edges in $\Er{}$ representing inter-cluster edges.
	
	Our reduction is inspired by the original reduction from {\sc C-Planarity} to SEFE proposed by Schaefer~\cite{s-ttphtpv-13}. However, while that reduction produces instances of SEFE in which the cut-vertices of $\Gr{}$ and $\Gb{}$ may have a linear number of non-trivial blocks, we exploit the presence of the components-tree and of the neighbor-clusters-tree to create instances in which the non-trivial blocks incident to cut-vertices are at most two, which makes instance \sefeinstance{} polynomial-time solvable.
	We now describe in detail the construction of $\Gr{}$ and $\Gb{}$.
	
	For each cluster $\mu \in \calT{}$, cluster gadget $G_\mu$ is constructed as follows. Refer to Fig.~\ref{fig:gadget}.
	\begin{figure}[tb!]
		\centering
		\includegraphics[height=.47\textwidth,page=1]{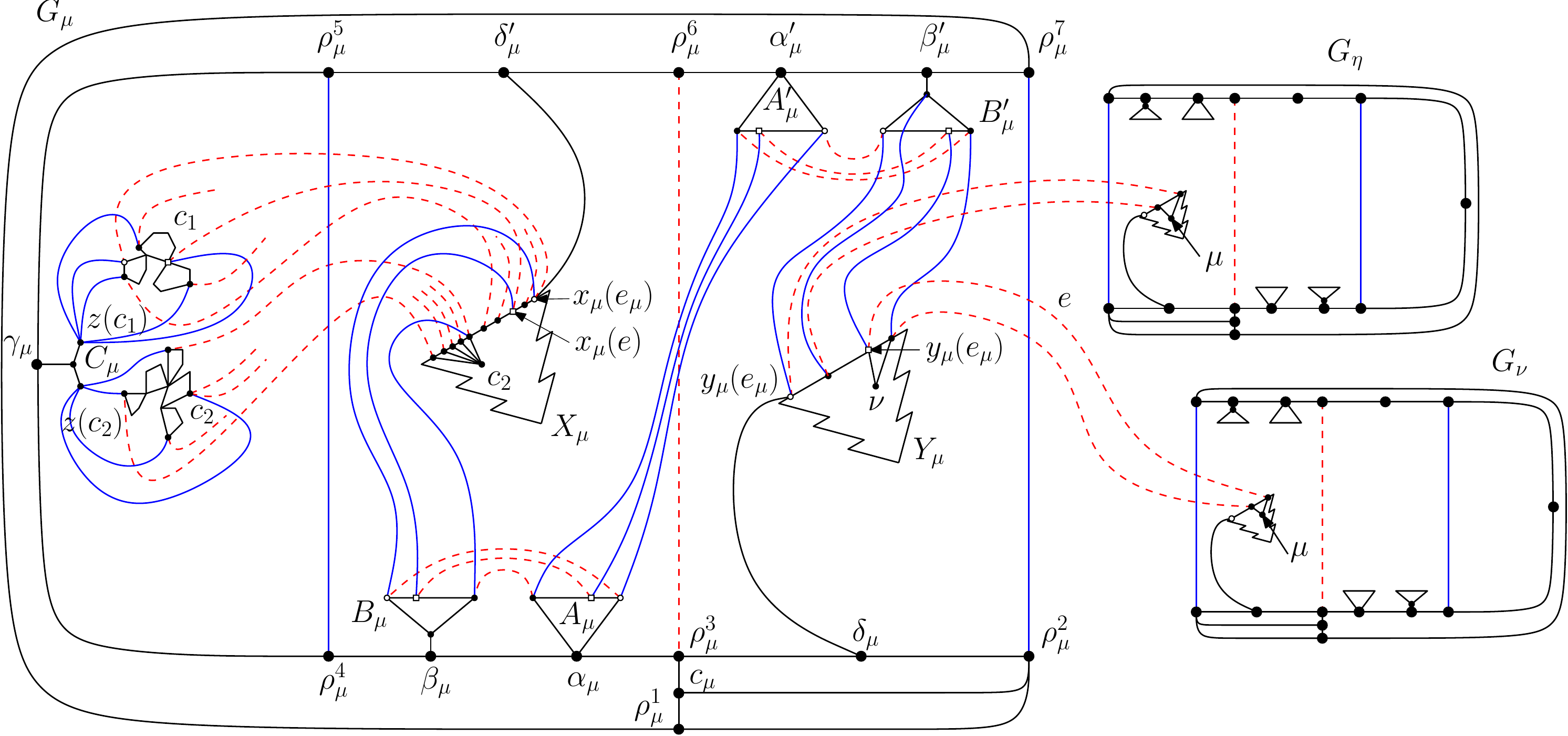}
		\caption{
			Sketch of the cluster gadgets $G_\mu$, $G_\nu$, and $G_\eta$ for cluster $\mu$ and its neighbors $\nu$ and $\eta$. For readability purposes, edges of $\Gint{}$ in $G_\mu$ between the center $c_\mu$ of the wheel $W_\mu$ and some vertices of the rim of $W_\mu$ have been omitted.}
		\label{fig:gadget}
	\end{figure}

	We first describe the part of $G_\mu$ that belongs to both $\Gr{}$ and $\Gb{}$.
	Gadget $G_\mu$ contains a wheel $W_\mu$ with a {\em central vertex} $c_\mu$ that is connected to all the vertices of a cycle $(\rho^1_\mu,\rho^2_\mu,\delta_\mu,\rho^3_\mu,\alpha_\mu,$ $\beta_\mu,\rho^4_\mu,\gamma_\mu,\rho^5_\mu,\delta'_\mu,\rho^6_\mu, \alpha'_\mu,\beta'_\mu, \rho^7_\mu)$, which is the {\em rim} of $W_\mu$. 
	Also, it contains a star $A_\mu$ ($A'_\mu$), centered at $\alpha_\mu$ (at $\alpha'_\mu$), with a leaf $a_\mu(e)$ (a leaf $a'_\mu(e)$) for each inter-cluster edge $e$ incident to $\mu$ that is incident to a \me component of $\mu$.
	Then, $G_\mu$ contains a star $B_\mu$ ($B'_\mu$), whose central vertex is adjacent to vertex $\beta_\mu$ (to vertex $\beta'_\mu$), with a leaf $b_\mu(e)$ (a leaf $b'_\mu(e)$) for each inter-cluster edge $e$ incident to a \me component of $\mu$.
	Further, it contains a star $C_\mu$, whose central vertex is adjacent to vertex $\gamma_\mu$, with a leaf $z(c_i)$ for each \me component $c_i$ of $\mu$.
	Additionally, $G_\mu$ contains a copy of each \me component $c_i$ of $\mu$.
	Gadget $G_\mu$ also contains trees $X_\mu \in \calX{}$ and $Y_\mu \in \calY{}$; recall that,
	$X_\mu$ has a leaf $x_\mu(e)$ for each inter-cluster edge $e$ incident to a \me component of $\mu$, while $Y_\mu$ has a leaf $y_\mu(e)$ for each inter-cluster edge $e$ incident to $\mu$.
	Finally, $G_\mu$ contains an edge $(y_\mu(e_\mu),\delta_\mu)$ and an edge $(x_\mu(e_\mu),\delta'_\mu)$, where $e_\mu$ is an arbitrary inter-cluster edge incident to the root $\rho_\mu$ of $X_\mu$, if $X_\mu$ is not the empty tree, or an arbitrary inter-cluster edge incident to $\mu$, otherwise.
	
	We now describe the edges of $G_\mu$ only belonging to $\Er{}$.
	Namely, $\Er{}$ contains an edge $\red{(\rho^3_\mu,\rho^6_\mu)}$. Also, for each inter-cluster edge $e$ incident to a vertex $v$ belonging to a \me component of $\mu$, set $\Er{}$ contains an edge $\red{(v,x_\mu(e))}$, an edge $\red{(b_\mu(e),a_\mu(e))}$, and an edge $\red{(a'_\mu(e),b'_\mu(e))}$.
	
	Finally, we describe the edges of $G_\mu$ only belonging to $\Eb{}$. 
	Namely, $\Eb{}$ contains edges $\blue{(\rho^2_\mu,\rho^7_\mu)}$ and $\blue{(\rho^4_\mu,\rho^5_\mu)}$.
	Also, for each vertex $v$ of a \me component $c_i$ of $\mu$ such that $v$ is incident to at least an inter-cluster edge, set $\Eb{}$ contains an edge $\blue{(z(c_i),v)}$.
	Further, for each inter-cluster edge $e$ incident to a \me component of $\mu$, set $\Eb{}$ contains an edge $\blue{(x_\mu(e),b_\mu(e))}$, an edge $\blue{(a_\mu(e),a'_\mu(e))}$, and an edge $\blue{(b'_\mu(e),y_\mu(e))}$.
	Finally, for each inter-cluster edge $e$ incident to a \se component of $\mu$, set $\Eb{}$ contains an edge connecting $y_\mu(e)$ with the center of star $B'_\mu$.
	This concludes the construction of $G_\mu$.
	
	We then add to \Gint{} a {\em frame} consisting of cycle $C=(\sigma_{\mu_1}, \dots, \sigma_{\mu_k}, \sigma^*)$, with $\mu_i \in \calT{}$. Also, we add to $\Er{}$ an edge $\red{(\sigma^*,\rho^1_{\mu_1})}$. Finally, we add to \Eb{} an edge $\blue{(\rho^1_{\mu_i},\sigma_{\mu_i})}$ for each cluster $\mu_i \in \calT{}$. Refer to Fig.~\ref{fig:gadgets-composition-same-side}.
	\begin{figure}[t!]
		\centering
		\includegraphics[width=0.7\columnwidth,page=5]{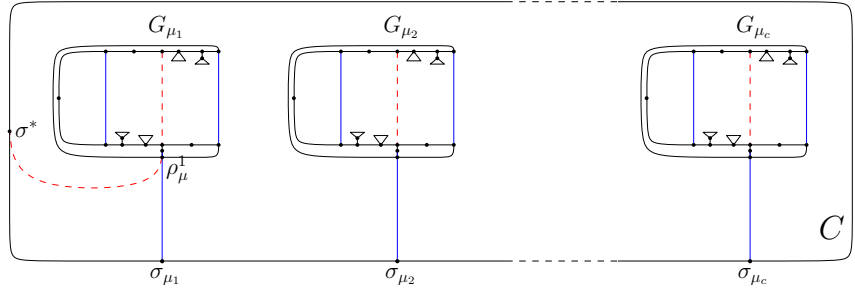}
		\caption{Composing all the cluster gadgets so that they lie in the same side of the frame cycle $C$.}
		\label{fig:gadgets-composition-same-side}
	\end{figure}
	
	To complete the construction of \sefeinstance{}, for each inter-cluster edge $e$ we add to $\Er{}$ an edge $\red{(y_\mu(e),y_\nu(e))}$, where $\mu$ and $\nu$ are the clusters edge $e$ is incident to.
	
	Clearly, \sefeinstance{} can be constructed in linear time and, hence, its size is linear in the size of $\langle \cgraph{}, \calX{}, \calY{} \rangle$. We now prove the equivalence.
	
	$(\Longrightarrow)$ Suppose that \sefeinstance{} admits a SEFE \sefesolution{}. We show how to construct a c-planar drawing $\Gamma$ of $\cgraph{}$ that is consistent with $\langle \calX{}, \calY{} \rangle$.
	
	In the following we will assume that the frame cycle $C$ bounds the outer face of both $\GammaR{}$ and $\GammaB{}$. This is not a loss of generality; in fact, since $G_\cup \setminus C$ is connected, 
	where $G_\cup=(V,E_1 \cup E_2)$,
	all the vertices of \Gr{} and \Gb{} not in $C$ lie on the same side of $C$ in \sefesolution{}, thus $C$ delimits a face in both $\GammaR{}$ and $\GammaB{}$, which we can assume to be the outer face.
	
	We now prove a set of properties of \sefesolution{} with respect to the {\em vertex-cycle containment} relationship, that is, we prove that certain vertices have to lie in the interior or in the exterior of certain cycles of $\Gr{}$ or $\Gb{}$ in $\GammaR{}$ or $\GammaB{}$, respectively. 
	
	We first focus on vertices and cycles belonging to the same cluster gadget $G_\mu$.
	
	For them, we first prove relationships involving cycles belonging to $\Gint{}$, which hence hold in both $\GammaR{}$ and $\GammaB{}$.
	First, the center $c_\mu$ of the wheel $W_\mu$ of $G_\mu$ lies in the interior of the rim of $W_\mu$ in both $\GammaR{}$ and $\GammaB{}$, since vertex $\rho^1_\mu$ is connected in $\Gb{}$ with vertex $\sigma_\mu$ of $C$, which delimits the outer face of both $\GammaR{}$ and $\GammaB{}$, by assumption.
	Also, since the subgraph of $G_\cup$ induced by the vertices in $G_\mu \setminus W_\mu$ is connected, all the vertices of $G_\mu \setminus W_\mu$ lie in the exterior of the rim of $W_\mu$ in both $\GammaR{}$ and $\GammaB{}$.
	
	We then describe further relationships in $G_\mu$ only holding in $\GammaB{}$:
	\begin{inparaenum}[(i)]
		\item
		all the vertices of the copies of the components $c_i$ of $\mu$ lie in the interior of cycle $(\rho^4_\mu,\rho^5_\mu,\gamma_\mu)$ in $\GammaB{}$, since they are all connected to $\gamma_\mu$ by means of paths in $\Gb{}$, and since they cannot lie in the interior of $W_\mu$; and
		\item all the other vertices of $G_\mu$ lie in the interior or on the boundary of cycle $(\rho^4_\mu,\beta_\mu,\alpha_\mu,\rho^3_\mu,\delta_\mu,\rho^2_\mu,\rho^7_\mu,\beta'_\mu,\alpha'_\mu,\rho^6_\mu,\delta'_\mu,\rho^5_\mu)$ in $\GammaB{}$, since they are all connected to vertices $\alpha_\mu$, $\alpha'_\mu$, $\beta_\mu$, and $\beta'_\mu$ by means of paths in $\Gb{}$, and since they cannot lie in the interior of $W_\mu$.
	\end{inparaenum}
	
	Finally, we describe analogous relationships in $G_\mu$ only holding in $\GammaR{}$:
	\begin{inparaenum}[(i)]
		\item all the vertices of the copies of the components $c_i$ of $\mu$, all the vertices of tree $X_\mu$, and all the vertices of stars $A_\mu$ and $B_\mu$ lie in the interior or on the boundary of cycle $(\rho^4_\mu, \beta_\mu, \alpha_\mu, \rho^3_\mu, \rho^6_\mu,\delta'_\mu,\rho^5_\mu,\gamma_\mu)$ in $\GammaR{}$, since they are all connected to vertices $\delta'_\mu$, $\alpha_\mu$, and $\beta_\mu$ by means of paths in $\Gr{}$, and since they cannot lie in the interior of $W_\mu$;
		\item all the other vertices of $G_\mu$ lie in the exterior or on the boundary of cycle 
		$(\rho^2_\mu,\delta_\mu,\rho^3_\mu,\rho^6_\mu,\alpha'_\mu,\beta'_\mu,\rho^7_\mu,\rho^1_\mu)$ in $\GammaR{}$, since they are all connected to vertices $\alpha'_\mu$, $\beta'_\mu$, and $\delta_\mu$ in $\Gr{}$, and since they cannot lie in the interior of $W_\mu$.
	\end{inparaenum}
	
	We now consider vertex-cycle containment relationships between vertices not belonging to $G_\mu$ and cycles in $G_\mu$. In particular, we prove that no vertex $v \notin G_\mu$ lies in the interior of a cycle of $G_\mu$ in both $\GammaR{}$ and $\GammaB{}$.
	Namely, consider any vertex $v \in \nu$, where $\nu \neq \mu$ is a cluster of $\calT{}$. Vertex $v$ does not lie in the interior of cycle $\blue{C^2_\mu}=(\rho^2_\mu,\rho^7_\mu,\rho^1_\mu)$ in \sefesolution{}, due to the fact that $\blue{C^2_\mu}$ is composed of edges belonging to $\Gb{}$, to the fact that $\rho^1_\nu$ does not lie in the interior of $\blue{C^2_\mu}$ (since it is connected to $\sigma_\nu$ in $\Gb{}$, which is incident to the outer face), and to the fact that there exists a path in $\Gb{}$ between $v$ and $\rho^1_\nu$ that does not contain vertices of $\blue{C^2_\mu}$.
	Analogously, vertex $v$ does not lie in the interior of cycle $\red{C^1_\mu}=(\rho^2_\mu,\delta_\mu,\rho^3_\mu,\rho^6_\mu,\alpha'_\mu,\beta'_\mu, \rho^7_\mu,\rho^1_\mu)$ in \sefesolution{}. In fact, there exists a path in $\Gr{}$ not containing vertices of $\red{C^1_\mu}$ between $v$ and a leaf of $Y_\mu$, which lies in the exterior of $\red{C^1_\mu}$. This path exists since the \ga $G_A$ is connected.
	
	We remark that the latter consideration allows us to assume that edge $\red{(\rho^3_\mu,\rho^6_\mu)}$ does not cross edge $\blue{(\rho^2_\mu,\rho^7_\mu)}$ in \sefesolution{}. In fact, in this case we could remove such crossings by redrawing the portions of $\red{(\rho^3_\mu,\rho^6_\mu)}$ lying outside cycle $\blue{C^2_\mu}$ so that edge $\red{(\rho^3_\mu,\rho^6_\mu)}$ is drawn entirely inside $\blue{C^2_\mu}$, without changing the vertex-cycle containment relationships between any vertex and cycle $\red{C^1_\mu}$.
	This implies that no edge of $\Gr{}$ either between two vertices of $G_\mu$ or between two vertices of $G_\nu$, for any cluster $\nu \neq \mu$, crosses edge $\blue{(\rho^2_\mu,\rho^7_\mu)}$.
	In fact, any edge of $\Gr{}$ between two vertices of $G_\mu$ lies entirely in the interior of cycle $\red{C^1_\mu}$, and thus of cycle $\blue{C^2_\mu}$, while any edge of $\Gr{}$ between two vertices of $G_\nu$ lies in the interior of cycle $\red{C^1_\nu}$, thus of cycle $\blue{C^2_\nu}$, and hence entirely in the exterior of~$\blue{C^2_\mu}$.
	
	We now show that we can further assume that all the other edges of $\Gr{}$ cross edge $\blue{(\rho^2_\mu,\rho^7_\mu)}$ at most once. Note that the only remaining edges of $\Gr{}$ that we have to consider are $\red{(\sigma^*,\rho^1_{\mu_1})}$ and all the edges $\red{(y_\nu(e),y_\eta(e))}$ between a leaf of $Y_\nu$ in $G_\nu$ and a leaf of $Y_\eta$ in $G_\eta$, where $\nu, \eta \in \calT{}$ (possibly $\nu=\mu$).
	Namely, from the vertex-cycle containment relationships we proved above it follows that none of the vertices not belonging to $G_\mu$ lies inside cycle $\blue{C^2_\mu}$, and that the only vertices of $G_\mu$ lying in the interior of $\blue{C^2_\mu}$ and not in the interior of $\red{C^1_\mu}$ are the vertices of $A'_\mu$, of $B'_\mu$, and of $Y_\mu$.
	Hence, if an edge $\red{e_r}$ of $\Gr{}$ crosses edge $\blue{(\rho^2_\mu,\rho^7_\mu)}$ more than once, these vertices are the only ones that might be enclosed in a region delimited by $\red{e_r}$ and by $\blue{(\rho^2_\mu,\rho^7_\mu)}$. However, this is not possible since all of them are connected to vertices of $W_\mu$ (namely $\alpha'_\mu$, $\beta'_\mu$, and $\delta_\mu$) by means of paths of edges belonging to $\Gr{}$. Hence, any of these regions does not contain any vertex, and thus we can redraw edge $\red{e_r}$ so that it crosses $\blue{(\rho^2_\mu,\rho^7_\mu)}$ at most once without changing the vertex-cycle containment relationships between any vertex and any cycle in $\Gr{}$.
	
	In the following we will hence assume that, for every cluster $\mu \in \calT{}$, edge $\blue{(\rho^2_\mu,\rho^7_\mu)}$ is crossed at most once by any edge of $\Gr{}$. In particular, this edge is crossed only by each edge $\red{(y_\mu(e),y_\eta(e))}$, incident to a leaf of tree $Y_\mu$, which corresponds to an inter-cluster edge $e$ of $\cgraph{}$ incident to $\mu$.
	
	We now show how to construct $\Gamma$. We denote by $\Theta(T_\mu)$, for each tree $T_\mu \in \{A_\mu,B_\mu,A'_\mu,B'_\mu,X_\mu,Y_\mu\}$, the 
	order of the leaves in $T$ in a clockwise Eulerian tour of $T_\mu$ starting from the leaf corresponding to $e_\mu$ in \sefesolution{}.
	Further, we denote by $\Phi(Y_\mu)$ the order $\Theta(Y_\mu)$ restricted to the leaves corresponding to edges that are incident to \me components of $\mu$.
	Also, we will denote by $\overline{\Theta(T_\mu)}$ the reverse of order $\Theta(T_\mu)$ and by $\overline{\Phi(Y_\mu)}$ the reverse of order $\Phi(Y_\mu)$.
	
	For each cluster $\mu \in \calT{}$, the drawing of each \me component $c_i$ of $\mu$ in $\Gamma$ coincides with the drawing in \sefesolution{} of the copy of $c_i$ in gadget $G_\mu$, which belongs to $\Gint{}$.
	Also, the boundary $B(\mu)$ of the region $R(\mu)$ representing cluster $\mu$ coincides with the drawing of cycle $\blue{C^2_\mu}$ in $\GammaB{}$.
	
	We show how to draw the inter-cluster edges of $\cgraph{}$. 
	In order to do that, we first construct a set $\Lambda_\mu$ of curves for each cluster $\mu \in \calT{}$. Set $\Lambda_\mu$ contains a curve $\lambda_\mu(e)$ connecting $x_\mu(e)$ with $y_\mu(e)$, for each inter-cluster edge $e$ incident to a \me component of $\mu$.
	The curves in $\Lambda_\mu$ are drawn as simple curves in the interior of cycle $\blue{C^2_\mu}$ so that (i) they do not cross each other, (ii) they do not cross any of the curves representing edges $\red{(v,x_\mu(e'))}$ and edges $\red{(y_\mu(e'),y_\nu(e'))}$, for every vertex $v \in \mu$ and for every inter-cluster edge $e'$ incident to $\mu$, and (iii) they do not cross any of the edges of $\Gint{}$ between two vertices of the copy of a component $c_i$ belonging to $\mu$.
	This is always possible, since $\Theta(X_\mu)=\overline{\Phi(Y_\mu)}$, where we set $x_\mu(e)=y_\mu(e)$. We give a proof of this claim. First, the matching in $\Eb{}$ between the leaves of $X_\mu$ and those of $B_\mu$ ensures that $\Theta(X_\mu)=\overline{\Theta(B_\mu)}$. Analogously, the matching in \Er{} between the leaves of $B_\mu$ and those of $A_\mu$ ensures that $\overline{\Theta(B_\mu)}=\Theta(A_\mu)$. By repeating this argument while considering matchings in either \Eb{} or \Er{} between the leaves of $A_\mu$ and of $A'_\mu$, the leaves of $A'_\mu$ and of $B'_\mu$, and the leaves of $B'_\mu$ and the leaves of $Y_\mu$ corresponding to inter-cluster edges incident to a \me component of $\mu$, we have that 
	$\Theta(X_\mu)=\overline{\Theta(B_\mu)}=\Theta(A_\mu)=\overline{\Theta(A'_\mu)}=\Theta(B'_\mu)=\overline{\Phi(Y_\mu)}$.
	
	We now draw each inter-cluster edge $e=(u,v)$ in $\Gamma$, where $u \in \mu$ and $v \in \nu$.
	If $e$ is incident to a \me component of $\mu$ and to a \me component of $\nu$, it is drawn as a composition of five parts. The first and the last parts of $e$ coincide with the drawing of edge $\red{(u,x_\mu(e))} \in \Er{}$ of $G_\mu$ and of edge $\red{(v,x_\nu(e))} \in \Er{}$ of $G_\nu$ in $\GammaR{}$, respectively. The second and the fourth part coincide with curves $\lambda_\mu(e) \in \Lambda_\mu$ and $\lambda_\nu(e) \in \Lambda_\nu$, respectively. Finally, the middle part coincides with the drawing of edge $\red{(y_\mu(e),y_\nu(e))} \in \Er{}$ in $\GammaR{}$. 
	If $e$ is incident to a \se component of $\mu$ (of $\nu$), then the first and the second part (the fourth and the fifth part) are not drawn.
	
	Finally, for each \se component $c_i$ of $\mu$, let $e=(v,u)$ be the unique inter-cluster edge incident to $c_i$, with $v \in c_i$. We add to $\Gamma$ a planar drawing of $c_i$ in which $v$ is incident to the outer face, so that $v$ lies in the same position as $y_\mu(e)$ in \sefesolution{} and there exists no crossing involving an edge of $c_i$.
	
	We now prove that $\Gamma$ is a c-planar drawing. Recall that we constructed region $R(\mu)$ for each cluster $\mu$ so that its boundary $B(\mu)$ coincides with $\blue{C^2_\mu}$ in \GammaB{}. This implies that $R(\mu)$ contains all and only the vertices of $\mu$, since all the vertices of the copies of the components $c_i$ of $\mu$, which belong to $G_\mu$, lie inside $\blue{C^2_\mu}$ and since all the vertices of \sefeinstance{} not in $G_\mu$ lie in the exterior of any cycle of $\Gb{}$.
	
	Also, there exists no region-region crossings in $\Gamma$, since $\GammaB{}$ is a planar drawing of $\Gb{}$, and since $\blue{C^2_\mu}$ and $\blue{C^2_\nu}$ are vertex disjoint cycles in $\Gb{}$, for each $\mu, \nu \in \calT{}$.
	
	Further, there exists no edge-region crossing in $\Gamma$. In fact, the only intersection between $B(\mu)$, for each cluster $\mu \in \calT{}$, and an edge of $G$ is on the portion of $B(\mu)$ corresponding to edge $\blue{(\rho^2_\mu,\rho^7_\mu)}$, since the remaining portion of $B(\mu)$ corresponds to edges in $\Gint{}$, which are not crossed in \sefesolution{}.
	Also, edge $\blue{(\rho^2_\mu,\rho^7_\mu)}$ is only crossed (once) by edges in $\Gr{}$ between a leaf of $Y_\mu$ and a leaf of $Y_\nu$, for some $\nu \in \calT{}$. Hence, for each inter-cluster edge $e$ incident to $\mu$, only one of the five curves that have been used to draw $e$ crosses $B(\mu)$, namely the middle one, and hence every edge of $G$ crosses $B(\mu)$ at most once.
	
	Finally, there exists no edge-edge crossing in $\Gamma$.
	Namely, observe that each edge $e$ in $G$ is either represented by an edge in $\Gr{}$ (if $e$ is an intra-cluster edge) or by the composition of edges in $\Gr{}$ and curves in $\Lambda_\mu$ and $\Lambda_\nu$, where $\mu$ and $\nu$ are the clusters $e$ is incident to (if $e$ is an inter-cluster edge).
	Hence, the planarity of the drawing of $G$ in $\Gamma$ descends from the  planarity of $\GammaR{}$ and from the construction of the sets $\Lambda_\mu$ and $\Lambda_\nu$.
	
	We finally prove that $\Gamma$ is consistent with $\langle \calX{}, \calY{} \rangle$.
	Since the only edge of $\blue{C^2_\mu}$ that is crossed in \sefesolution{} is $\blue{(\rho^2_\mu,\rho^7_\mu)}$, the linear order $\calO{}_\mu$ in which the edges incident to $\mu$ cross $B(\mu)$ in $\Gamma$, starting from $e_\mu$, coincides with the linear order in which the edges in $\Gr{}$ cross $\blue{(\rho^2_\mu,\rho^7_\mu)}$ in \sefesolution{}, starting from $e_\mu$.
	By the planarity of $\GammaR{}$, this order coincides with the reverse of $\Theta(Y_\mu)$, when we set $e = y_\mu(e)$.
	Hence, for every internal node $\nu$ of $Y_\mu$, all the leaves of the subtree of $Y_\mu$ rooted at $\nu$ appear consecutively in $\calO{}_\mu$, and thus $\Gamma$ is consistent with $Y_\mu$.
	Analogously, $\Gamma$ is consistent with $X_\mu$, since $\Phi(Y_\mu)=\overline{\Theta(X_\mu)}$.
	Repeating this argument for all the clusters $\mu \in \calT{}$ proves the statement.
	
	$(\Longleftarrow)$ Suppose that $\cgraph{}$ admits a c-planar drawing $\Gamma$ that is consistent with $\langle \calX{}, \calY{} \rangle$.
	We show how to construct a SEFE \sefesolution{} of \sefeinstance{}. By Theorem~\ref{th:inutile}, we can describe \sefesolution{} by means of the embeddings $\EmbR{}$ and $\EmbB{}$ of $\Gr{}$ and of $\Gb{}$, respectively.
	
	In the following we assume that edge $e_{\mu_1}$ (and hence vertex $\rho^1_{\mu_1}$) is incident to the outer face of the drawing of $G$ in $\Gamma$. This is possible since $e_{\mu_1}$ is an inter-cluster edge.
	
	We construct $\EmbR{}$ and $\EmbB{}$ in such a way that cycle $C$ bounds a face, which we assume to be the outer face in both $\EmbR{}$ and $\EmbB{}$. 
	Clearly, this uniquely determines the rotation scheme of $\sigma_\mu$ and $\rho^1_\mu$ in $\EmbB{}$, for each cluster $\mu \in \calT{}$, and of $\sigma^*$ and $\rho^1_{\mu_1}$ in $\EmbR{}$.
	Further, this implies that wheel $W_\mu$, for each $\mu \in \calT{}$, must be embedded so that $c_\mu$ lies in the interior of its rim in both $\EmbR{}$ and $\EmbB{}$.
	We will embed all the other vertices in $V$ and edges in $\Er{}$ and $\Eb{}$ so that they lie in the exterior of the rim of each $W_\mu$ in both $\EmbR{}$ and $\EmbB{}$. This uniquely determines the rotation scheme of all vertices $\rho^2_\mu, \dots, \rho^7_\mu$ in $\EmbB{}$ and $\EmbR{}$.
	
	Let $\cal{O}_\mu$ be the clockwise linear order in which the inter-cluster edges incident to $\mu$ cross $B(\mu)$ in $\Gamma$, starting from $e_\mu$. We set the rotation scheme of the other vertices of $G_\mu$ so that:
	\begin{inparaenum}
		\item $\overline{\Theta(Y_\mu)}$ coincides with $\calO{}_\mu$ in both $\EmbR{}$ and in $\EmbB{}$,
		\item the clockwise order of the paths connecting the center of star $B'_\mu$ with the leaves of tree $Y_\mu$ in $\Gb{}$ not passing through $\beta'_\mu$ coincides with $\overline{\Theta(Y_\mu)}$ in $\EmbB{}$, when we identify each path with the leaf of $Y_\mu$ it is incident to,
		\item $\Theta(B'_\mu)$ coincides with $\overline{\Phi(Y_\mu)}$ in $\EmbB{}$,
		\item each of $\overline{\Theta(A'_\mu)}$,$\Theta(A_\mu)$,$\overline{\Theta(B_\mu)}$, and $\Theta(X_\mu)$ coincides with $\overline{\Phi(Y_\mu)}$ in both $\EmbR{}$ and in $\EmbB{}$,
		\item each vertex $v$ of the copy of a \me component $c_i$ of $\mu$ has the same rotation scheme in $\EmbR{}$ as the corresponding vertex in $\Gamma$, where we replace $e$ with $(v,x_\mu(e))$, if $e$ is an inter-cluster edge incident to $\mu$;
		\item each vertex $v$ of the copy of a \me component $c_i$ of $\mu$ has the same rotation scheme in $\EmbB{}$ as the corresponding vertex in $\Gamma$, where we remove all of the inter-cluster edges incident to $v$, except for one edge $e_v$, which we replace with $(v,z(c_i))$;
		\item for each vertex $z(c_i)$ of $C_\mu$, the order of the edges in the rotation scheme of $z(c_i)$ in $\EmbB{}$ is the same as the order in which these edges appear in a counter-clockwise walk around the boundary of $c_i$ in $\Gamma$, where we remove all of the inter-cluster edges incident to $v$, except for edge $e_v$;
		\item the center of $C_\mu$ has any rotation scheme in both $\EmbR{}$ and in $\EmbB{}$;
		\item edges $\red{(v,x_\mu(e_\mu))}$, $(\delta'_\mu,x_\mu(e_\mu))$, and $(\rho_\mu,x_\mu(e_\mu))$ appear in this order in the rotation scheme of $x_\mu(e_\mu)$ in $\EmbR{}$, where $v$ is the vertex of $\mu$ edge $e_\mu$ is incident to;
		\item edges $\red{(b_\mu(e_\mu),x_\mu(e_\mu))}$, $(\delta'_\mu,x_\mu(e_\mu))$, and $(\rho_\mu,x_\mu(e_\mu))$ appear in this order in the rotation scheme of  $x_\mu(e_\mu)$ in $\EmbB{}$;
		\item edges $(\xi_\mu,y_\mu(e_\mu))$, $(\delta_\mu,y_\mu(e_\mu))$, and $\red{(y_\nu(e_\mu),y_\mu(e_\mu))}$ appear in this order in the rotation scheme of $y_\mu(e_\mu)$ in $\EmbR{}$, where $\nu$ is the other cluster to which $e_\mu$ is incident;
		\item edges $(\xi_\mu,y_\mu(e_\mu))$, $(\delta_\mu,y_\mu(e_\mu))$, and $\red{(b'_\mu(e_\mu),y_\mu(e_\mu))}$ appear in this order in the rotation scheme of $y_\mu(e_\mu)$ in $\EmbB{}$.
	\end{inparaenum}
	Note that the rotation scheme of the remaining vertices of $G_\mu$ in $\EmbR{}$ and $\EmbB{}$ (namely the leaves of stars $C_\mu$, $B_\mu$, $A_\mu$, $B'_\mu$, and $A'_\mu$, and the leaves of trees $X_\mu$ and $Y_\mu$ different from $x_\mu(e_\mu)$ and from $y_\mu(e_\mu)$, respectively) is unique, since they have degree less or equal $2$ in $\Gr{}$ and $\Gb{}$.
	
	We prove that both $\EmbR{}$ and $\EmbB{}$ are planar. 
	First note that there exists a planar embedding of $X_\mu$ and of $Y_\mu$ so that $\overline{\Theta(Y_\mu)}=\calO{}_\mu$ and $\Theta(X_\mu)=\overline{\Phi(Y_\mu)}$, since $\Gamma$ is consistent with $X_\mu$ and with $Y_\mu$. 
	The embedding of the biconnected components of $\Gb{}$ induced by the vertices
	(i) of $X_\mu$ and of $B_\mu$,
	(ii) of $A_\mu$ and of $A'_\mu$, and
	(iii) of $B'_\mu$ and of $Y_\mu$
	are planar 
	since  $\Theta(X_\mu)=\overline{\Theta(B_\mu)}$,
	since  $\Theta(A_\mu)=\overline{\Theta(A'_\mu)}$, 
	and since 
	the clockwise order of the paths connecting the center of star $B'_\mu$ with the leaves of tree $Y_\mu$ in $\Gb{}$ not passing through $\beta'_\mu$ coincides with $\overline{\Theta(Y_\mu)}$ in $\EmbB{}$.
	
	Analogously, the embedding of the biconnected components of $\Gr{}$ 
	induced by the vertices
	(i) of $A_\mu$ and of $B_\mu$, and
	(ii) of $A'_\mu$ and of $B'_\mu$,
	are planar 
	since $\Theta(A_\mu)= \overline{\Theta(B_\mu)}$, 
	and since $\Theta(B'_\mu)=\overline{\Theta(A'_\mu)}$, respectively.
	
	Also, the embedding of the biconnected component of $\Gb{}$ composed of the copy of each \me component $c_i$ of $\mu$, of vertex $z(c_i)$, and of the edges between them is planar, by the construction of the rotation scheme of $z(c_i)$.
	Further, the embedding of the subgraph of $\Gr{}$ composed of the copies of all the \me components $c_i$ of $\mu$, of tree $X_\mu$, and of the edges between them is planar since $\Theta(X_\mu)$ coincides with $\calO{}_\mu$ restricted to the inter-cluster edges incident to the \me components of $\mu$.
	Finally, the embedding of the biconnected component of $\Gr{}$ composed of tree $Y_\mu$, of tree $Y_\nu$, and of the edges between their leaves, for each two adjacent clusters $\mu$ and $\nu$ in $G_A$, is planar since $\overline{\Theta(Y_\mu)}=\Theta(Y_\nu)$, restricted to the inter-cluster edges incident to both $\mu$ and $\nu$.
	This is due to the fact that $\calO{}_\mu=\overline{\Theta(Y_\mu)}$, that $\calO{}_\nu=\overline{\Theta(Y_\nu)}$, and that $\calO{}_\mu$ coincides with the reverse of $\calO{}_\nu$, when both orders are restricted to the edges incident to both $\mu$ and $\nu$, by the c-planarity of $\Gamma$.
	Note that, since $\Gamma$ has edge $e_{\mu_1}$ on the outer face, vertex $\rho^1_{\mu_1}$ is not enclosed by any cycle of $\Gr{}$, except for $C$. Hence, vertices $\rho^1_{\mu_1}$ and $\sigma^*$ are incident to the same face of $\EmbR{}$.
	
	The planarity of $\EmbR{}$ and of $\EmbB{}$, restricted to the edges of $G_\mu$ in $\Er{}$ and in $\Eb{}$, respectively, is implied by the planarity of the embedding of each of the above considered components of $\Gr{}$ and $\Gb{}$, and by the order in which $\delta_\mu$, $\alpha_\mu$, $\beta_\mu$, $\gamma_\mu$, $\delta'_\mu$, $\alpha'_\mu$, and $\beta'_\mu$ appear along the rim of $W_\mu$.
	
	Further, since each $G_\mu$ is only connected to the frame cycle $C$ via edge $\blue(\rho^1_\mu,\sigma_\mu)$, the planarity of $\EmbB{}$ restricted to the edges of each $G_\mu$ in $\Eb{}$ implies the planarity of the whole $\EmbB{}$.
	To complete the proof of the planarity of $\EmbR{}$, it only remains to consider the embedding of the subgraph of $\Gr{}$ induced by the vertices of all trees $Y_\mu$, with $\mu \in \calT{}$. The planarity of this subgraph descends from the planarity of the embedding of the subgraph of $\Gr{}$ induced by the vertices of each two trees $Y_\mu$ and $Y_\nu$ such that $\mu$ and $\nu$ are adjacent in $G_A$, from the fact that $\Gamma$ is consistent with $Y_\mu$, for each $\mu \in \calT{}$, and from the c-planarity of $\Gamma$.
	This completes the proof that \sefesolution{} is a SEFE of \sefeinstance{}.
	
	We conclude the lemma by proving that \sefeinstance{} can be transformed into an equivalent instance in which $G_\cap=(V,\Er{} \cap \Eb{})$, $\Gr{}$, and $\Gb{}$ satisfy the required properties.
	
	We note that $\Gr{}$ is connected, since $G_A$ is connected. Also, $\Gr{}$ contains cut-vertices $\sigma^*$ and $\rho^1_{\mu_1}$. Further, for each cluster $\mu \in \calT{}$, $\Gr{}$ contains cut-vertices $\delta_\mu$, $y_\mu(e_\mu)$, $\delta'_\mu$, $x_\mu(e_\mu)$, $\gamma_\mu$, the center of star $C_\mu$, the internal vertices of $X_\mu$, and possibly the internal vertices of $Y_\mu$. We now show that all these cut-vertices are incident to at most two non-trivial blocks of $\Gr{}$.
	Namely, vertices $\sigma^*$, $\rho^1_{\mu_1}$, and vertices $\delta_\mu$, $\delta'_\mu$, $\gamma_\mu$, $x_\mu(e_\mu)$, and $y_\mu(e_\mu)$, for each cluster $\mu \in \calT{}$,  are incident to exactly two blocks in $\Gr{}$.
	The center of star $C_\mu$ is incident only to non-trivial blocks.
	Each internal vertex $c_i$ of $X_\mu$ is incident to at most one non-trivial block, that is, the block composed of vertex $c_i$, of the leaves of $X_\mu$ incident to $c_i$, and of the vertices of the copy of the \me component $c_i$ in $G_\mu$.
	Each internal vertex $\nu$ of $Y_\mu$ is incident to at most one non-trivial block, that is, the one composed of $\nu$, of the leaves of $Y_\mu$ incident to $\nu$, of the vertex $\mu$ in $Y_\nu$, and of the leaves of $Y_\nu$ incident to $\mu$.
	
	We note that $\Gb{}$ is connected, by construction. Also, for each cluster $\mu \in \calT{}$, graph $\Gb{}$ contains cut-vertices $\gamma_\mu$, the center of star $C_\mu$, and vertices $z(c_i)$, for each component $c_i$ of $\mu$.
	We now show that all these cut-vertices are incident to at most two non-trivial blocks of $\Gb{}$. Namely, vertex $\gamma_\mu$ and vertices $z(c_i)$, for each \me component $c_i$ of $\mu$, are incident to exactly two blocks, while the center of $C_\mu$ is incident only to non-trivial blocks.
	
	Finally, no vertex of a copy of a \me component $c_i$ of $\mu$ is a cut-vertex in either $\Gr{}$ or $\Gb{}$. This is due to the fact that, by assumption, every block of $c_i$ that is a leaf in the BC-tree of $c_i$ has at least an inter-cluster edge incident to one of its vertices that is not a cut-vertex of $c_i$. Hence, all the vertices of the copy of $c_i$, together with the vertex $c_i \in X_\mu$ and with the leaves of $X_\mu$ incident to it, belong to the same block of $\Gr{}$. Also, all the vertices of the copy of $c_i$, together with vertex $z(c_i)$, belong to the same block of $\Gb{}$.
	
	By Claim~\ref{cl:cycle-removal}, the cycles of $\Gint{}$ can be removed so that $\Gint{}$ becomes a forest, without altering the properties of \sefeinstance{}. Observe that the only cycles contained in $\Gint{}$ are the frame cycle $C$, the cycles in $W_\mu$, for each $\mu \in \calT{}$, and (possibly) the cycles in the copy of some \me component $c_i$ of $G_\nu$, for some cluster $\nu \in \calT{}$. 
	This concludes the proof of the lemma.
\end{proof}

We conclude by exploiting Lemma~\ref{le:inc-constr-cp-2-SEFE} to prove the main result of the section.

\rephrase{Theorem}{\ref{th:algo-inc-constr-cp}}
{An instance $\langle \calC{}(G(V,E),\calT{}), \calX{}, \calY{} \rangle$ of \iccp can be tested in $O(|V|^2)$ time.
}
\begin{proof}
	Since for each inter-cluster edge $e$ there exist at most two \ctrees in $\calX{}$ and exactly two \nctrees in $\calY{}$ with a leaf corresponding to $e$, we have $|\calX{}|,|\calY{}| \in O(|E|)$.
	Also, since $G$ is planar, $|E| \in O(|V|)$, and $|G| \in O(|V|)$. Hence, $s = |G|+|\calX{}| + |\calY{}| \in O(|V|)$.
	
	Apply Lemma~\ref{le:inc-constr-cp-2-SEFE} to $\langle \calC{}(G(V,E),\calT{}), \calX{}, \calY{} \rangle$ to construct in $O(s)$ time an equivalent instance of SEFE 
	satisfying the conditions of Theorem~\ref{th:sefe}, which can be tested in $O(s^2) \in O(|V|^2)$ time.  
\end{proof}


\section{Conclusions and Open Problems}

In this paper we studied the problem of constructing c-planar drawings with pipes of flat c-graphs. We presented algorithms to test the existence of such drawings when the number of certain components is small, in different scenarios, namely when the \ga is a path ({\sc Strip Planarity}), when it has a fixed embedding ({\sc C-Planarity with Embedded Pipes}), and when it has no restrictions ({\sc C-Planarity with Pipes}).

Several questions are left open. We find particularly interesting to determine whether there exist combinatorial properties of the nesting of the components that would allow us to reduce the number of possible \ctrees, analogous to the ones we could prove for the \enctrees. We remark that the introduction of the \ctrees already allowed us to make the running time of our algorithms, and in particular of the FPT algorithm, independent of the size of each component.

Another natural question concerns the possibility of extending our results to problem \cp. An important goal would be to determine the complexity of this problem for flat c-graphs in the case in which each cluster contains at most two components. Efficient algorithms for this case exist only when the underlying graph has a fixed embedding~\cite{jjkl-cpecgtcc-08}, when also each co-cluster has at most two components~\cite{br-npcpcep-14}, or when the cut-vertices of the \ga have at most two non-trivial blocks~\cite{br-npcpcep-14}. 

We would like to point out that this latter result is obtained by considering a graph that is in fact the one we defined as \ga. Namely, the authors of~\cite{br-npcpcep-14} introduced a data structure, called CD-tree, which is a star when the considered c-graph \cgraph{} is flat; in this case, the skeleton associated to the central node of this star turns out to coincide with the \ga $G_A$ of \cgraph{}.
In this paper~\cite{br-npcpcep-14}, problem \cp for flat c-graphs is described in terms of a specific constrained-planarity problem for $G_A$, namely the problem of computing a planar embedding of this graph satisfying a set of partitioned PQ-constraints. The mentioned result for flat c-graphs is then obtained by showing that the given restrictions for the original c-graph allow to generate instances of this constrained-planarity problem that can be solved by means of the Simultaneous PQ-ordering framework~\cite{br-spacep-13}. The authors also extended their result to give an FPT algorithm for the same problem in two parameters that depend on the total number of clusters and on the number of edges leaving a cluster. We remark that analogous results (with slightly different parameters for the FPT algorithm) could be obtained using the techniques of our paper; a key property for this is the fact that, when $G_A$ is biconnected, the \nctree of each cluster can be proved to be unique. We thus ask whether deeper considerations on the possible nesting configurations of the clusters could be used to further reduce the number of \nctrees to be considered even when the cut-vertices of $G_A$ have a larger number of non-trivial blocks.

\remove{
}

\bibliographystyle{plainurl}
\bibliography{paper132}

\end{document}